\newcommand{\matindex}[1]{\mbox{\scriptsize#1}}
\renewcommand{\algorithmicrequire}{\wordbox[l]{\textbf{Input}:}{\textbf{Output}:}}     		 \renewcommand{\algorithmicensure}{\wordbox[l]{\textbf{Output}:}{\textbf{Output}:}}
	\newcommand{\capa}{\mathrm{cap}}
	\newcommand{\ver}[0]{\ensuremath{\mathit{VER}}\xspace}
	\renewcommand{\ps}[0]{\ensuremath{\mathit{PS}}\xspace}
	\renewcommand{\dl}[0]{\ensuremath{\mathit{LD}}}
	\renewcommand{\sd}[0]{\ensuremath{\mathit{SD}}}
	\renewcommand{\pref}{\succsim\xspace}
	\newcommand{\apref}[1][]{%
	\ifthenelse{\equal{#1}{}}{{\pref}_N}{{\pref^{#1}}_N}%
	}
	\newcommand{\aprefs}[1][]{%
	\ifthenelse{\equal{#1}{}}{{\succ}_N}{{\succ^{#1}}_N}%
	}
	\newcommand{\opref}[1][]{%
	\ifthenelse{\equal{#1}{}}{{\pref}_O}{{\pref^{#1}}_O}%
	}
	\newcommand{\oprefs}[1][]{%
	\ifthenelse{\equal{#1}{}}{{\succ}_O}{{\succ^{#1}}_O}%
	}
	\newcommand{\p}{(\pref_N,\pref_O)}
	\newcommand{\pss}[1][]{%
	\ifthenelse{\equal{#1}{}}{({\succ}_N,{\succ}_O)}{({\succ^{#1}}_N,{\succ^{#1}}_O)}%
	}
\def\@fnsymbol#1{\ensuremath{\ifcase#1\or \dagger\or \ddagger\or
   \mathsection\or \mathparagraph\or \|\or **\or \dagger\dagger
   \or \ddagger\ddagger \else\@ctrerr\fi}}
\begin{document}

\title{%
The Vigilant Eating Rule:\\ A General Approach for Probabilistic Economic Design with Constraints
}

\author{Haris Aziz\thanks{UNSW Sydney and Data61 CSIRO, Australia, \texttt{haris.aziz@unsw.edu.au}} \and Florian Brandl\thanks{University of Bonn, Germany, \texttt{florian.brandl@uni-bonn.de}}}

	\maketitle

\begin{abstract}
	We consider the problem of probabilistic allocation of objects under ordinal preferences.  We devise an allocation mechanism, called the vigilant eating rule (VER), that applies to nearly arbitrary feasibility constraints. It is constrained ordinally efficient, can be computed efficiently for a large class of constraints, and treats agents equally if they have the same preferences and are subject to the same constraints. When the set of feasible allocations is convex, we also present a characterization of our rule based on ordinal egalitarianism. Our results about VER do not just apply to allocation problems but to all collective choice problems in which agents have ordinal preferences over discrete outcomes. As a case study, we assume objects have priorities for agents and apply VER to sets of probabilistic allocations that are constrained by stability. VER coincides with the (extended) probabilistic serial rule when priorities are flat and the agent proposing deterministic deferred acceptance algorithm when preferences and priorities are strict. While VER always returns a stable and constrained efficient allocation, it fails to be strategyproof, unconstrained efficient, and envy-free. We show, however, that each of these three properties is incompatible with stability and constrained efficiency.
\end{abstract}

\section{Introduction}

The theory and application of allocation and matching mechanisms have proved to be one of the major success stories of algorithmic economics. An ongoing challenge is designing mechanisms that can handle complex constraints arising in new applications. In this paper, we present a versatile and robust allocation rule that achieves fair and efficient outcomes for a plethora of economic design problems including matching market design. 
It simultaneously generalizes celebrated rules in the literature including the probabilistic serial rule and the deferred acceptance algorithm.

We study the problem of allocating a set of indivisible objects to a group of agents based on the agents' preferences over objects.
The objects could be seats at schools, dormitory rooms, job placements, or kidney transplants for example.
In these applications, it is important, perhaps even mandatory, that allocations are fair in the sense that no agent (justifiably) envies some other agent. 
Since objects are indivisible, envy-free allocations may however not exist.
One possible remedy for this dilemma is to consider randomizations over deterministic allocations, which can restore fairness ex-ante.
This approach is deeply rooted in the literature on resource allocation and has gained popularity in recent years.
The seminal works of \citet{HyZe79a} and \citet{BoMo01a} provide allocation mechanisms for the case when each agent demands exactly one object and preferences are given by linear utility functions or ordinal preferences over objects, respectively.
We shall be concerned with ordinal preferences under two generalizations: first, agents may receive more than one object, and, second, there may be constraints on which random allocations are feasible.

\citeauthor{BoMo01a}'s \emph{probabilistic serial rule} can be pictured as follows.
Time runs continuously from 0 to 1. 
At time 0, each agent starts off by eating her most-preferred object.
An object becomes unavailable once the cumulative time agents spent eating it equals 1. 
Whenever an object becomes unavailable, the agents who have been eating it switch to their next most-preferred object among those which are still available.
The probability with which an agent receives an object in the final random allocation equals the time she spent eating that object.
The Birkhoff-von Neumann Theorem ensures that these probabilities can be attained by randomizing over deterministic allocations.
The probabilistic serial rule enjoys several appealing properties: it is weakly strategyproof, can be computed efficiently, and always yields an allocation that is efficient and envy-free (when the agents' preferences over probabilistic allocations are based on stochastic dominance).

A tacit assumption in the formulation of the probabilistic serial rule is that all allocations are feasible.
But several applications of object allocation require allocation rules to respect various feasibility constraints.
For example, allocating students to courses may be subject to curricular constraints that require students to take a minimal number of courses or courses in different subjects.
Likewise, when allocating donor's kidneys to patients, logistics and blood type compatibility impose constraints on which exchanges are feasible.
In these examples, one has constraints on deterministic allocations and stipulates that a probabilistic allocation is feasible if it can be written as a convex combination of deterministic allocations that meet the constraints. 
We will allow for a more general class of constraints where the primitive is a set of feasible \emph{probabilistic} allocations.
This of course includes the previous example.\footnote{In this case, the set of feasible probabilistic allocations is the face of the simplex of probabilistic allocations spanned by deterministic allocations satisfying the constraints.}
More generally, it can capture ex-ante constraints on the probabilistic allocation and possibly different constraints on the ex-post allocation.
For example, a system designer may impose ex-ante stability constraints in addition to ex-post constraints.
Then a probabilistic allocation is feasible if it satisfies the ex-ante constraints and can be decomposed into deterministic allocations that meet the ex-post constraints \citep[see, for example,][]{ASS19a,AkNi20a}.

Our main contribution is a generalization of the probabilistic serial rule, called the \emph{vigilant eating rule (VER)}, which can handle multi-object allocation under nearly arbitrary constraints. 
Formally, it requires that the set of feasible probabilistic allocations is closed. 
Previous generalizations of the probabilistic serial rule required that feasible probabilistic allocations are given by constraints that form a bi-hierarchical structure \citep{BCKM12a}, constitute lower and upper quotas \citep{ASS19a,ASS20a}, or exclude a fixed set of deterministic allocations \citep{AzSt14a}.
In each of these three domains, VER coincides with the proposed generalizations of the probabilistic serial rule.
Despite being generally applicable, VER retains many of the properties of the probabilistic serial rule: it always yields an allocation that is efficient among feasible allocations, gives the ex-ante same allotment to agents who have the same preferences and are subject to the same constraints, and can be computed efficiently whenever the set of feasible allocations is a union of polytopes (described by polynomially many linear constraints).
Clearly, no rule can be unconstrained efficient or envy-free without restrictions on the set of feasible allocations.

\medskip

A class of problems that falls under the umbrella of object allocation is two-sided allocation, where, in addition to preferences, agents have different priorities for objects.
The literature on two-sided allocation is well-developed \citep[see, for example,][]{RoSo90a}. 
Almost universally, allocations are required to be stable: if an agent prefers an object to the one she has been allocated, the former object is given to an agent who has a higher priority for it.
We will view stable two-sided allocation as a special case of allocation under constraints, where only stable allocations are feasible. 
Unlike when constraints come from quotas, stability constraints depend on the preferences of the agents and are thus harder to handle. 

In the classical formulation of two-sided allocation as marriage markets, preferences and priorities are strict and one considers only deterministic mechanisms. 
The most prominent representative is the (agent proposing) deferred acceptance rule of \citet{GaSh62a}.
It enjoys several appealing properties such as strategyproofness and Pareto efficiency (when interpreting priorities as preferences).
Yet, applications like school choice where priorities are typically coarse, that is, contain large indifference classes, motivate the search for rules that can deal with this more general problem domain.
There are various reasons for considering probabilistic rules when priorities are weak. 
Perhaps the most evident one is that it allows for fairness ex-ante, which is not always attainable when deterministically allocating objects.
Take the example of two agents preferring object $a$ over object $b$. 
If both agents have the same priority for $a$ and for $b$, allocating $a$ to either agent with a probability of $50\%$ is fair ex-ante, while no deterministic allocation is fair.

We propose VER as a promising probabilistic mechanism for stable two-sided allocation when preferences and priorities are weak. 
Stability can be generalized to probabilistic allocations in various ways. 
We consider four different notions of stability that have been considered in the literature and all reduce to pairwise stability for deterministic allocations.
For each stability notion, we can apply VER to the set of stable allocations and thereby obtain a mechanism that always yields stable and constrained efficient allocations.
Each of these mechanisms provides a natural transition between the deferred acceptance algorithm and the probabilistic serial rule.
It coincides with the former if priorities are strict and with the latter if priorities are completely flat (and preferences are strict in both cases).

\paragraph{Contributions}

We explore the space of desirable rules and algorithms for probabilistic allocation. 
In contrast to previous work on probabilistic rules for stable allocation, we allow the agents to have weak preferences.  
Our main contribution is the \emph{Vigilant Eating Rule (VER)}, which applies to any class of distributional constraints, including ones given by stability notions. 
For example, it applies to \emph{non}-bi-hierarchical constraints, which are not handled by the generalization of the probabilistic serial rule due to \citet{BCKM12a}. 
It also applies to non-convex constraints arising from integrality requirements or some stability notions. 
Several stability concepts are not necessarily convex. Examples include ex-ante stability~\citep{KeUn15a}, respect of priorities~\citep{Dela20a}, and a fractional stability concept considered by \citet{CFKV19a}.
We show that the outcome of VER is efficient within the set of feasible allocations with respect to stochastic dominance (that is, \sd-efficient). 
If the feasible set is convex, the outcome is ordinally egalitarian in the sense defined by \citet{Bogo15b} \citep[see also][]{Bogo18a}. 
Our results about VER are not restricted to allocation problems.
They apply to collective decision making in general whenever agents have ordinal preferences over deterministic outcomes. 

For object allocation under priorities, we study VER in more detail and argue that it leads to compelling allocation rules. 
VER returns a stable allocation that is constrained \sd-efficient within the set of stable allocations. 
To the best of our knowledge, our algorithms are the first rules for constrained \sd-efficient probabilistic allocation under priorities when the agents' preferences are weak.\footnote{Some problems invariably involve ties in the agents' preferences. For example, when allocating scarce medical resources such as ventilators where different categories of resources have priorities over agents, agents are indifferent between all ventilator units~\citep{PSU+20a}.}
For several properties that VER does not satisfy, we prove that they are incompatible with stability, constrained \sd-efficiency, or their conjunction. 
In particular, we prove that even for the weakest stability notion we consider (claimwise stability), the set of stable probabilistic allocations can be disjoint from the set of \sd-efficient allocations and the set of weakly \sd-envy-free allocations.  
For each of our stability notions, no rule can be weakly \sd-strategyproof and always yield a stable and constrained (within the set of stable allocations) \sd-efficient allocation. 

More broadly, we present a generalization of the probabilistic serial rule that can handle a more general class of constraints than previous generalizations. 
In restricted domains, VER coincides with well-established rules (see Table~\ref{table:summaryVER}). 
We also show that VER can also be useful when no randomization is allowed.

As we endeavor to apply the theory of fair allocation and market design to facilitate applications, it is critical to develop robust mechanisms that can flexibly handle domain-specific constraints. 
Our approach is especially well-suited for settings with ordinal preferences. We hope that the framework, methodology, and rule we present can be useful for new application domains.

\begin{table*}[tb]
\label{tab:compare} 
\centering
\makebox[\textwidth][c]{
\begin{adjustbox}{max width=1.1\textwidth}
\begin{tabular}{llll}
\toprule
Constraints& Preferences & Priorities &Resulting Rule \\
\midrule
--& strict & -- & probabilistic serial (PS) \citep{BoMo01a}\\
--& weak & -- & extended PS \citep{KaSe06a}\\
\sd-individual rationality\footnotemark &strict & -- & controlled consuming \citep{AtSe11a}\\
upper quotas\footnotemark &strict & -- & generalized PS \citep{BCKM12a}\\
lower \& upper quotas & strict & -- & generalized PS \citep{ASS20a} \\
feasible ex-post allocations & weak & -- & egalitarian simultaneous reservation \citep{AzSt14a}\\
feasible ex-post allocations & strict & -- &generalized constrained PS \citep{Balb19b} \\
ex-post stability & strict  & strict & agent proposing DA \citep{GaSh62a}\\
claimwise stability & strict & weak & constrained PS \citep{Afac18a}\\
--&dichotomous& dichotomous & egalitarian rule~\citep{BoMo04a}\\
integral allocations; unit demand&strict& -- & serial dictatorship\\
--&strict over bundles& -- &PS for bundles~\citep{ChLi20a}\\
  \bottomrule
\end{tabular}
\end{adjustbox}
}
\caption{Overview of generalizations of the probabilistic serial rule to particular constraints and preference and priority structures.
VER coincides with each of these rules in the corresponding domain.}
\label{table:summaryVER}
\end{table*}

\addtocounter{footnote}{-2}
\stepcounter{footnote}\footnotetext{The \sd-individual rationality constraints require that each agent's allocation stochastically dominates her initial endowment. The endowments are part of the specification of an allocation instance in addition to the agents' preferences.}
\stepcounter{footnote}\footnotetext{\citet{BCKM12a} assume that the constraints can be written as the union of two hierarchies of constraints. A collection of constraints forms a hierarchy if for any two constraints, the set of agent-object pairs they apply to are either disjoint or one is contained in the other.}

\section{Related Work}

We explore two-sided allocation while relaxing the assumption that allocations have to be integral or deterministic. 
In our paper, an allocation specifies, for each pair of agent and object, the probability that the agent receives that object.
This probabilistic approach is important on several accounts.
We refer the reader to \citet{Aziz18a} for an in-depth discussion.
In short, randomization allows one to achieve minimal fairness requirements such as equal treatment of equals that are unobtainable through deterministic allocations.
Secondly, probabilistic allocations can also be interpreted as time-sharing or fractional allocations.

\paragraph{Probabilistic matching market design}
Although most of the work on two-sided allocation has focussed on deterministic allocations, some authors have pointed out connections between integral stable allocations and their linear relaxations (see, for example, \citet{RRV93a, TeSe98a}). 
\citet{ErEr08a,ErEr17a} and \citet{Kest10a} undertook a study of school choice when schools have weak priorities. 
They highlighted that tie-breaking can lead to a loss of efficiency. 
However, they focus on achieving constrained efficiency of deterministic or integral allocations. 
Similarly, \citet{ANR19a} consider the impact of various types of tie-breaking on the efficiency of the allocations. \citet{KaKo15a,KaKo17a,KaKo17b} and \citet{KTY18a} consider deterministic two-sided matching under general distributional constraints.  
These constraints capture several real-life scenarios, such as restrictions on the number of doctors in a particular region.
As noted before, deterministic outcomes cannot guarantee ex-ante equal treatment of equals or capture time-sharing. 
Deterministic allocation also has several other aspects that are different from its probabilistic counterpart. 
For example, the problem of computing deterministic allocations that satisfy a set of distributional constraints is NP-hard for many classes of constraints \citep[see, for example,][]{BFIM10a}. 
Considering a probabilistic allocation and then suitably rounding it can be an indirect, but computationally more tractable approach to arrive at a deterministic allocation (see, for example, \citet{AkNi20a}).

\citet{KeUn15a} initiated a serious study of stability notions and mechanisms for probabilistic allocation under weak priorities. 
They focussed on a strong version of ex-ante stability and proposed two mechanisms. 
First, they present the \emph{fractional deferred acceptance} algorithm, which returns a strongly ex-ante stable allocation. 
They then modify it to derive the \emph{fractional deferred acceptance and trading} algorithm that returns a strongly ex-ante stable allocation which is also Pareto efficient with respect to stochastic dominance among stable allocations. 
\citet{Han17a} presented a similar algorithm that returns an ex-ante stable allocation.

\citet{Afac18a} considered a more general model in which objects have probabilities for prioritizing one agent over another. 
He proposed a weak stability notion called \emph{claimwise stability} and showed that no weakly strategyproof mechanism always returns an allocation that is claimwise stable and efficient among claimwise stable allocations.
Both strategyproofness and efficiency assume that preferences are based on stochastic dominance. 
\citeauthor{Afac18a} presented his \emph{constrained probabilistic serial} algorithm, which returns an allocation that is constrained efficient among  claimwise stable allocations and can be computed in polynomial time. 

All of the algorithms that return constrained efficient and stable probabilistic allocations discussed above assume that agents have strict preferences \citep{Afac18a,Han17a,KeUn15a}.
We will not make this assumption and our main results are valid for weak preferences. Furthermore, our approach is general enough to allow for any type of feasibility constraints. 

When considering probabilistic stable allocations, there are various notions of stability that coincide with the classical pairwise stability notion for deterministic allocations. \citet{AzKl19a} investigate the taxonomy of stability notions for probabilistic allocations and map out the logical relationships between them.

\citet{CFKV19a} considered fractional stable allocations under cardinal utilities. 
A fractional allocation is stable if no pair gets higher utility in being integrally matched to each other than under the fractional allocation. 
They explore computational problems of finding stable fractional allocations with high welfare. 
\citet{HMPY18a} also considered cardinal utilities and presented a pseudo-market mechanism that simultaneously generalizes the deferred acceptance algorithm and the mechanism of \citet{HyZe79a}. 
It returns an allocation that is ex-ante stable. 
\citet{EMZ19b} have extended the pseudo-market rule to handle constraints on deterministic allocations.

\paragraph{Extensions of probabilistic serial without priorities}

There are several papers on probabilistic allocation when the objects have no priorities. 
Our model allows for weak priorities and, thus, has no or flat priorities as a special case.
One of the seminal works on random allocation under ordinal preferences without priorities is by \citet{BoMo01a}, who compared the random priority rule with the probabilistic serial rule. 
They showed that the probabilistic serial rule is \sd-efficient and \sd-envy-free, whereas the random priority rule satisfies neither property. 

The attractive properties of the probabilistic serial rule have led to a whole line of work on extensions to more general settings. 
Those include weak preferences~\citep{KaSe06a}, multi-unit demands~\citep{Koji09a}, endowments~\citep{AtSe11a,Yilm10a,YuZh20a}, bundled allocations~\citep{ChLi20a} and the probabilistic voting setting~\citep{AzSt14a}. 
VER coincides with each of these rules on their domain. 
It captures and generalizes the key insights that underlie the probabilistic serial rule. 
Moreover, our rule provides a conceptually easier formulation of the previously presented algorithms for specific domains~\citep{KaSe06a,Yilm10a,AtSe11a}.

\citet{BCKM12a} consider probabilistic allocation without priorities.
They presented a generalization of the probabilistic serial rule which allows for distributional constraints that are given by upper quotas and form a bi-hierarchy. 
\citet{FSZ18a} extend the probabilistic serial rule to submodular constraints.
In their algorithms, agents continue eating a most preferred object until some upper quota on a set of agent-object pairs is met. 
This approach works if there are upper quotas on agent-object groups that form a bi-hierarchical constraint structure; 
it does however not extend to lower quotas on agent-object groups or constraint structures that do not form a bi-hierarchy.   
 \citet{ASS19a,ASS20a} consider probabilistic allocation with lower and upper bounds on the types of students for each school without priorities and give a suitable generalization of the probabilistic serial rule. 
 The class of constraints to which VER applies includes those considered by \citet{BCKM12a}, \citet{FSZ18a}, and \citet{ASS19a,ASS20a}. 
In particular, VER can efficiently handle any kind of linear constraints. 
Our methodology also has some important differences with that of \citet{ASS19a,ASS20a}. 
To deal with non-convex feasible sets, our rule is based on a priority order over agents. 
An agent can only eat more of an object if this does not prevent the agents preceding her in the order from eating more. 
While agents earlier in the order have priority, no requirement is put on how much they can eat except that it is a non-zero amount. 
These considerations are necessary to apply the eating approach to general classes of feasibility constraints.
When the set of feasible allocations is convex, the outcome is independent of the priority order over agents.

Our model allows for constraints both on the set of feasible ex-post allocations as well as constraints on the ex-ante allocation. Note that if we are only concerned about constraints on the set of ex-post allocations, we can explicitly list the feasible ex-post allocations as the set of alternatives, and run the egalitarian simultaneous reservation rule of \citet{AzSt14a}, which is constrained \sd-efficient and has been referred to as the appropriate extension of the probabilistic serial rule to the social choice domain~\citep{Bogo18a}.\footnote{More recently, \citet{Balb19b} proposed an adaptation of the probabilistic serial rule for strict preferences, called the \emph{generalized constrained probabilistic serial rule}, where the set of feasible ex-post allocations are enumerated. 
Algorithmically, the results of \citet{Balb19b} can be achieved by enumerating the feasible ex-post allocations and running the egalitarian simultaneous reservation algorithm of \citet{AzSt14a}, which seamlessly handles weak preferences.} However, in allocation problems, there are several advantages of VER. 
Firstly, the constraints themselves may be ex-ante. 
For example, three of the four stability notions we consider are defined by constraints on the ex-ante allocation matrix and not by enumerating the set of feasible ex-post allocations. 
Another example is \emph{time-sharing} where the focus is on the fractional allocations and not on lotteries over deterministic allocations.
Secondly, even if certain ex-ante constraints can be captured by enumerating ex-post allocations, the approach may be computationally prohibitive as one may need to enumerate an exponential number of allocations.

\section{The Model}

Let $N$ be a set of $n$ agents and $O$ be a set of $m$ objects.
Each agent $i\in N$ has a weak preference relation (complete, reflexive, and transitive) $\pref_i$ over objects.
The symmetric and asymmetric parts of a relation $\pref$ are denoted by $\sim$ and $\succ$ as usual.
A preference relation is strict if it is a linear order; otherwise it is weak.
By $\mathcal E_i = O/{\sim_i}$ we denote the collection of equivalence classes of objects induced by $\pref_i$.
Abusing notation, we also write $\pref_i$ for the relation on $\mathcal E_i$ induced by $\pref_i$.
A preference profile $\apref = ({\pref_i})_{i\in N}$ consists of preferences for each agent.

A probabilistic allocation of objects to agents is a matrix $p\in \mathbb R_+^{N\times O}$ such that all columns sum to at most 1.
We will write $p(i)$ for the allotment of agent $i$, that is, the row of $p$ corresponding to agent $i$.
If $O'\subset O$, $p(i,O') = \sum_{o\in O'} p(i,o)$ is the probability with which $i$ receives an object in $O'$.
We say that $p$ is complete if all columns sum to 1 meaning that all objects are fully allocated; otherwise it is partial.
If $p\in \{0,1\}^{N\times O}$, it is a deterministic allocation.
Every probabilistic allocation can be written as a lottery over deterministic allocations.\footnote{In fact, one can use the Birkhoff-von Neumann Theorem to show the following: if $p$ is a probabilistic allocation with $p(O) = \alpha_i$ for all $i$, then $p$ can be written as a lottery over deterministic allocations in which agent $i$ receives either $\lfloor\alpha_i\rfloor$ or $\lceil\alpha_i\rceil$ objects.}
Whenever left unspecified, allocations are assumed to be complete and not necessarily deterministic.
An allocation $p$ satisfies equal treatment of equals for a profile $\apref$ if ${\pref_i} = {\pref_j}$ implies $p(i,E) = p(j,E)$ for all $E\in\mathcal E_i = \mathcal E_j$.

We introduce two extensions of preferences over objects to preferences over probabilistic allocations. 
Let $\pref$ be a preference relation over $O$ and $x,y\in\mathbb R^O_+$.
We say that $x$ (first-order) stochastically dominates $y$ with respect to $\pref$, written $x\pref^{\sd} y$, if
\begin{align*}
	\sum_{o'\pref o} {(}x(o') - y(o'){)}\ge 0 \text{ for all } o \in O\tag{stochastic dominance}
\end{align*}
If at least one inequality is strict, we write $x\succ^{\sd} y$.
Secondly, $x$ lexicographically dominates $y$ with respect to $\pref$, written $x\succ^{\dl} y$, if
\begin{align*}
	\text{there is $E\in \mathcal E$ with } x(E) > y(E)\text{ and } x(E') = y(E')\text{ for all } E'\in\mathcal E \text{ with } E'\succ E
	\tag{lexicographic dominance}
\end{align*}
If $x(E) = y(E)$ for all $E\in\mathcal E$, we have $x \sim^\dl y$ and $x \sim^{\sd} y$ and simply write $x \sim y$.
Notice that lexicographic dominance refines stochastic dominance.

We assume that each agent's preferences over allocations only depend on her own allotment.
That is, for allocations $p$ and $q$, $p \pref_i^\sd q$ if and only if $p(i) \pref_i^\sd q(i)$.
Moreover, $p$ stochastically dominates $q$ if $p(i)\pref_i^{\sd} q(i)$ for all $i\in N$ and at least one agent has a strict preference; $p$ is \sd-efficient if it is not stochastically dominated by any allocation.
When $X\subset\mathbb R_+^{N\times O}$ is a set of allocations, we say that $p$ is constrained \sd-efficient for $X$ if it is not stochastically dominated by any allocation in $X$.
The analogous definitions apply when preferences are extended based on lexicographic dominance.
Since preferences based on lexicographic dominance refine preferences based on stochastic dominance, $\dl$-efficiency is more restrictive than $\sd$-efficiency.
If $p(i) \sim_i q(i)$ for all $i$, we say that all agents are indifferent between $p$ and $q$.

An allocation mechanism $f$ maps a preference profile $\apref$ and a set of feasible (complete) allocations $X$ to an allocation $f(\apref,X)\in X$.
For any property for allocations, we say that $f$ has this property if $f(\apref,X)$ has the property for all preference profiles $\apref$ and sets $X$.
For example, $f$ is \sd-efficient if $f(\apref,X)$ is \sd-efficient for all $\apref$ and $X$.

\section{Vigilant Eating Rules}\label{sec:ver}

The probabilistic serial rule, sometimes also called the eating algorithm, was introduced by \citet{BoMo01a} and roughly works as follows.
Time runs continuously from 0 onward. 
At time 0, each agent starts off by eating her most-preferred object.
An object becomes unavailable if the cumulative time agents spent eating it equals 1. 
Whenever an object becomes unavailable, the agents who have been eating it switch to their next most-preferred object among those which are still available.
The algorithm terminates when all objects are unavailable (in which case the probabilities for each object sum to 1).
The probability with which an agent receives an object in the final allocation equals the time she spent eating it.

The description of the eating rule presumes that the agents always have a unique favorite object and thus strict preferences.
If an agent is indifferent between multiple objects, one can think of eating from that equivalence class as increasing the probability of getting some object from that class.
Another assumption is that all allocations are feasible. 
The challenge in defining eating rules when not all allocations are feasible is to not let agents eat objects if this would make it impossible to continue eating in a way that results in a feasible allocation.
In other words, we need to ensure that the partial allocation at each time can be extended to a feasible allocation.
Even if the set of feasible allocations consists of all complete allocations satisfying upper bounds on the probabilities, this cannot always be achieved by simply letting agents eat objects as long as this does not violate any of the bounds.
To see this, consider the following

\begin{example}\label{example:nonbi}
	Assume two agents have the preferences depicted below.
	\begin{center}
	\begin{tabular}{lccccc}
	$\succ_1$:&$a$&$b$\\
	$\succ_2$:&$a$&$b$\\
	\end{tabular}
	\end{center}
	Suppose we want to get an allocation $p$ in which each agent gets one unit of objects: $p(1,a)+p(1,b)=1$ and $p(2,a)+p(2,b)=1$.
Suppose additionally that there is a constraint that $p(1,a)+p(2,b)\leq 1/2$.
If agents start eating their most preferred object $a$, they each get $1/2$ of $a$. 
Up to this point, no constraint is violated. 
However, there is no completion of the partial allocation that satisfies the constraints. 
The example shows that the Generalized Probabilistic Serial Rule of \citet{BCKM12a} cannot handle the above constraints.\footnote{\citet{BCKM12a} defined the Generalized Probabilistic Serial Rule for bi-hierarchical constraint structures. The constraints in \Cref{example:nonbi} do not form a bi-hierarchy.}
\end{example}

\subsection{Vigilant Eating}

We thus need a more ``vigilant'' approach.
The idea is to let an agent eat an object only if the resulting partial allocation can be extended to a complete allocation.
In general, it could be however, that agents $i$ and $j$ both can individually eat object $o$, but not simultaneously.
Thus, which objects an agent can eat may depend on which objects other agents eat (and these dependencies may be cyclic).
We cope with this issue by introducing a priority order over agents. 
\Cref{algo:closedver} formalizes our approach.
It is formulated for weak preferences, so instead of eating objects, agents increase their probability for objects in the equivalence classes induced by their preferences.
We keep track of how much of every equivalence class each agent is guaranteed, while always ensuring that there is a feasible allocation of objects which meets all guarantees.

Fix some priority order over agents, say, the lexicographic order.
Start with setting all guarantees to 0.
At the beginning of each round, we decide for every agent for which equivalence class she gets to increase her guarantee.
An equivalence class $E$ is available to agent 1 if there exists a feasible allocation that meets all of the previous guarantees and assigns a strictly higher probability for $E$ to $i$ than her current guarantee for $E$.
Agent 1 gets to increase her probability for her most preferred equivalence class $E_1$ among those available to her.
In general, $E$ is available to agent $i$ if there exists a feasible allocation that meets all guarantees established in the previous rounds, assigns a higher probability for $E_j$ to $j$ for each $j < i$, and assigns a higher probability for $E$ to $i$ than the current guarantee. 
Agent $i$ gets to eat her favorite equivalence class $E_i$ among those available to her.
Agents can increase their probabilities for the classes $E_i$ for as long as all the resulting guarantees can be met by some feasible allocation. 
More precisely, if $\pi^k$ stores the guarantees at the beginning of round $k$, find the largest $\delta$ so that there exists a feasible allocation which meets all guarantees imposed by $\pi^k$ and assigns a probability of at least $\pi^k(i,E_i) + \delta$ for $E_i$ to $i$ for all $i$.
If $\delta^*$ is this maximal value, we let $\pi^{k+1}$ be array that increases $i$'s guarantee for $E_i$ by $\delta^*$ compared to $\pi^k$ and is otherwise identical to $\pi^k$.
After at most $m \cdot n$ rounds, no guarantee can be further increased and the process terminates.
By construction, $X$ contains an allocation that meets all of the guarantees and all agents are indifferent between all such allocations.
Hence, it is irrelevant which of those we choose as the final allocation.

We call this mechanism the \emph{vigilant eating rule (VER)}.
Whenever the set of feasible allocations $X$ is non-empty and closed, VER results in a feasible allocation that is constrained \sd-efficient with respect to $X$ (see Theorems~\ref{thm:verterminates} and~\ref{thm:constrainedefficient}).
For feasible sets that are not closed, constrained \sd-efficient allocations may not exist, which shows that our requirement is minimal.
If $X$ is a union of polytopes, it can be computed in polynomial time in number of inequalities used to describe $X$ (\Cref{thm:polynomial}). 
Moreover, if $X$ is convex, the outcome of VER is ordinally egalitarian (\Cref{thm:ordinalegal}) and independent of the priority order over agents.
If $X$ meets certain symmetry conditions, VER satisfies equal treatment of equals (\Cref{thm:symmetric}).
Lastly, we explore an extension of VER that allows agents to eat at different rates (\Cref{thm:dl}).

\begin{algorithm}
	\caption{The Vigilant Eating Rule}
	\label{algo:closedver}

	\renewcommand{\algorithmicrequire}{\wordbox[l]{\textbf{Input}:}{\textbf{Output}:}}
	\renewcommand{\algorithmicensure}{\wordbox[l]{\textbf{Output}:}{\textbf{Output}:}}
	\hspace*{\algorithmicindent} \textbf{Input:} A preference profile $\apref$ and a non-empty and closed set $X$ of allocations\\
	\hspace*{\algorithmicindent} \textbf{Output:} An allocation $p\in X$
	\algsetup{linenodelimiter=\,}
	  \begin{algorithmic}[1]
	\STATE $k\longleftarrow 0$ (the round of the algorithm)
	\STATE $N'\longleftarrow N$ (the set of active agents)
	\STATE $\pi^0(i,E)\longleftarrow 0$ for all $i\in N$ and $E\in \mathcal E_i$ 
	\WHILE{$N'\neq \emptyset$}\label{ln:while}
	\FOR{$i \in N'$ in increasing order}
		\STATE $\mathcal F_i \longleftarrow \{E\in \mathcal E_i\colon \text{there is } p\in X  \text{ so that } p(j,E') \ge \pi^k(j,E') \text{ for all } j\in N \text{ and } E'\in \mathcal E_j\text{, } p(i,E) > \pi^k(i,E)\text{, and } p(j,E_j) > \pi^k(j,E_j)\text{ for all $j\in N' $ with $j<i$}\}$\label{ln:fi}
		\IF{$\mathcal F_i \neq\emptyset$}
		\STATE $E_i \longleftarrow \max_{\pref_i} \mathcal F_i$\label{ln:oi}
		\ELSE
		\STATE $N' \longleftarrow N'\setminus \{i\}$
		\ENDIF
	\ENDFOR
	\STATE Compute $\delta^*$ as follows
	\begin{align*}
	\delta^*&=\max_{\delta,p} \delta  \text{ s.t. }& \\
	p(i,E_i)&\ge  \pi^k(i,E_i) + \delta&\forall i\in N' &&\text{(agent $i$ consumes $E_i$)}\\
	p(i,E)&\ge \pi^k(i,E) &\forall i\in N\text{ and } E\in \mathcal E_i &&\text{($p$ extends $\pi^k$)}\\
	p&\in X &&&\text{($p$ is feasible)}
	\end{align*}\label{ln:delta}
	\vspace{-1em}
	\STATE $\pi^{k+1}(i,E)\longleftarrow  \pi^{k}(i,E)$ for all $i\in N$ and $E \in \mathcal E_i$\label{ln:pk+1b}
	\STATE $\pi^{k+1}(i,E_i)\longleftarrow \pi^{k}(i,E_i)+\delta^*$ for all $i\in N'$\label{ln:pk+1a}
	\STATE $k\longleftarrow k+1$
	\ENDWHILE
	\RETURN $p$ (as computed in Line~\ref{ln:delta}) 
	\end{algorithmic}
\end{algorithm}

\begin{example}[Illustration of VER]
	Let us revisit Example~\ref{example:nonbi}, where we had two agents with the following preferences.
	\begin{center}
	\begin{tabular}{lccccc}
	$\succ_1$:&$a$&$b$\\
	$\succ_2$:&$a$&$b$\\
	\end{tabular}
	\end{center}
	As before, we want to get an allocation $p$ satisfying the row constraints $p(1,a)+p(1,b)=1$ and $p(2,a)+p(2,b)=1$ and the diagonal constraint $p(1,a)+p(2,b)\leq 1/2$.
We illustrate how VER works in this example.
The order of consumption of objects over time is illustrated in Figure~\ref{figure:ver}. 

\begin{itemize}
	\item Round 1: In the first round, we identify the most preferred object for agent $1$ that she can consume. 
	That object is $a$. 
Next, we find the most preferred object for agent $2$ that she can consume given that agent 1 receives $a$ with positive probability. 
That object is also $a$. 
The algorithm computes the maximum amount that the agents can consume from $a$ while ensuring that there exists a feasible allocation extending the current partial allocation. 
This amount is $1/4$ since if agent $1$ were to consume more of object $a$, the row constraints would force agent 2 to consume more than $1/4$ of object $b$.
A violation of the diagonal constraint would thus be inevitable.
Hence, agent 1 has move on to consuming object $b$. 
At the end of the round, we have the following guarantees: $\pi^1(1,a)\geq 1/4$ and $\pi^1(2,a)\geq 1/4$.
\item Round 2: 
In the second round, agent $1$ consumes $1/2$ of object $b$ whereas agent $2$ consumes $1/2$ more of object $a$. 
At this point, object $a$ becomes unavailable and agent $2$ has to move on to consuming object $b$. 
At the end of the round, we have the following guarantees: $\pi^2(1,a)\geq 1/4$, $\pi^2(1,b)\geq 1/2$, and $\pi^2(2,b)\geq 3/4$. 
\item Round 3: Now both agents consume $1/4$ more of $b$ until $b$ becomes unavailable. 
At the end of round 3, we have the following guarantees: $\pi^3(1,a)\geq 1/4$, $\pi^3(1,b)\geq 3/4$, $\pi^3(2,a)\geq 3/4$ and $\pi^3(2,b)\geq 1/4$.
Since both objects are unavailable, the algorithm terminates with the following allocation satisfying all guarantees and the original constraints. 
\begin{center}
		$p=$
					 \begin{blockarray}{ccccccccc}
					 		&&\matindex{$a$}&\matindex{$b$}&\\
					 	    \begin{block}{c(cccccccc)}
					 			\matindex{$1$}& &$\nicefrac14$&$$\nicefrac34$$& \\
					 		\matindex{$2$}&&$\nicefrac34$&$$\nicefrac14$$& \\
					 	    \end{block}
					 	  \end{blockarray}

\end{center}

\begin{figure}
      	\begin{center}
      	             \begin{tikzpicture}[scale=0.3]
      	                 \centering
      	                 \draw[-] (0,0) -- (0,4);
      	                 \draw[-] (0,0) -- (20,0);

      	                 \draw[-] (20,4) -- (20,0);

      	\draw[-] (0,2) -- (20,2);
      	\draw[-] (0,4) -- (20,4);

      	\draw[-] (0,4) -- (20,4);

 	\draw[-] (5,0) -- (5,4);

      	\draw[-] (15,0) -- (15,4);

      	                                        \draw (0,-.8) node(c){\small $0$};
      	                             \draw (20/2,-1.2) node(c){\small $\frac{1}{2}$};
									 
									  \draw (20/4,-1.2) node(c){\small $\frac{1}{4}$};

      	 \draw (20/2,-2.5) node(c){\small Time};

      	                             \draw (20,-1) node(c){\small$1$};

      	\draw (15,-1.2) node(c){\small$\frac{3}{4}$};

      	    \draw(-3,1) node(z){\small Agent $2$};
      	                 \draw(-3,3) node(z){\small Agent $1$};

   	      	\draw(2.5,3) node(z){\small $a$};

   	      	\draw(2.5,1) node(z){\small $a$};
			
   	      	\draw(10,3) node(z){\small $b$};

   	      	\draw(10,1) node(z){\small $a$};

   	      	\draw(17.5,3) node(z){\small $b$};

   	      	\draw(17.5,1) node(z){\small $b$};

      	  \end{tikzpicture}
      	 
      	 \caption{Illustration of VER for the problem in Example~\ref{example:nonbi}.}
      	\label{figure:ver}
      	\end{center}
		
		\end{figure}
\end{itemize}

\end{example}

We first verify that each step of VER is well-defined and that it always terminates. 
It is then clear from the definition that it results in a feasible allocation.
The proof proceeds by showing that agents consume equivalence classes in decreasing order of their preferences and then observing that in each round some agent moves to a less preferred class or becomes inactive since she cannot increase any of her guarantees further.
(All proofs are in the appendix.)

\begin{restatable}{theorem}{verterminates}\label{thm:verterminates}
	For every closed set $X$ of allocations, \Cref{algo:closedver} terminates after at most $m\cdot n$ rounds with an allocation $p\in X$.
\end{restatable}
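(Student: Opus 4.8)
The plan is to show three things in sequence: (i) each round of the algorithm is well-defined, in particular the set $X$ over which $\delta^*$ is maximized in Line~\ref{ln:delta} is non-empty, so $\delta^*$ exists and the returned $p$ lies in $X$; (ii) the guarantees are monotone and the equivalence classes $E_i$ consumed by any active agent weakly decrease (in $\pref_i$) from round to round; and (iii) a progress argument bounding the number of rounds by $m\cdot n$. Step (i) is the place where closedness of $X$ is used: the feasible region of the optimization in Line~\ref{ln:delta} is exactly the set of $p\in X$ satisfying finitely many non-strict inequalities of the form $p(i,E)\ge \pi^k(i,E)$ and $p(i,E_i)\ge \pi^k(i,E_i)+\delta$, so it is a closed set; it is non-empty because, as I will argue inductively, there is always some $p\in X$ meeting all guarantees in $\pi^k$, and then $\delta=0$ together with that $p$ is feasible for the program. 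Since $\delta$ is bounded above (all entries of $p$ are at most $1$), a maximizer exists by closedness, so $\delta^*$ and an optimal $p$ are well-defined. The invariant "some $p\in X$ meets the guarantees $\pi^k$" is maintained because $\pi^{k+1}$ is attained precisely by the optimal $p$ from round $k$.

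For step (ii), I would argue that the class $E_i=\max_{\pref_i}\mathcal F_i$ chosen in round $k+1$ is weakly below the class chosen in round $k$. The key observation is that once agent $i$'s guarantee for a class $E$ has been increased to a positive amount, any further increase beyond the current guarantee is still "available" in the sense of Line~\ref{ln:fi} only if $E$ was already being consumed; more carefully, I want to show that the membership condition for $E$ in $\mathcal F_i$ can only fail to improve, and that the $\pref_i$-maximal element of $\mathcal F_i$ cannot jump to a strictly more preferred class than $E_i^{(k)}$, because the guarantee on $E_i^{(k)}$ has strictly increased (when $\delta^*>0$) while the upper "budget" forcing agents off a class is monotone. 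I also need to handle the case $\delta^*=0$: then no guarantee changes, but then the tuple of consumed classes $(E_i)_{i\in N'}$ must strictly change in the next round or some agent leaves $N'$ — otherwise the optimization is identical and still yields $\delta^*=0$, contradicting the claim that the process makes progress. This is the subtle point and I expect it to be the main obstacle: ruling out an infinite sequence of rounds with $\delta^*=0$ and no change in the consumed classes. The resolution is that if the full data $(\pi^k, (E_i))$ repeats with $\delta^*=0$, then already in the earlier round the class $E_i$ was in $\mathcal F_i$ with the same $\pi^k$, which contradicts $\delta^*=0$ unless in fact $E_i\notin\mathcal F_i$ — forcing $i$ out of $N'$; so within $O(m n)$ rounds either a guarantee strictly increases, an agent's consumed class strictly decreases, or an agent becomes inactive.

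For step (iii), I assign to each agent $i$ a potential equal to the $\pref_i$-rank of $E_i$ within $\mathcal E_i$ (with "inactive" counting as rank beyond the last class). Each class has rank between $1$ and $|\mathcal E_i|\le m$, and there are $n$ agents, so the total potential is at most $m\cdot n$. By step (ii), in every round this total potential strictly increases (some agent drops to a strictly less preferred class or becomes inactive, while no agent ever moves to a more preferred class), and it is bounded, so the algorithm halts after at most $m\cdot n$ rounds. Finally, upon termination $N'=\emptyset$, and the $p$ returned from the last execution of Line~\ref{ln:delta} lies in $X$ by step (i) and satisfies all guarantees, which is exactly the claimed conclusion. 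I would be careful to state explicitly that "increasing order" in the $\texttt{for}$ loop together with the dependence of $\mathcal F_i$ on the earlier-chosen $E_j$ does not create circularity, since the $E_j$ for $j<i$ are fixed before $\mathcal F_i$ is computed; this is a definitional remark rather than a proof step, but it is worth making for the reader.
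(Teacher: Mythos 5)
Your plan follows the same architecture as the paper's proof (well-definedness of the program in Line~\ref{ln:delta} via closedness plus the invariant that some $p\in X$ meets all current guarantees; weak monotonicity of the consumed classes; a counting argument), and step (i) is essentially identical to the paper's first part. The gap is in steps (ii)--(iii). First, you treat $\delta^*=0$ as a live possibility and try to dispose of it by a repetition argument; in fact $\delta^*>0$ whenever $N'\neq\emptyset$, immediately from the definition of $\mathcal F_i$ in Line~\ref{ln:fi}: the witness $p$ for the largest-indexed active agent $i$ satisfies $p(j,E_j)>\pi^k(j,E_j)$ for every active $j<i$ and $p(i,E_i)>\pi^k(i,E_i)$, so the pair $(p,\delta)$ with $\delta=\min_{j\in N'}\bigl(p(j,E_j)-\pi^k(j,E_j)\bigr)>0$ is feasible in Line~\ref{ln:delta}. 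Your repetition-based resolution is both unnecessary and insufficient: a conclusion of the form ``within $O(mn)$ rounds some progress event occurs'' does not give the claimed $m\cdot n$ bound, and counting ``a guarantee strictly increases'' as a progress event bounds nothing, since guarantees are real-valued and could in principle increase by ever smaller amounts in infinitely many rounds.

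Second, and more seriously, your potential argument in (iii) needs that in every round in which no agent is deactivated, at least one active agent moves to a \emph{strictly} less preferred class; step (ii) as written only aims at weak monotonicity, and even there the ``budget is monotone'' justification should be replaced by a witness-transfer argument: any witness for $E\in\mathcal F_i^{k+1}$ is also a witness for $E\in\mathcal F_i^{k}$, because $\pi^{k+1}\ge\pi^k$ and, for each active $j<i$, $\pi^{k+1}(j,E_j)=\pi^k(j,E_j)+\delta^*>\pi^k(j,E_j)$ (using $\delta^*>0$ again), hence $\mathcal F_i^{k+1}\subseteq\mathcal F_i^k$ and the chosen classes weakly decrease. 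Strictness is where the maximality of $\delta^*$ enters, and this is absent from your plan: if $N'$ is unchanged and every active agent's chosen class is unchanged, then the round-$(k+1)$ witness for the last active agent satisfies $p(j,E_j^k)>\pi^{k}(j,E_j^k)+\delta^*$ for all $j\in N'$ while meeting all other guarantees, so $\delta^*$ was not maximal in round $k$ --- a contradiction. Without this step, your claim that the total potential strictly increases every round is unsupported, and neither termination nor the $m\cdot n$ bound follows; supplying it is exactly how the paper closes the argument.
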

	
Intuitively, it is clear that VER yields an allocation that is \dl-efficient (and, thus, \sd-efficient) among feasible allocations. 
To see this, assume that the allocation $p$ of VER were dominated by some allocation $q$ in $X$.
We can view $q$ as the result of some eating trajectory different from the one which yields $p$ in \Cref{algo:closedver}.
Assuming agents eat objects at unit rate in decreasing order of their preference pins down a unique trajectory for $q$.
Consider the earliest point in time at which both trajectories diverge and let $i^*$ be the lexicographically smallest agent who starts eating different objects (or, more generally, equivalence classes) in both trajectories.
Since $q$ dominates $p$, $i^*$ strictly prefers her object in the $q$-trajectory to that in the $p$-trajectory.
But this contradicts the choice of $o_{i^*}$ at that time in Line~\ref{ln:oi} of \Cref{algo:closedver}.
	
\begin{restatable}{theorem}{constrainedefficient}\label{thm:constrainedefficient}
	When VER is applied to any closed set of allocations $X$, it returns an allocation that is constrained \dl-efficient among the allocations in $X$.
\end{restatable}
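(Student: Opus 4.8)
The plan is a contradiction argument centered on the maximality in Line~\ref{ln:oi} of \Cref{algo:closedver}, formalizing the sketch above by running two eating processes side by side on a common clock. Suppose some $q\in X$ strictly $\dl$-dominates the output $p$, so $q(i)\succsim_i^{\dl}p(i)$ for all $i$ and $q(i_0)\succ_{i_0}^{\dl}p(i_0)$ for some $i_0$. I would first reparametrize the run of VER as a unit-rate process: in each round every active agent raises her chosen guarantee by the common amount $\delta^*$, so I let a global clock advance by $\delta^*$ per round and write $t_k$ for the clock time at the start of round~$k$. Two facts are needed here: $\delta^*>0$ in every round in which some agent remains active (the allocation witnessing $E_i\in\mathcal F_i$ for the \emph{last} active agent $i$ processed in Line~\ref{ln:fi} strictly exceeds every active agent's current guarantee on her chosen class, hence is feasible in Line~\ref{ln:delta} for some positive $\delta$), and, from the proof of \Cref{thm:verterminates}, that each agent eats equivalence classes in strictly decreasing $\succ_i$-order. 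Consequently, while agent $i$ is active there is a well-defined class $P_i(t)$ that VER is incrementing for $i$ at clock time $t$; $\pi^k(i,E)$ equals the time $P_i$ spends on $E$ before $t_k$; the returned $p$ satisfies $p(i,E)\ge\pi^{\mathrm{end}}(i,E)$ for the final guarantees; and, once deactivated, an agent never resumes. To $q$ I attach the canonical trajectory $Q_i$ in which $i$ eats her classes in decreasing $\succ_i$-order, spending exactly $q(i,E)$ time on each $E$; again $Q_i$ never resumes.

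Because $q$ strictly dominates $p$, some $Q_i$ differs from $P_i$, so the trajectory families are not identical; by \Cref{thm:verterminates} they are piecewise constant with finitely many breakpoints. Let $t^*$ be the first clock time at which some agent's $P$- and $Q$-trajectories differ — in the class eaten or in whether the agent has finished — and fix $\epsilon>0$ below the distance from $t^*$ to the next breakpoint of any trajectory. All trajectories coincide on $[0,t^*)$; some differ on $[t^*,t^*+\epsilon)$. Let $i^*$ be the first such agent in the priority order. I would record the following incompatibility observation: if the trajectories agree on $[0,s)$, agent $j$ eats a class $G$ in the $P$-trajectory throughout $[s,s+\eta)$, but in the $Q$-trajectory $j$ has already finished $G$ (or moved to a strictly $\succ_j$-worse class) by time $s$, then $q(j,G)$ equals the time $P_j$ spends on $G$ before $s$, which is strictly less than $p(j,G)$, whereas $q(j,E')\le p(j,E')$ for every $E'\succ_j G$; a short case check on where the highest coordinate of disagreement of $q(j)$ over $p(j)$ sits then contradicts $q(j)\succsim_j^{\dl}p(j)$. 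Applying this at $s=t^*$ yields two reductions: first, $t^*$ must be a round boundary $t_k$ (otherwise no agent changes state at $t^*$ in the $P$-trajectory, which forces exactly the configuration just ruled out for $j=i^*$); second, when $i^*$ is still active in the $P$-trajectory at $t^*$, the class $Q_{i^*}(t^*)$ cannot be strictly $\succ_{i^*}$-worse than $P_{i^*}(t^*)$.

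The heart of the argument is the remaining case: either VER has $i^*$ eat at time $t^*$ a class $E^\circ:=P_{i^*}(t^*)$ while $q$ has her eat $E:=Q_{i^*}(t^*)\succ_{i^*}E^\circ$, or VER has already deactivated $i^*$ by time $t^*$ while $q$ still has her eat some class $E:=Q_{i^*}(t^*)$. I claim that in round~$k$ (recall $t^*=t_k$) the allocation $q$ witnesses $E\in\mathcal F_{i^*}$ in Line~\ref{ln:fi}: since all trajectories agree on $[0,t^*)$ and $\pi^k$ records the round-start guarantees, $q(j,E')\ge\pi^k(j,E')$ for all $j,E'$; for every active $j<i^*$ the class $E_j$ that VER selected this round is $Q_j(t^*)$, which $Q_j$ eats throughout $[t^*,t^*+\epsilon)$, so $q(j,E_j)\ge\pi^k(j,E_j)+\epsilon>\pi^k(j,E_j)$; and likewise $q(i^*,E)>\pi^k(i^*,E)$, because $E\succ_{i^*}E^\circ$ (or the deactivation being in round~$k$, forced by the no-resume property) means $i^*$ has finished $E$ before $t^*$ in VER, so $\pi^k(i^*,E)$ is the time $P_{i^*}$ — equivalently $Q_{i^*}$ — spent on $E$ before $t^*$, and $Q_{i^*}$ eats $E$ on $[t^*,t^*+\epsilon)$. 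In the first case this gives $E_{i^*}=\max_{\succ_{i^*}}\mathcal F_{i^*}\succsim_{i^*}E\succ_{i^*}E^\circ$, contradicting that $E_{i^*}$ is by definition $P_{i^*}(t^*)=E^\circ$. In the second case $i^*$ is deactivated in round~$k$ precisely because $\mathcal F_{i^*}=\emptyset$, directly contradicting $E\in\mathcal F_{i^*}$.

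This establishes that no $q\in X$ strictly $\dl$-dominates $p$, i.e., $p$ is constrained $\dl$-efficient; since lexicographic dominance refines stochastic dominance, $p$ is constrained $\sd$-efficient as well. I expect the genuine difficulty to lie entirely in the bookkeeping that keeps the two eating processes synchronized on one clock — handling agents that become inactive, justifying the no-resume property and that $\delta^*>0$, identifying $t^*$ with a round boundary, and the small case split at $t^*$ (class eaten versus finished, better versus worse class) — whereas the conceptual core is the one-line point from the sketch: a dominating allocation would have to put some agent onto a strictly better class at the first point of divergence, contradicting the maximality in Line~\ref{ln:oi}.
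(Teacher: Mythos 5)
Your proposal is correct and follows essentially the same argument as the paper: it locates the first time at which the VER eating trajectory and a canonical decreasing-order trajectory for the putative dominating allocation $q$ diverge, takes the priority-first diverging agent, and uses $q$ as a witness that this agent could have chosen a strictly better equivalence class, contradicting the maximality in Line~\ref{ln:oi}. The paper formalizes the same idea via the prefix allocations $p^t$ and $q^t$ and the choice of $t^*$ by lexicographic equivalence of prefixes, while you carry out the bookkeeping with explicitly synchronized trajectories (plus your ``incompatibility observation''), but the underlying proof is the same.
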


The steps in \Cref{algo:closedver} that are potentially computationally taxing are determining the best objects an agent can eat at a given time in Line~\ref{ln:oi} and the amount of time the current eating scheme can be continued in Line~\ref{ln:delta}. 
Both require maximizing a linear objective over the intersection of $X$ with a polytope.
Whenever $X$ is itself a polytope (described by polynomially many linear inequalities), this can be done efficiently by solving a linear program. 
More generally, if $X$ is a union of polytopes, we can solve the problem for each polytope individually and then pick the best solution (which is the best object $o_i$ agent $i$ can eat or the maximal amount of time $\delta^*$ the current eating scheme can continue).

\begin{restatable}{theorem}{polynomial}\label{thm:polynomial}
	If $X$ is a finite union of polytopes, the outcome of VER can be computed in polynomial time in the number of inequalities used to describe $X$.
\end{restatable}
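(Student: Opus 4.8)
The plan is to establish the bound by counting the expensive operations in \Cref{algo:closedver} and showing each one is a linear program (or a bounded family thereof) of size polynomial in the description of $X$. By \Cref{thm:verterminates}, the while loop executes at most $m\cdot n$ rounds. Within each round, the work splits into two kinds of tasks: (i) for each active agent $i$, determining the set $\mathcal F_i$ and hence the class $E_i = \max_{\pref_i}\mathcal F_i$ in Line~\ref{ln:oi}; and (ii) computing $\delta^*$ in Line~\ref{ln:delta}. All bookkeeping (updating $\pi^{k+1}$, removing agents from $N'$, comparing at most $m$ values of $\delta$ or at most $m$ equivalence classes) is clearly polynomial, so it suffices to handle (i) and (ii).

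First I would observe that testing whether a fixed equivalence class $E\in\mathcal E_i$ lies in $\mathcal F_i$ amounts to asking whether the system
\begin{align*}
p(j,E') &\ge \pi^k(j,E') & \forall j\in N,\ E'\in\mathcal E_j,\\
p(j,E_j) &\ge \pi^k(j,E_j) + \varepsilon & \forall j\in N' \text{ with } j<i,\\
p(i,E) &\ge \pi^k(i,E) + \varepsilon, & \\
p &\in X &
\end{align*}
has a solution with $\varepsilon>0$; maximizing $\varepsilon$ subject to these constraints and $p\in X$ is a linear program when $X$ is a polytope, and $E\in\mathcal F_i$ iff the optimum is strictly positive. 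Since $|\mathcal E_i|\le m$, agent $i$'s class $E_i$ is found with at most $m$ such LPs, and there are at most $n$ active agents, so one round costs at most $mn$ LPs for part (i) plus one LP for part (ii). Multiplying by the $m\cdot n$ round bound gives $O(m^2n^2)$ linear programs in total, each over $X$ intersected with a polytope described by at most $|N\times O| + |N| = nm + n$ additional inequalities (with rational data of polynomial bit-length, since all $\pi^k$ entries are outputs of earlier LPs). Each LP is solvable in time polynomial in its encoding length, which is polynomial in the number of inequalities describing $X$; hence the total running time is polynomial in that quantity. When $X$ is a finite union of polytopes $X = \bigcup_{t} P_t$, the same argument applies to each $P_t$ separately: to maximize a linear objective over $X$ intersected with a polytope, solve the LP over each $P_t$ intersected with that polytope and take the best value (and for membership in $\mathcal F_i$, declare $E\in\mathcal F_i$ if the optimum is positive for some $t$). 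This multiplies the count by the number of polytopes, which is part of the input description, so polynomiality is preserved.

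The main obstacle I anticipate is not the LP machinery itself but the bookkeeping subtlety that the guarantees $\pi^k$ are themselves solutions of prior linear programs, so one must check that their bit-complexity does not blow up across the $mn$ rounds. This is handled by noting that the $\delta^*$ in each round is a basic optimal value of an LP whose constraint matrix and right-hand side have encoding length polynomial in that of $X$ and of the previous $\pi^k$; since each $\pi^k$ entry is a sum of at most $m$ such $\delta^*$ values, an induction on $k$ shows the encoding length of $\pi^k$ grows at most polynomially in $k$ and in the size of $X$, keeping every LP polynomially sized throughout. A second, more cosmetic point is that the $\delta^*$-program as written maximizes $\delta$ over $(\delta,p)$ jointly; one simply records that this is still a linear program (in the variables $\delta$ and the entries of $p$) over $X$ intersected with a polytope, so nothing changes.
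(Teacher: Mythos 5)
Your proposal is correct and follows essentially the same route as the paper's proof: bound the number of rounds by $m\cdot n$ via \Cref{thm:verterminates}, reduce the determination of $E_i$ to one $\varepsilon$-maximization LP per equivalence class per polytope, reduce the computation of $\delta^*$ to one LP per polytope with $\delta^*=\max_r\delta_r^*$, and take the best value over the polytopes in the union. Your added induction on the bit-length of the guarantees $\pi^k$ is a detail the paper leaves implicit but does not change the argument.
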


\subsection{Ordinal Egalitarianism and Equal Treatment of Equals}

\citet{Bogo15b} proposed a solution notion called \emph{ordinal egalitarianism} that characterizes the probabilistic serial rule (which can be argued to be the fairest rule for the allocation of objects to agents under ordinal preferences).  
Ordinal egalitarianism is defined as follows.
We evaluate any given allocation by the list of numbers $t_i^j$, the total probability agent $i$ gets of objects from her first $j$ equivalence classes, for all $i$ and $j$. 
An allocation is \emph{ordinally egalitarian} if it is a leximin maximizer of the signature vector $t^\uparrow(p) = (t_i^j(p))_{i,j}$ over all feasible allocations. 
For any signature vector $t^\uparrow(p)$, we will denote by $t^\uparrow(p)(j)$ as the $j$-th entry of the non-decreasing ordered entries. 
We will denote by $t_i(p)$ as the signature vector $t^\uparrow(p)$ in which only the entries corresponding to agent $i$ are considered.
If an allocation is ordinally egalitarian, then it is \dl-efficient.
We show that under convex constraints, VER returns an allocation that is ordinally egalitarian.

\begin{restatable}{theorem}{ordinalegal}\label{thm:ordinalegal}
	The VER allocation for a closed and convex set of allocations $X$ is ordinally egalitarian among the allocations in $X$. 
\end{restatable}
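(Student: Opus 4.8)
The plan is to show that the VER allocation $p$ is a leximin maximizer of the signature vector $t^{\uparrow}(\cdot)$ over $X$, which is by definition what ``ordinally egalitarian'' means. Since $X$ is convex, I would first record the simplifications this buys: averaging feasible allocations shows that in Line~\ref{ln:fi} the clauses involving agents $j<i$ are redundant, so whether an equivalence class is available to an agent depends only on the current guarantees (not on the priority order), and, for the same reason, the quantity $\delta^{*}$ of Line~\ref{ln:delta} is strictly positive whenever some active agent has a non\-empty $\mathcal F_{i}$. Hence VER becomes a continuous eating process: reparametrizing time so that every active agent's total consumption equals the elapsed time $\tau$ (consistent, because $\delta^{*}$ is common to all active agents), each active agent eats her $\pref_{i}$-best not-yet-\emph{saturated} equivalence class at unit rate, where a class $E$ of $i$ is saturated at time $\tau$ once every $r\in X$ extending the time-$\tau$ guarantees $\pi_{\tau}$ has $r(i,E)=\pi_{\tau}(i,E)$; saturation is monotone in $\tau$, and the crucial use of convexity is packaged as \textbf{Fact~A}: as soon as agent $i$ has completed her first $j$ equivalence classes, i.e.\ $t_{i}^{j}(p)\le\tau$, all of those classes are saturated at $\tau$, so every $r\in X$ extending $\pi_{\tau}$ satisfies $t_{i}^{j}(r)=t_{i}^{j}(p)$.

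Next I would invoke the standard description of the leximin-maximal signature over a compact convex set with a linear objective: it is the output of the iterative procedure that repeatedly (i) maximizes the minimum of the not-yet-fixed coordinates over all feasible allocations respecting the values already fixed, and (ii) fixes, at that maximal value, every coordinate forced to it in every optimizer; this output is the unique leximin-maximal signature. Writing $\mu_{1}^{*}<\mu_{2}^{*}<\dots$ for the values the procedure produces, I would prove by induction on $s$ that running VER until every active frontier reaches time $\mu_{s}^{*}$ leaves exactly the coordinates $\{(i,j):t_{i}^{j}(p)\le\mu_{s}^{*}\}$ fixed, each at value $t_{i}^{j}(p)$, and that $\mu_{s+1}^{*}$ is the next distinct value taken by the signature of $p$. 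The core of the induction is to identify the $(s{+}1)$-st maximization with the VER rounds carrying the frontiers from $\mu_{s}^{*}$ to $\mu_{s+1}^{*}$: because $t_{i}^{1}(r)\le t_{i}^{2}(r)\le\cdots$, the constraint ``$t_{i}^{j}(r)\ge\mu$ for every unfixed coordinate of $i$'' reduces to the single inequality for her current frontier class; and — here Fact~A is used — ``$r$ extends $\pi_{\mu_{s}^{*}}$'' is equivalent to ``$r$ agrees with $p$ on every already-completed class and reaches the frontier guarantees''. Translating, the procedure's constraint ``$t_{i}^{j}(r)=t_{i}^{j}(p)$ on fixed coordinates, $t_{i}^{j}(r)\ge\mu$ on unfixed ones'' becomes exactly ``$r$ extends $\pi_{\mu_{s}^{*}}$ and advances every active frontier to time $\mu$'', which is precisely what the $\delta^{*}$ of Line~\ref{ln:delta} maximizes; so the procedure's value equals the time the frontiers reach, the coordinates it fixes are exactly those VER saturates there, and the values agree. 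By induction the procedure's output equals $t^{\uparrow}(p)$, so $p$ is ordinally egalitarian (and, as a by-product, independent of the priority order).

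I expect the main work to lie in this inductive step: verifying carefully that ``extends $\pi_{\mu_{s}^{*}}$'' and ``agrees with $p$ on completed classes'' coincide (this is exactly Fact~A, hence convexity), that the several VER rounds possibly needed to pass from $\mu_{s}^{*}$ to $\mu_{s+1}^{*}$ together realize a single maximization of the procedure, and that degenerate situations — an agent dropping out, a class saturated with probability $0$, several frontiers stalling simultaneously — are absorbed into the correspondence. The preliminary reduction to the continuous process, and in particular the claim that $\delta^{*}>0$ under convexity, should also be spelled out, since the termination argument of \Cref{thm:verterminates} only guarantees progress in general via the priority order; an alternative to invoking the iterative leximin procedure as a black box is to argue leximin-maximality directly, assuming a leximin-dominating $q\in X$, applying Fact~A at the critical value, and descending through the finitely many signature values of $p$ to a contradiction.
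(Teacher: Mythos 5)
Your proposal is correct and follows essentially the same route as the paper: both identify each round of VER (its $\delta^*$ maximization over the frontier upper-contour sets) with a stage of the progressive max-min construction of the leximin-optimal signature, with convexity used to lock the already-completed upper contour sets (your Fact~A) and to make availability independent of the priority order. The paper runs this induction directly against a given ordinally egalitarian allocation $q$ rather than invoking the iterative leximin procedure as a black box, but in substance it is the same argument, and your plan is if anything more explicit about the reduction and the degenerate cases.
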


\begin{remark}\label{rem:convex}
	It follows from \Cref{thm:ordinalegal} that if $X$ is convex, the outcome of VER is independent of the priority order over agents, which determines in which order agents choose objects.
	Hence, in that case, like the eating-based algorithms in the literature, VER can be formulated so that it is symmetric with respect to the agents. 
\end{remark}

One of the desiderata when assigning objects to agents is to treat all agents equally.
Roughly, this means that if two agents have the same preferences and are subject to the same constraints, then they should receive the same allocation.
Thus, there are two ways in which asymmetry between agents in an allocation can arise: the mechanism that determines it is inherently asymmetric or the set of feasible allocations is biased toward certain agents.

VER introduces some asymmetry of the first kind by letting agents choose their most-preferred available object in lexicographic order. 
In general, this order is relevant since whether or not an object is available to an agent may depend on the choices of previous agents.
It is clear that this asymmetry cannot be eliminated entirely.
For example, two agents may have the same most-preferred object and only the allocations assigning that object to one of them with probability 1 may be feasible.
(This is just the special case of deterministic mechanisms.)
However, if $X$ is convex, the order in which agents choose objects is irrelevant~(see \Cref{rem:convex}).
A second source of asymmetry is the set of feasible allocations $X$ itself.
Clearly, if $X$ is asymmetric, we may not be able to give the same allocation to agents with the same preferences.

These considerations motivate the conditions under which VER treats agents equally.
Let $i$ and $i'$ be two agents.
We say that a set of allocations $X$ is symmetric for $\{i,j\}$ if $p\in X$ implies $p^{(ij)}\in X$, where $p^{(ij)}$ is identical to $p$ except that the allocations of $i$ and $j$ are swapped.
Moreover, $X$ is convex for $\{i,i'\}$ if for all $\lambda\in[0,1]$, $p\in X$ and $p^{(ii')}\in X$ implies $\lambda p + (1-\lambda) p^{(ii')}\in X$.

\begin{restatable}{theorem}{symmetric}\label{thm:symmetric}
	Let $X$ be a closed set of allocations and $\apref$ be a preference profile.
	If $X$ is symmetric and convex for all $i,i'\in N$ with ${\pref_i} = {\pref'_i}$, then the outcome of VER satisfies equal treatment of equals.
\end{restatable}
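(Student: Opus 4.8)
The plan is to track the two agents $i$ and $i'$ (with ${\pref_i} = {\pref_{i'}}$) through the execution of VER and argue that, whenever one of them gets to consume a favorite available equivalence class in a given round, so does the other, and the resulting guarantees for the two agents are forced to agree by symmetry. Let $\pi^k$ denote the guarantee array at the start of round $k$. I would prove by induction on $k$ the invariant: (i) $\pi^k(i,E) = \pi^k(i',E)$ for all $E \in \mathcal{E}_i = \mathcal{E}_{i'}$, and (ii) the availability sets $\mathcal F_i$ and $\mathcal F_{i'}$ computed in round $k$ are equal, hence $E_i = E_{i'}$. The base case $k=0$ is immediate since all guarantees start at $0$. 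Granting the invariant for all $k$, the final allocation $p$ satisfies $p(i,E) = \pi^{\mathrm{end}}(i,E) = \pi^{\mathrm{end}}(i',E) = p(i',E)$ for every equivalence class, which is exactly equal treatment of equals (note that by Theorem~\ref{thm:verterminates} all agents are indifferent between all allocations meeting the final guarantees, and the guarantees themselves are what the invariant controls).

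For the inductive step, the key point is the symmetry/convexity of $X$ for $\{i,i'\}$. Suppose WLOG $i < i'$ in the priority order. Assume (i) holds at round $k$. To see that $\mathcal F_i = \mathcal F_{i'}$, I would show each is contained in the other. If $E \in \mathcal F_i$, witnessed by some $p \in X$ that meets all of $\pi^k$, strictly exceeds $\pi^k(i,E)$, and strictly exceeds $\pi^k(j,E_j)$ for all active $j < i$, then consider $p^{(ii')}$: by symmetry of $X$ it lies in $X$; by the induction hypothesis (i), swapping the rows of $i$ and $i'$ preserves "meets all of $\pi^k$"; and since $i' > i$, no already-fixed $E_j$ for $j<i$ involves $i$ or $i'$, so those strict inequalities are untouched. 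But $p^{(ii')}$ need not give $i$ a high value of $E$. The fix is to average: $q = \tfrac12 p + \tfrac12 p^{(ii')}$ lies in $X$ by convexity for $\{i,i'\}$, meets $\pi^k$, and has $q(i,E) = q(i',E) = \tfrac12(p(i,E)+p(i',E)) \ge \tfrac12(p(i,E) + \pi^k(i',E)) > \pi^k(i,E) = \pi^k(i',E)$ using (i) and $p(i',E)\ge\pi^k(i',E)$. The same $q$ witnesses $E \in \mathcal F_i$ and $E \in \mathcal F_{i'}$ (for the latter one must also check the constraints from $j < i'$, i.e.\ from agents $j<i$ and from $j = i$; $q$ strictly exceeds $\pi^k(i,E_i)$ by the same averaging argument once $E_i$ is determined — here one uses that $E_i = \max_{\pref_i}\mathcal F_i$ and $\pref_i = \pref_{i'}$). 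Running this symmetrically gives $\mathcal F_i = \mathcal F_{i'}$, hence $E_i = E_{i'}$ since the two agents maximize the same preference relation over the same set; this is part (ii).

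Finally I would propagate (i) to round $k+1$. Only the guarantees for $E_i$ (for $i$) and $E_{i'} = E_i$ (for $i'$) change, each by the common amount $\delta^*$, so if (i) held at round $k$ it holds at round $k+1$. One subtlety to address: the order in which agents are processed within a round means that when $\mathcal F_{i'}$ is computed, the classes $E_i$ for the earlier agent $i$ are already fixed — but since $E_i = E_{i'}$ and (i) holds, the extra constraint "$p(i,E_i) > \pi^k(i,E_i)$" imposed in computing $\mathcal F_{i'}$ is satisfied by the symmetrized witness $q$ constructed above, so it does not shrink $\mathcal F_{i'}$ relative to $\mathcal F_i$. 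The main obstacle, and the step I would write out most carefully, is exactly this interaction between the sequential (priority-ordered) processing of agents and the swap-plus-average symmetrization: one must verify that the witness allocation $q$ simultaneously certifies availability of the same class for both agents despite the asymmetric bookkeeping, and that $\delta^*$ — being the maximum over a feasible region that is itself symmetric in $i,i'$ under the swap, by convexity — forces equal increments. Everything else is routine induction.
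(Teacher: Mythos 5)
Your overall strategy is the same as the paper's: induct on rounds with the invariant $\pi^k(i,E)=\pi^k(i',E)$, and symmetrize witnesses via the swap-plus-average $q=\tfrac12 p+\tfrac12 p^{(ii')}$, using symmetry and convexity of $X$ for $\{i,i'\}$. However, the inductive step as you wrote it has a genuine gap in the witness construction for $\mathcal F_{i'}$. The definition in Line~\ref{ln:fi} imposes the strict constraints $p(j,E_j)>\pi^k(j,E_j)$ for \emph{every} active $j<i'$, not only for agents $j<i$ and $j=i$ as you state; if there are active agents strictly between $i$ and $i'$ in the priority order, your witness --- built from an arbitrary witness for $E\in\mathcal F_i$ --- certifies nothing about them, and averaging with the swap does not help because their rows are unchanged and the original witness carries no strict guarantee for them. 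Moreover, even the constraint for $j=i$ is not delivered by your averaging argument when $E\neq E_i$: a witness $p$ for $E\in\mathcal F_i$ only gives $p(i,E_i)\ge\pi^k(i,E_i)$ and $p(i',E_i)\ge\pi^k(i',E_i)$, so $q(i,E_i)\ge\pi^k(i,E_i)$ without strictness. Hence your claimed invariant $\mathcal F_i=\mathcal F_{i'}$ is not established by the argument (and it is stronger than needed).

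The repair, which is how the paper proceeds, is to prove less and choose the witness better: it suffices that $\mathcal F_{i'}\subseteq\mathcal F_i$ (a pure swap argument, no averaging required) and that $E_i\in\mathcal F_{i'}$, since maximizing the common preference relation over these sets then forces $E_i=E_{i'}$ and the guarantees propagate. To get $E_i\in\mathcal F_{i'}$, start not from a witness for membership of some $E$ in $\mathcal F_i$, but from the allocation guaranteed by the definition of $\mathcal F_j$ for $j$ the \emph{last} active agent preceding $i'$: by Line~\ref{ln:fi}, this single allocation already strictly exceeds $\pi^k(j',E_{j'})$ for every active $j'\le j$, so it covers $i$ at $E_i$ and all intermediate agents simultaneously; swap-averaging it then supplies the missing strict inequality for $i'$ at $E_i$ (using the induction hypothesis $\pi^k(i,\cdot)=\pi^k(i',\cdot)$) while preserving all other strict constraints, whose rows are untouched. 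With that change the induction closes as you intend, and your concluding step (the final guarantees agree for $i$ and $i'$, so $p(i,E)=p(i',E)$ for all $E$) matches the paper's.
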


\subsection{Variable Eating Rates}

	One can generalize VER by equipping each agent $i$ with an eating rate function $f_i\colon\mathbb R_+\rightarrow\mathbb R_+$ (assumed to be measurable), which determines how fast she can increase her probability for an equivalence class as a function of time.
	\Cref{algo:closedver} can be extended to eating rate functions by introducing variables $t^k$ with $t^0 = 0$ and $t^k = t^{k-1} + \delta^*$ as determined in Line~\ref{ln:delta} in round $k$, replacing the second line in the linear program in Line~\ref{ln:delta} by
	\begin{align*}
		p(i,E_i)&\ge  \pi^k(i,E_i) + \int_{t_k}^{t_k + \delta} f_i(s)ds \qquad\forall i\in N',
	\end{align*}
	and replacing Line~\ref{ln:pk+1a} by 
	\begin{align*}
		\pi^{k+1}(i,E_i)\longleftarrow \pi^{k}(i,E_i) + \int_{t_k}^{t_k + \delta^*} f_i(s)ds \qquad\forall i\in N'.
	\end{align*}
	If the eating rate functions are piecewise constant, which is the case we will consider, $\delta^*$ can still be computed by linear programming.
	In general, this is not possible, however.
	
	\citet[][Theorem 1]{BoMo01a} show that for unit demand object allocation without constraints, an allocation is \sd-efficient if and only if it is the outcome of their eating mechanism for some profile of eating rate functions.
	The ``if part'' remains true in our setting, that is, for every profile of eating rate functions, VER yields a constrained \dl-efficient and, thus, \sd-efficient allocation.
But not every \sd-efficient allocation is the outcome of VER for some profile of eating rate functions.
	Consider the following preferences.
	\begin{center}
	\begin{tabular}{lccccc}
	$\succ_1$:&$a$&$b$&$c$\\
	$\succ_2$:&$b$&$c$&$a$\\
	$\succ_3$:&$c$&$a$&$b$\\
	\end{tabular}
	\end{center}
	Let $X$ be the convex hull of the allocations $p$ and $q$.
	\begin{center}
		$p=$
					 \begin{blockarray}{ccccccccc}
					 		&&\matindex{$a$}&\matindex{$b$}& \matindex{$c$}&\\
					 	    \begin{block}{c(cccccccc)}
					 			\matindex{$1$}& &$\nicefrac12$&$0$&$\nicefrac12$& \\
					 			\matindex{$2$}& &$\nicefrac12$&$\nicefrac12$&$0$& \\
								\matindex{$3$}& &$0$&$\nicefrac12$&$\nicefrac12$& \\
					 	    \end{block}
					 	  \end{blockarray}
		\qquad\qquad
		$q=$
					 \begin{blockarray}{ccccccccc}
					 		&&\matindex{$a$}&\matindex{$b$}& \matindex{$c$}&\\
					 	    \begin{block}{c(cccccccc)}
					 			\matindex{$1$}& &$\nicefrac13$&$\nicefrac13$&$\nicefrac13$& \\
					 			\matindex{$2$}& &$\nicefrac13$&$\nicefrac13$&$\nicefrac13$& \\
								\matindex{$3$}& &$\nicefrac13$&$\nicefrac13$&$\nicefrac13$& \\
					 	    \end{block}
					 	  \end{blockarray}
						  \end{center}
		Every allocation in $X$ is constrained \sd-efficient. 
		However, there is no profile of eating rate functions for which VER will yield $p$.
		This is because at each point during the eating process, every agent's favorite object $o$ will be available to her until she has consumed $\frac12$ of $o$.
		
	The reason for the equivalence derived by \citeauthor{BoMo01a} is that \sd-efficiency and \dl-efficiency coincide in their setting as shown by \citet{ChDo16a}.
	This is no longer true in our case as observed above.
We can however show that when the set of feasible allocations is convex, every constrained \dl-efficient allocation is unanimously indifferent to the outcome of VER for some profile of eating rate functions. 
To this end, it suffices to consider indicator functions as eating rate functions.
Those result in one-at-a-time-VER, which at any point in time, allows only one agent to increase her probability for some equivalence class. 
Formally, for all $i\in N$, $f_i = \chi_{S_i}$, where the $S_i$ are pairwise disjoint measurable subsets of $\mathbb R_+$ and $\chi_S$ is the indicator function of $S\subset\mathbb R_+$.
Clearly, the outcome of VER can be achieved by some instance of one-at-a-time-VER.

 \begin{restatable}{theorem}{dlthm}\label{thm:dl}
	 Let $X$ be a closed and convex set of allocations and $p \in X$.
 	 Then $p$ is constrained \dl-efficient with respect to $X$ if and only if all agents are indifferent between $p$ and the outcome of some instance of one-at-a-time-VER on the set $X$.
  \end{restatable}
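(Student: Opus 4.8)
The plan is to prove the two implications separately; almost all of the work is in the ``only if'' direction.

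\textbf{The ``if'' direction.} Suppose all agents are indifferent between $p$ and the outcome $q$ of some instance of one-at-a-time-VER. Since one-at-a-time-VER is the variable-rate VER with indicator eating-rate functions, the variable-rate version of \Cref{thm:constrainedefficient} (recorded above: VER is constrained \dl-efficient for \emph{every} profile of eating-rate functions) gives that $q$ is constrained \dl-efficient in $X$. Whether an allocation \dl-dominates another depends only on the numbers $r(i,E)$, and $p(i,E)=q(i,E)$ for all $i$ and all $E\in\mathcal E_i$; hence any allocation that \dl-dominated $p$ would \dl-dominate $q$, so $p$ is constrained \dl-efficient.

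\textbf{The ``only if'' direction.} Fix a constrained \dl-efficient $p\in X$. I construct an instance of one-at-a-time-VER whose output has the same class-probabilities as $p$, hence is indifferent to $p$, by building the eating schedule greedily. Maintain a guarantee profile $\pi$, initially $0$, preserving the invariant $\pi(i,E)\le p(i,E)$ for all $i,E$ (so $p$ always meets $\pi$). While $\pi\not\equiv p$: call $E$ a deficit class of $i$ if $\pi(i,E)<p(i,E)$, let $D=\{i:\pi(i,\cdot)\neq p(i,\cdot)\}$, and for $i\in D$ let $\hat E_i$ be its $\pref_i$-highest deficit class. By the Claim below, some $i\in D$ satisfies $\max_{\pref_i}\mathcal F_i=\hat E_i$; schedule that agent to eat $\hat E_i$ for duration $p(i,\hat E_i)-\pi(i,\hat E_i)$ with all other agents standing still. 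This is a legal one-at-a-time step: $p$ itself witnesses that $i$'s guarantee for $\hat E_i$ can be raised by exactly that amount, so the round's $\delta^{*}$ equals this duration. Afterwards $\pi(i,\hat E_i)=p(i,\hat E_i)$; guarantees never decrease and stay $\le p$, so a deficit is removed for good, and after at most $\sum_i|\mathcal E_i|\le mn$ steps $\pi\equiv p$. At that point $\mathcal F_i=\emptyset$ for all $i$, since any $r\in X$ meeting $\pi\equiv p$ has $r(i,E)\ge p(i,E)$ everywhere and would \dl-dominate $p$ unless indifferent to it; so the run halts with an outcome indifferent to $p$. (Turning the finite list of scheduled steps into pairwise disjoint sets $S_i$ of eating times is routine: consecutive half-open intervals of the prescribed lengths, with $S_i$ ending after $i$'s last step.)

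\textbf{Proof of the Claim.} Suppose $\pi\le p$ pointwise, $\pi\not\equiv p$, yet no $i\in D$ has $\max_{\pref_i}\mathcal F_i=\hat E_i$. For $i\in D$ we have $\hat E_i\in\mathcal F_i$ (witnessed by $p$), so $E_i^{*}:=\max_{\pref_i}\mathcal F_i\succ_i\hat E_i$; pick $r_i\in X$ meeting $\pi$ with $r_i(i,E_i^{*})>\pi(i,E_i^{*})=p(i,E_i^{*})$ (the equality because $E_i^{*}$ lies above the highest deficit class of $i$). Since $E_i^{*}$ is $\pref_i$-maximal in $\mathcal F_i$, no $E'\succ_i E_i^{*}$ is in $\mathcal F_i$, so $r_i(i,E')=\pi(i,E')=p(i,E')$ for all $E'\succ_i E_i^{*}$; hence $r_i(i)\succ_i^{\dl}p(i)$ with first strict gain at $E_i^{*}$. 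Put $\bar r:=\frac1{|D|}\sum_{i\in D}r_i\in X$ (convexity). For $j\notin D$, $\bar r(j,\cdot)\ge\pi(j,\cdot)=p(j,\cdot)$, so $\bar r(j)\pref_j^{\dl}p(j)$. For $i\in D$ and every $E'$ with $E'\pref_i E_i^{*}$ we have $\pi(i,E')=p(i,E')$, hence $\bar r(i,E')\ge\frac1{|D|}\big(r_i(i,E')+(|D|-1)p(i,E')\big)$, which equals $p(i,E')$ when $E'\succ_i E_i^{*}$ and strictly exceeds $p(i,E_i^{*})$ when $E'=E_i^{*}$; therefore the $\pref_i$-highest class on which $\bar r(i)$ and $p(i)$ disagree is some $E^{\dagger}\pref_i E_i^{*}$ with $\bar r(i,E^{\dagger})>p(i,E^{\dagger})$, i.e.\ $\bar r(i)\succ_i^{\dl}p(i)$. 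Thus $\bar r$ \dl-dominates $p$, contradicting its \dl-efficiency.

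\textbf{Main obstacle.} The crux is the Claim, and within it the passage from the separate obstructions --- each agent in $D$ still having a \emph{strictly more preferred} class available under $\pi$ --- to one feasible allocation that \dl-dominates $p$ for all agents simultaneously. Averaging works only because each $r_i$ agrees with $p$ above its improvement class $E_i^{*}$, while below it the bound $r_{i'}(i,\cdot)\ge\pi(i,\cdot)$ together with $\pi=p$ on classes above $\hat E_i$ keeps $\bar r(i)$ from dropping below $p(i)$ before its first strict gain; convexity of $X$ is what makes $\bar r$ admissible. A secondary, purely technical point --- which I expect to settle by bookkeeping --- is reconciling the step-by-step description with the exact definition of $\mathcal F_i$ in \Cref{algo:closedver} (notably the clause requiring higher-priority agents to progress) and with how rounds of the variable-rate VER are delimited for piecewise-constant rates; since $X$ is convex, ordering agents so that the one scheduled to eat in each round is the lowest-indexed active agent removes that clause.
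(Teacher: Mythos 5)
Your proof is correct and takes essentially the same route as the paper's: the ``if'' direction transfers constrained \dl-efficiency from the one-at-a-time-VER outcome to $p$ via indifference, and the ``only if'' direction greedily eats toward $p$ while maintaining $\pi\le p$, resting on the same key lemma that some agent's most-preferred available equivalence class is a deficit class, proved using convexity of $X$ together with \dl-efficiency of $p$. The only (immaterial) differences are that you establish the lemma by contradiction, averaging per-agent witnesses over the deficit agents, whereas the paper forms a single improving allocation over all agents with non-empty $\mathcal F_i$ and identifies the desired agent by a direct case analysis, and that the bookkeeping about indicator eating rates and round delimitation is left at the same level of informality as in the paper itself.
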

  
  \begin{remark}\label{rem:oneatatimever}
   	An interesting instance of one-at-a-time-VER is the \emph{vigilant priority rule}. 
  Fix some order of the agents. 
  The first agent in that order increases her probability guarantees for each of her equivalence classes in order of her preferences by as much as possible.
  Once no further increase is possible for the first agent, it is the second agent's turn to increase all of her guarantees in order of her preferences by as much as possible.
  We proceed in the same way for the remaining agents.
  \Cref{app:vp} gives a pseudo-code formulation of this algorithm.\footnote{We obtain the vigilant random priority rule as an instance of one-at-a-time VER by setting $S_i = [(l-1)|O|,l|O|)$, where $l$ is the rank of agent $i$ in the chosen order.} 
  The vigilant priority rule is clearly unfair to agents who come later in the order. 
  It is (strongly) \sd-strategyproof whenever the set of feasible allocations is independent of the agents' preferences. 
  If we choose the order of agents uniformly at random, then the rule is no longer \dl-efficient.\footnote{To guarantee that the convex combination of the outcomes of the vigilant priority rule for different orders is feasible, we need to assume that the set of feasible allocations is convex.}
  The reason is that it coincides with the classic random priority rule in the assignment domain, which is well-known to violate \sd-efficiency (and thus \dl-efficiency).
  \end{remark}

\subsection{Examples of Constraints}

We conclude this section by giving concrete classes of constraints that VER can handle.

\paragraph{Distributional constraints on the ex-ante allocations}

We have pointed out that VER applies to a wide variety of constraints on the ex-ante allocations. 
A natural and general family of constraints arises from imposing lower and upper bounds on the cumulative probability a subset of agents can obtain from a subset of objects. 
These constraints appear in applications like school choice if a certain number of seats is reserved for students who live close to the school.  
Another class of constraints obtains from diversity requirements, which put bounds on the fraction of objects from within some subset that is assigned to a certain subset of agents (instead of bounding the absolute number).  

\paragraph{Distributional constraints on the ex-post allocations}
As mentioned in the introduction, ex-post constraints, that is, constraints on deterministic allocations, can be expressed as ex-ante constraints on the set of probabilistic allocations by taking only those allocations to be feasible which can be decomposed into deterministic allocations satisfying the ex-post constraints.
These can capture upper and lower bounds on the number of objects a subset of agents can receive from some subset of objects ex-post, for example.
One instance of this are bi-hierarchical constraints on deterministic allocations of the type studied by \citet{BCKM12a}.
Similarly, one can also formulate diversity constraints ex-post.

Ex-post constraints can be combined with possibly different ex-ante constraints on probabilistic allocations. 
For example, one may stipulate that the ex-ante constraints hold exactly, while the ex-post constraints only need to be satisfied approximately.
\citet{AkNi20a} show that if the ex-ante and ex-post constraints are the same, then any allocation satisfying the ex-ante constraints can be decomposed into deterministic allocations that approximately achieve the constraints.
Hence, in that case, there is little need to impose additional constraints which guarantee that a decomposition into deterministic allocations satisfying the constraints exists.

\paragraph{Allocation with endowments}
VER can be applied to problems in which agents have endowments to obtain individually rational and possibly more efficient re-allocations. 
\citet{AtSe11a} presented the controlled consuming algorithm, which can be viewed as an extension of the probabilistic serial rule.
It ensures that each agent gets an allocation she weakly prefers to her endowment with respect to stochastic dominance. 
The requirement of \sd-individual rationality can be embedded in our framework by using a linear number of inequalities for each agent.  
Hence, we can use linear programming to capture the controlled consuming algorithm. 
In principle, we can even combine individual rationality constraints and priorities.

\paragraph{Deterministic and integral allocations}
Our approach has been framed in the context of randomization or time-sharing.
However, it also applies when no randomization is involved or allowed by rendering all non-deterministic allocations infeasible. 
Then the $\delta$ increment in the probability of an agent for an object (cf. \Cref{algo:closedver}, Line~\ref{ln:delta}) always has to be one. 
This ensures that an agent either gets an object with probability one or zero. 
Not allowing probabilistic allocations under feasibility constraints may however render many problems NP-hard and thus computationally intractable. 
We note that for the house allocation problem with deterministic allocations~\citep{Sven99a}, VER coincides with the serial dictatorship rule.

\paragraph{Welfare requirements}

If the agents additionally have cardinal preferences over the objects, then minimum social welfare guarantees can also be treated as feasibility constraints.

\section{Stable Probabilistic Allocations}\label{sec:stable}

One way in which feasibility constraints can occur is if agents have different priorities for objects and we require allocations to be stable.
How to generalize stability to probabilistic allocations is not at all unambiguous, however.
Various notions have been proposed in the literature, of which we will discuss four.
Then we examine VER as a mechanism for stable probabilistic object allocation under priorities.
Many of its properties follow directly from the general statements we have derived above.
On the other hand, VER lacks properties such as strategyproofness, unconstrained efficiency, and envy-freeness.
We show that this is unavoidable if one insists on stability.

We augment our formal model by assuming that every object $o$ comes with a (complete, reflexive, and transitive) relation $\pref_o$ over agents, which specifies the agents' priorities for $o$.
A profile $\p = (({\pref_i})_{i\in N},({\pref_o})_{o\in O})$ consists of preferences for each agent and priorities for each object.
The priorities give rise to a relaxed notion of equal treatment of equals, which only requires that two agents receive the same allotment if they have the same priority for all objects.
That is, an allocation $p$ satisfies \emph{limited equal treatment of equals} for a profile $\p$ if $p(i) = p(j)$ for all $i,j\in N$ with ${\pref_i} = {\pref_j}$, $\capa(i) = \capa(j)$, and $i\sim_o j$ for all $o\in O$.
The use of \emph{preferences} for the agents and \emph{priorities} for objects follows the school choice literature~\citep{AbSo03b}, which considers efficiency, fairness, and strategic aspects only from the perspective of the agents. 
It also makes the connection with problems where objects have no priorities clearer.

\subsection{Notions of Stability}\label{sec:stabilitynotions}

We consider four notions of stability, all of which coincide with the standard (pairwise) stability for deterministic allocations if the agents' preferences over allocations are responsive. 
Throughout this section, we assume that every agent $i$ has a capacity $\capa(i)\in\mathbb N$ of objects she can receive.
We stipulate that for every allocation $p$, $p(i,O) \le \capa(i)$.
Recall that a deterministic allocation $p$ is \emph{stable} if for all $i\in N$ and $o\in O$, either $\sum_{o'\succsim_i o} p(i,o') = \capa(i)$ or $p(o)\succsim_o i$.

Our most restrictive notion is ex-ante stability, which has been introduced by \citet{KeUn15a}.
It prescribes that an agent $i$ can only receive a positive probability for an object $o$ if every agent $j$ with a higher priority for $o$ can meet the capacity $\capa(j)$ with objects $j$ prefers to $o$.
Formally, $p$ is ex-ante stable if for all $i,j\in N$ and $o\in O$,
\begin{align}
	j\succ_o i \text{ and } p(i,o)>0 \text{ implies } \sum_{o'\succsim_j o} p(j,o') = \capa(j)\tag{ex-ante stability}
\end{align}
Thus, ex-ante stability requires that $j$ has no justified envy toward $i$ even before knowing the realization of the random allocation $p$.

Analogously, one can ask that no agent should have justified envy ex-post, that is, after a deterministic allocation has been selected according to the random allocation.
This leads us to define $p$ as ex-post stable if
\begin{align}
	p \text{ is a convex combination of deterministic stable allocations}
	\tag{ex-post stability}
\end{align}

The third stability notion, called fractional stability, requires that for all $i\in N$ and $o\in N$,
\begin{align}
	\sum_{o'\succsim_i o,\, o'\neq o} p(i,o') + \capa(i)\,\sum_{j\succsim_o i} p(j,o) \ge \capa(i)\tag{fractional stability}\label{eq:fs}
\end{align}
These inequalities originate from the work of \citet{RRV93a}, who observed that for deterministic allocations, their conjunction is equivalent to stability. 
\citet{BaBa00b} showed that when preferences and priorities are strict, fractional stability is equivalent to ex-post stability.
\citet{AzKl19a} give an example, attributed to Battal Do\u{g}an, which shows that this equivalence breaks down if one omits the strictness assumption for both preferences and priorities.
Our \Cref{ex:fractionalnotex-post} in the appendix shows that weak priorities alone to break the equivalence.

A motivation for fractional stability under unit capacities, adopted from \citet{AzKl19a}, is that if the inequality for a pair $(i,o)$ fails, then $i$ justifiably envies the set of agents with lower priority for $o$ for jointly consuming more of $o$ than $i$ consumes of objects $i$ weakly prefers to $o$.
Another reason for considering fractional stability is as a proxy for ex-post stability in situations where it is computationally prohibitive to handle the latter.
Since the set of fractionally stable allocations is described by $m \cdot n$ linear inequalities, it is typically much more well-behaved in that respect.

Lastly, we consider claimwise stability, which has been introduced by \citet{Afac18a}.
We say that agent $i$ has a justified claim against $j$ for object $o$ if $i$ has higher priority for $o$ than $j$ and $j$'s probability for $o$ is larger than $i$'s probability for objects $i$ weakly prefers to $o$.
An allocation $p$ is claimwise stable if no agent has a justified claim, that is, if for all $i,j\in N$ and $o\in O$,
\begin{align*}
	 i\succ_o j \text{ implies } \sum_{o'\pref_i o,\, o'\neq o} p(i,o') \ge p(j,o)\tag{claimwise stability}
\end{align*}
\citet{AzKl19a} showed that each of our four stability notions implies the ones below it in the list, while none of the converse implications holds.

\subsection{Vigilant Eating on Sets of Stable Allocations}\label{sec:verstable}

In this section, we study the properties of VER when it is applied to sets of stable allocations.
If $S$ is one of our stability notions, then $S$-VER denotes the mechanism which, for a profile $\p$, runs VER for the preferences $\apref$ on the set $X$ of $S$-stable allocations for $\p$, so that $S\text-\ver\p = \ver(\apref,X)$.
Most of the properties of VER on sets of stable allocations follow directly from the results we have proved in \Cref{sec:ver}.
In particular, for every stability notion $S$ defined in \Cref{sec:stabilitynotions}, $S$-VER yields an allocation that is $S$-stable, constrained \sd-efficient, and satisfies limited equal treatment of equals.
We summarize these results in the following corollary.

\begin{restatable}{corollary}{stableandefficient}\label{cor:stableandefficient}
	For each of our four stability notions $S$, $S$-VER returns an $S$-stable allocation that is $\sd$-efficient among $S$-stable allocations and satisfies limited equal treatment of equals.
\end{restatable}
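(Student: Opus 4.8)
The plan is to deduce everything from the general theorems of \Cref{sec:ver}, applied to $X$, the set of $S$-stable allocations. Fix one of the four stability notions $S$ and a profile $\p$; by definition $S\text-\ver\p = \ver(\apref,X)$, so it suffices to check that $X$ is non-empty and closed and then to verify the hypotheses needed for equal treatment. Once $X$ is non-empty and closed, \Cref{thm:verterminates} shows that $\ver(\apref,X)\in X$, i.e. the output is $S$-stable, and \Cref{thm:constrainedefficient} shows that the output is constrained \dl-efficient, hence, since \dl-efficiency is more restrictive than \sd-efficiency, constrained \sd-efficient among the allocations in $X$. This delivers the first two claims.

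\emph{Non-emptiness and closedness.} A deterministic stable allocation always exists: deferred acceptance~\citep{GaSh62a} on any strict refinement of the agents' preferences and the objects' priorities returns a matching that is pairwise stable for the original weak instance. Such a deterministic $p$ is ex-ante stable --- if $j\succ_o i$ and $p(i,o)>0$, then $o$ is assigned to $i$, so pairwise stability at $(j,o)$ forces $\sum_{o'\succsim_j o}p(j,o')=\capa(j)$ --- and hence, by the chain ex-ante $\Rightarrow$ ex-post $\Rightarrow$ fractional $\Rightarrow$ claimwise \citep{AzKl19a}, $S$-stable for all four notions; so $X\neq\emptyset$ in every case. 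For closedness: fractional and claimwise stability are defined by finitely many weak linear inequalities, so $X$ is a polytope; ex-post stability makes $X$ the convex hull of the finitely many deterministic stable allocations, again a polytope; and ex-ante stability makes $X$ a finite intersection, over triples $(i,j,o)$ with $j\succ_o i$, of sets $\{p:p(i,o)=0\}\cup\{p:\sum_{o'\succsim_j o}p(j,o')=\capa(j)\}$, each a union of two polytopes and hence closed. Intersecting with the closed set of allocations respecting the capacity bounds keeps $X$ closed in all cases.

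\emph{Limited equal treatment of equals.} Fix agents $i,j$ with $\succsim_i=\succsim_j$, $\capa(i)=\capa(j)$, and $i\sim_o j$ for all $o$; I want to apply \Cref{thm:symmetric} to the pair $\{i,j\}$. Symmetry of $X$ for $\{i,j\}$ holds because nothing in the profile distinguishes $i$ from $j$, so exchanging their allotments maps $S$-stable allocations to $S$-stable allocations. Convexity of $X$ for $\{i,j\}$ is automatic when $X$ is convex, covering fractional, claimwise, and ex-post stability; the only delicate case is ex-ante stability, where $X$ is in general non-convex. Here the crucial point is that, since $i\sim_o j$ for all $o$, the transposition $\sigma$ swapping $i$ and $j$ is an automorphism of every priority $\succsim_o$ and of every preference $\succsim_a$; consequently, for $q=\lambda p + (1-\lambda)p^{(ij)}$ and a triple $(a,b,o)$ with $b\succ_o a$ and $q(a,o)>0$, we have $p(a,o)>0$ or $p(\sigma(a),o)>0$, and in either case ex-ante stability of $p$ applied to $b$ and to $\sigma(b)$ yields $\sum_{o'\succsim_b o}p(b,o')=\capa(b)$ and $\sum_{o'\succsim_b o}p(\sigma(b),o')=\capa(\sigma(b))=\capa(b)$; averaging gives $\sum_{o'\succsim_b o}q(b,o')=\capa(b)$, so $q$ is ex-ante stable. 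With symmetry and convexity for $\{i,j\}$ in hand, \Cref{thm:symmetric} --- whose proof treats each pair in isolation and uses only these two properties of the pair --- yields $S\text-\ver\p(i)=S\text-\ver\p(j)$.

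The step I expect to be the main obstacle is this ex-ante case of the convexity claim, precisely because the set of ex-ante stable allocations is genuinely non-convex and the argument must exploit that $i$ and $j$ are ``priority clones'' --- which is what limited equal treatment of equals (but not ordinary equal treatment of equals) grants. A minor technical point is that \Cref{thm:symmetric} quantifies its symmetry and convexity conditions over all pairs with equal preferences, whereas here we establish them only for pairs that also share capacities and priorities; since its proof argues pairwise, this causes no difficulty, but it is worth noting (or invoking a pairwise restatement of \Cref{thm:symmetric}).
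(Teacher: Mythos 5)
Your proposal is correct and follows essentially the same route as the paper: stability and constrained \sd-efficiency are deduced from Theorems~\ref{thm:verterminates} and~\ref{thm:constrainedefficient}, and limited equal treatment of equals from \Cref{thm:symmetric} after checking symmetry and, for the one non-convex case (ex-ante stability), convexity of the stable set for the pair of priority clones — exactly the paper's argument, merely organized via the transposition automorphism rather than the paper's explicit case check on pairs in $\{i,i'\}\times N\setminus\{i,i'\}$. Your extra verifications (non-emptiness and closedness of $X$, and the observation that \Cref{thm:symmetric} is invoked pairwise) are points the paper leaves implicit, not a genuinely different approach.
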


For the extreme cases of coarseness of priorities and strict preferences, VER coincides with well-known mechanisms. 
If priorities are flat, that is, if all agents have the same priority for every object, every allocation is stable for each of our stability notions. 
In that case, VER reduces to the probabilistic serial rule of \citet{BoMo01a}, which corresponds to unconstrained eating.
The opposite extreme is that priorities are strict. 
Then VER returns the agent optimal deterministic stable allocation, which is also the outcome of the agent-proposing deferred acceptance algorithm. 
Intuitively, this checks out since VER is optimal for agents in the sense that it allows them to eat their most preferred object available to them.

\begin{restatable}{corollary}{agentoptimal}\label{cor:agentoptimal}
	Assume that preferences and priorities are strict.
	Then, for each of our four stability notions $S$, $S$-VER returns the agent optimal deterministic stable allocation.
\end{restatable}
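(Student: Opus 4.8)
The plan is to show that, when preferences and priorities are strict, the agent-optimal deterministic stable matching $\mu^*$ --- the outcome of the agent-proposing deferred acceptance algorithm of \citet{GaSh62a}, which classically exists, is unique, and is weakly preferred by every agent to her object in any deterministic stable matching --- is the \emph{unique} allocation that is \sd-efficient within the set $X$ of $S$-stable allocations, for each of the four notions $S$. Granting this, the corollary is immediate: each notion reduces to pairwise stability on deterministic allocations, so $\mu^*\in X$ and $X$ is non-empty; $X$ is also closed, being cut out by finitely many closed conditions; hence Theorem~\ref{thm:constrainedefficient} applies and $S$-VER returns an allocation of $X$ that is constrained \dl-efficient, and therefore \sd-efficient, within $X$ --- which can only be $\mu^*$.

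To establish the uniqueness claim I would first reduce to a single statement about claimwise stability. Since the four notions form a hierarchy --- ex-ante stability implies ex-post stability implies fractional stability implies claimwise stability \citep{AzKl19a} --- every $S$-stable allocation is claimwise stable, so it suffices to prove the \textbf{key claim}: every claimwise stable allocation $p$ satisfies $p(i,o)=0$ whenever $o\succ_i\mu^*(i)$. This is precisely the assertion that the allotment $\mu^*$ gives to $i$ --- object $\mu^*(i)$ with probability one --- stochastically dominates $p(i)$. Granting the key claim: if $p\in X$ is \sd-efficient within $X$, then $\mu^*$ weakly \sd-dominates $p$ for every agent, and since $\mu^*\in X$, $\mu^*$ cannot strictly dominate $p$; hence for every agent all the stochastic-dominance inequalities between $\mu^*(i)$ and $p(i)$ hold with equality, and under strict preferences this forces $p(i)$ to place probability one on $\mu^*(i)$. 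Thus $p=\mu^*$, and (applying the same reasoning to any allocation in $X$ that \sd-dominates $\mu^*$) $\mu^*$ is itself \sd-efficient within $X$, so it is the unique such allocation.

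The remaining task --- proving the key claim --- is where I expect the real work to be. I would run the agent-proposing deferred acceptance algorithm in its sequential form, where unmatched agents propose one at a time; it terminates at $\mu^*$, and for each agent $i$ the objects that ever turn $i$ down (by rejecting a proposal of $i$, or by bumping $i$ in favour of a preferred proposer) are exactly those in $R_i:=\{o:o\succ_i\mu^*(i)\}$, each occurring once. I would order these rejection events chronologically and induct on that order. At the $t$-th event, some object $o$ turns down an agent $c$ while being matched to an agent $b$ with $b\succ_o c$ (strict priorities); since $b$ is matched to $o$ via a proposal she has made, $b$ has already been turned down by every object she prefers to $o$, and because a bumped agent's tentative match only worsens, each such object $o''\succ_b o$ satisfies $o''\succ_b\mu^*(b)$ --- so $(b,o'')$ is one of the strictly earlier rejection events. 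By the induction hypothesis $p(b,o'')=0$ for all $o''\succ_b o$, and therefore claimwise stability applied to the pair $(b,c)$ at object $o$ yields
\begin{align*}
	p(c,o)\;\le\;\sum_{o''\succ_b o}p(b,o'')\;=\;0 .
\end{align*}
Since every pair $(i,o)$ with $o\succ_i\mu^*(i)$ arises as exactly one rejection event, the key claim follows. The delicate point --- and the one I would verify most carefully --- is this synchronisation between claimwise stability and the chronology of deferred acceptance: that the agent $b$ in whose favour $o$ turns $c$ down has necessarily already been rejected by every object on the left-hand side of the claimwise-stability inequality, so that the induction closes. Everything else (the hierarchy reduction, the stochastic-dominance bookkeeping, and checking that $X$ is closed and non-empty) is routine.
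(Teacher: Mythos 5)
Your proposal is correct (for the unit-demand case) and takes a genuinely different route from the paper. The paper treats the four notions non-uniformly: for ex-ante, ex-post, and fractional stability (the latter two coincide under strictness by \citet{BaBa00b}) it invokes the lattice structure of the corresponding stable sets under stochastic dominance (\citet{AlGa03a} for ex-ante stability, \citet{JNO20a} for ex-post stability), whose maximum element is the agent-optimal stable allocation, and then applies constrained \sd-efficiency of VER; for claimwise stability it gives a separate ad hoc argument showing that already the first round of \Cref{algo:closedver} pins down a deterministic stable allocation (or, alternatively, appeals to equivalence with \citeauthor{Afac18a}'s rule). You instead prove a single key lemma --- every claimwise stable allocation puts zero probability on any object an agent strictly prefers to her deferred-acceptance match --- by induction along the rejection chronology of sequential DA, and then dispatch all four notions at once via the stability hierarchy, membership of $\mu^*$ in each stable set, closedness of each set, and \Cref{thm:constrainedefficient}. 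The induction step is sound: the agent $b$ in whose favour $o$ rejects $c$ has, by the mechanics of agent-proposing DA, already been rejected by every $o''\succ_b o$ at a strictly earlier event, so the left-hand side of the claimwise inequality vanishes and $p(c,o)=0$; the subsequent stochastic-dominance bookkeeping forcing the constrained \sd-efficient element of $X$ to equal $\mu^*$ is also correct. Your route is self-contained, avoids the lattice literature entirely, handles claimwise stability on the same footing as the others, and proves the stronger structural fact that $\mu^*$ stochastically dominates every claimwise stable allocation.

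One caveat worth flagging: \Cref{sec:stable} allows capacities $\capa(i)\in\mathbb N$, and your DA-chronology induction genuinely needs unit demand. With $\capa(b)>1$, the agent $b$ who causes a rejection at $o$ may be holding $o$ without ever having been rejected herself (she proposes to her top $\capa(b)$ objects with no rejections), so the inductive conclusion $\sum_{o''\succ_b o}p(b,o'')=0$ is unavailable, and the key claim itself must be restated when $\mu^*(b)$ is a set rather than a single object. The paper's lattice citations do cover multi-unit demand for ex-ante, ex-post, and fractional stability, while its claimwise argument is, like yours, effectively written for unit capacities; so this is a scope restriction you partly share with the paper rather than an error, but it should be addressed if your lemma is meant to replace the paper's proof in full generality.
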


In view of \Cref{cor:stableandefficient} and \Cref{cor:agentoptimal}, we also recover the well-known result that the agent proposing deferred acceptance algorithm returns an allocation that is Pareto efficient within the set of stable allocations when the preferences and priorities are strict. 
\Cref{cor:agentoptimal} shows that $S$-VER results in the same mechanism for all four stability notions in that case even though the sets of stable allocations are not the same (except for ex-post stability and fractional stability).
For weak priorities, all four instantiations of VER are in fact distinct. 
We provide examples in \Cref{sec:proofssection5}.

An alternative interpretation of our formal model is that the entities on both sides of the market are agents who have preferences over the other side (instead of one side being objects with priorities over agents).
Then, instead of considering \sd-efficiency for one side, we could ask for allocations that are \sd-efficient with respect to the preferences of both sides, called two-sided \sd-efficiency.
Formally, an allocation $p$ is \emph{two-sided \sd-efficient} for a profile $\p$ if there is no allocation $q$ such that $q(i)\pref_i^\sd p(i)$ for all $i\in N$ and $q(o)\pref_o p(o)$ for all $o\in O$ and at least one preference is strict.
For ex-ante stability and fractional stability, we can show that VER always returns an allocation that is two-sided \sd-efficient among all allocations, not only among stable allocations.
For ex-post and claimwise stability, this is open.

\begin{restatable}{proposition}{stablebothsides}\label{prop:sstablebothsides}
	Assume that preferences are strict.
	For ex-ante stability and fractional stability, $S$-VER returns an allocation that is two-sided \sd-efficient.
\end{restatable}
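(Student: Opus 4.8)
The plan is to argue by contradiction: suppose $p = S\text-\ver\p$ is not two-sided $\sd$-efficient, so there is an allocation $q$ with $q(i) \pref_i^\sd p(i)$ for all agents $i\in N$, $q(o) \pref_o p(o)$ for all objects $o\in O$, and at least one strict preference somewhere. Since preferences are strict, stochastic dominance on each side is really about the prefix sums being weakly larger (and strictly somewhere). First I would use what we already know from \Cref{cor:stableandefficient}: $p$ is $\sd$-efficient among $S$-stable allocations. So the dominating $q$ cannot itself be $S$-stable --- if it were, the strict improvement would have to occur on the object side, but then I would still need to derive a contradiction. The key structural fact to exploit is that for ex-ante stability and fractional stability the stability constraints are \emph{linear}, so the feasible set $X$ is a polytope, and in particular convex; by \Cref{thm:ordinalegal} and \Cref{rem:convex} the outcome of VER on a convex set is the ordinally egalitarian (leximin on the signature vector) allocation and does not depend on the agent order.

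The main step is to show that if $q$ weakly $\sd$-dominates $p$ for every agent, with at least one improvement (on either side), then $q$ in fact must weakly dominate $p$ for every agent with a \emph{strict} agent-side improvement for someone, or else $q$ coincides with $p$ on the agent side. Concretely: if $q(i) \sim_i p(i)$ for all $i$ --- i.e. all agents are indifferent --- then because preferences are strict, indifference on the agent side means $q$ and $p$ have identical rows, so $q = p$ and no strict object-side improvement is possible; contradiction. Hence some agent strictly improves under $q$. Now I would run the same ``eating trajectory'' comparison used in the proof sketch of \Cref{thm:constrainedefficient}: view $q$ as arising from some unit-rate eating trajectory on $X$, compare it with the VER trajectory yielding $p$, take the first time the trajectories diverge and the lexicographically-first agent $i^*$ who eats a different (strictly better, since $q$ dominates $p$) equivalence class at that moment. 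The point to check is that at that time, the equivalence class $i^*$ eats in the $q$-trajectory was \emph{available} to $i^*$ in the sense of Line~\ref{ln:fi} --- this is where I need that the $S$-stability constraints defining $X$ are the same linear constraints that $q$ satisfies, so the partial allocation from the $q$-trajectory up to that time lies in (the relevant projection of) $X$. That contradicts the maximality choice $E_{i^*} = \max_{\pref_{i^*}}\mathcal F_{i^*}$ in Line~\ref{ln:oi}.

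The hard part will be handling the object side carefully, since the dominating allocation $q$ is only assumed $\sd$-better for objects, not stable, and the trajectory argument above is run purely on the agent side. I expect the resolution is that the trajectory/contradiction argument already rules out \emph{any} $q$ with $q(i)\pref_i^\sd p(i)$ for all $i$ and some strict agent-side improvement, independently of what happens on the object side; and separately, a $q$ that is agent-side-indifferent to $p$ forces $q=p$ by strictness, killing any object-side improvement. So two-sided dominance is impossible in either case. A subtlety worth flagging: this argument leans on convexity of $X$ (to invoke \Cref{rem:convex} and to make the ``$q$ as an eating trajectory in $X$'' step legitimate), which is exactly why the proposition is stated only for ex-ante and fractional stability --- ex-post and claimwise stability need not give convex (respectively, they give non-convex or differently-shaped) feasible sets, so the same trajectory argument does not go through, consistent with the ``open'' remark. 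I would double-check that fractional stability's defining inequalities \eqref{eq:fs} are preserved under the partial allocations along the $q$-trajectory, or at least that the argument only needs $q\in X$ itself, not its prefixes, in which case the availability check in Line~\ref{ln:fi} should be re-examined to make sure the witness allocation can be taken to be $q$ restricted appropriately.
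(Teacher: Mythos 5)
The decisive step is missing from your argument: to contradict \Cref{cor:stableandefficient} you must exhibit an \emph{$S$-stable} allocation that \sd-dominates $p$ on the agent side, but your dominating allocation $q$ is only assumed to two-sided \sd-dominate $p$ among all allocations, and you never establish $q\in X$. Your eating-trajectory comparison essentially re-proves \Cref{thm:constrainedefficient}, which only rules out domination by allocations \emph{inside} $X$; the step where the better class eaten along the $q$-trajectory is ``available'' in the sense of Line~\ref{ln:fi} needs a feasible (i.e.\ stable) witness, and $q$ has not been shown to be one. Supplying that witness is exactly the content of the paper's proof. For fractional stability the paper observes that the inequalities~\eqref{eq:fs} are monotone under \sd-improvements on both sides, so any two-sided dominating $q$ is itself fractionally stable; since preferences are strict and $q\neq p$, some agent strictly improves, contradicting constrained \sd-efficiency. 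For ex-ante stability this shortcut fails---$q(i,o)>0$ does not imply $p(i,o)>0$, so ex-ante stability of $p$ gives no handle on $q$---and the paper instead uses the cycle characterization of two-sided \sd-inefficiency of \citet{DoYi16a}, shifts a small $\epsilon$ of probability along the cycle, and verifies by a case analysis that the resulting allocation is ex-ante stable, again contradicting constrained efficiency. You flag the issue at the end of your write-up, but flagging it is not resolving it, and resolving it is the proposition.

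A further problem is your structural premise for ex-ante stability: the ex-ante stable set is \emph{not} cut out by linear inequalities and is not convex in general (the paper states this explicitly; each ex-ante stability condition is a disjunction of two equalities, and the proof of \Cref{cor:stableandefficient} only establishes convexity for pairs of agents with identical priorities). Hence \Cref{thm:ordinalegal} and \Cref{rem:convex} cannot be invoked for EAS-VER, and an argument resting on convexity of $X$ would not cover half of the statement; your suggested explanation for why ex-post and claimwise stability are excluded (non-convexity) is also backwards, since the ex-post stable set \emph{is} convex---for those two notions the question is simply left open in the paper.
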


Let us now consider the computational complexity of VER for sets of stable allocations.					
It is easy to see from the definitions that the set of fractionally stable allocations and the set of claimwise stable allocations are polytopes described by on the order of $m\cdot n$ and $m\cdot n^2$ linear inequalities, respectively. 
Thus,	\Cref{thm:polynomial} implies that the corresponding vigilant eating rules can be computed in polynomial time.

\begin{corollary}\label{cor:polynomial}
	FS-VER and CWS-VER can be computed in polynomial time.
\end{corollary}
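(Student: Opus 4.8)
The plan is to observe that the set of fractionally stable allocations and the set of claimwise stable allocations are each a single polytope cut out by polynomially many linear inequalities, and then to invoke \Cref{thm:polynomial} verbatim. First I would write down the constraints defining $X_{\mathit{FS}}$, the set of $\mathit{FS}$-stable allocations for a profile $\p$: nonnegativity of each entry ($nm$ inequalities), the capacity bounds $p(i,O)\le\capa(i)$ ($n$ inequalities), the column (completeness) constraints ($m$ equalities, i.e.\ $2m$ inequalities), and the fractional stability conditions \eqref{eq:fs}, one for each pair $(i,o)\in N\times O$. The key point is that \eqref{eq:fs} is genuinely linear in the entries of $p$: the coefficient $\capa(i)$ multiplying $\sum_{j\succsim_o i} p(j,o)$ is a fixed constant, so the condition is a linear inequality with rational coefficients. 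Hence $X_{\mathit{FS}}$ is an intersection of $O(nm)$ closed halfspaces, i.e.\ a polytope described by $O(nm)$ inequalities. The same bookkeeping for $X_{\mathit{CWS}}$, the set of claimwise stable allocations, gives the same basic constraints plus the claimwise stability conditions, one for each triple $(i,j,o)$ with $i\succ_o j$; there are at most $n^2 m$ of these, so $X_{\mathit{CWS}}$ is a polytope described by $O(n^2 m)$ inequalities.

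Next I would record that both sets are legitimate inputs to \Cref{algo:closedver}: they are closed (each is a bounded intersection of closed halfspaces) and non-empty. Non-emptiness follows because a deterministic stable allocation exists — produced, e.g., by running the agent-proposing deferred acceptance algorithm on an arbitrary strict tie-breaking of the priorities — and, since ex-post stability implies both fractional stability and claimwise stability (\Cref{sec:stabilitynotions}), any ex-post stable allocation lies in $X_{\mathit{FS}}\cap X_{\mathit{CWS}}$.

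With these observations in hand the corollary is immediate: \Cref{thm:polynomial} states that VER on a finite union of polytopes can be computed in time polynomial in the number of inequalities used to describe it, and here we have a single polytope described by $O(nm)$ inequalities in the $\mathit{FS}$ case and $O(n^2 m)$ in the $\mathit{CWS}$ case, both polynomial in $n$ and $m$. Composing the two bounds yields running time polynomial in the instance size. The only real work is the routine verification that the stability conditions as stated are linear inequalities and the counting of how many non-trivial ones there are; there is no genuine algorithmic obstacle beyond what \Cref{thm:polynomial} already delivers, since the nonconvex, union-of-polytopes machinery of that theorem is not even needed here.
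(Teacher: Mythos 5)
Your argument is correct and matches the paper's proof: the text preceding the corollary notes exactly that the fractionally stable and claimwise stable allocations form polytopes described by on the order of $m\cdot n$ and $m\cdot n^2$ linear inequalities, respectively, and then invokes \Cref{thm:polynomial}. Your additional checks of closedness and non-emptiness (via a deterministic stable allocation and the implication from ex-post stability) are fine but routine and left implicit in the paper.
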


The exact complexity of computing VER on the set of ex-post stable and ex-ante stable allocations is not settled. 
Since the set of ex-post stable allocations is convex, it would by \Cref{thm:polynomial} suffice to describe it by a polynomial number of linear inequalities. 
For the case of strict preferences and priorities, this has been done by \citet{BaBa00b} as discussed earlier. 
For weak priorities, it is an interesting open problem, as also pointed out by \citet{KeUn15a}.
Ex-ante stability can be captured by a set of constraints, each of which is a disjunction of two linear equalities: for each $i\in N$ and $o\in O$, either $\sum_{o'\pref_i o}p(i,o)=1$ or $\sum_{j\prec_o i}p(j,o)=0$. 
In general, solving problems involving disjunctions of equalities is NP-hard, however.

\subsection{Incompatibility of Stability with Efficiency, Envy-Freeness, and Strategyproofness}

For each of our stability notions, VER on the set of stable allocations violates unconstrained efficiency, weak envy-freeness, and weak strategyproofness when preferences over probabilistic allocations are based on stochastic dominance.
Neither of those shortcomings is specific to VER, but the consequence of an inherent incompatibility of each of these properties with stability and constrained efficiency.
We address them in turn.

\subsubsection*{Efficiency}
\citet{Roth82a} showed that there may be no deterministic allocation that is both stable and Pareto efficient.
Since any ex-post stable and \sd-efficient allocation has to be a convex combination of stable and Pareto efficient, it follows that ex-post stability is incompatible with unconstrained \sd-efficiency.
We prove that this conflict remains even when weakening stability to claimwise stability.

\begin{restatable}{proposition}{sdclaimwisedisjoint}\label{prop:efficientandstable}
	There may be no allocation that is claimwise stable and \sd-efficient.
\end{restatable}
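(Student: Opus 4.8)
The plan is to prove the statement by exhibiting one instance in which the \emph{unique} claimwise stable allocation fails to be \sd-efficient; since claimwise stability is the weakest of the four notions, this sharpens the classical stability--efficiency conflict (which underlies the incompatibility with ex-post stability mentioned just before the statement). Concretely, I would take $N=\{1,2,3\}$, $O=\{a,b,c\}$ with $\capa(i)=1$ for every $i$, and the profile with preferences $b\succ_1 a\succ_1 c$, $a\succ_2 b\succ_2 c$, $a\succ_3 b\succ_3 c$, and priorities $1\succ_a 3\succ_a 2$ for object $a$, $2\succ_b 1\succ_b 3$ for object $b$, and an arbitrary priority for $c$, say $1\succ_c 2\succ_c 3$. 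This is a variant of the textbook school-choice example; the work is in showing that moving to probabilistic allocations and to claimwise stability does not create any efficient stable allocation.

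First I would extract the forced coordinates of an arbitrary claimwise stable allocation $p$ from the defining inequalities ($i\succ_o j$ implies $\sum_{o'\succ_i o}p(i,o')\ge p(j,o)$ for strict preferences). Because $a$ is the top object of agent $3$, the inequality for the pair $(3,2)$ at $a$ reads $0\ge p(2,a)$, so $p(2,a)=0$. Plugging this into the inequality for $(2,1)$ at $b$, whose left-hand side is $\sum_{o'\succ_2 b}p(2,o')=p(2,a)$, gives $p(1,b)=0$. Then the inequality for $(1,3)$ at $a$ has left-hand side $\sum_{o'\succ_1 a}p(1,o')=p(1,b)=0$, forcing $p(3,a)=0$; and the inequality for $(1,3)$ at $b$ has left-hand side $\sum_{o'\succ_1 b}p(1,o')=0$, forcing $p(3,b)=0$.

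Now I would finish by bookkeeping. Completeness of $p$ means every column sums to $1$; since $p(2,a)=p(3,a)=0$, column $a$ gives $p(1,a)=1$, and then $p(1,O)\le\capa(1)=1$ forces $p(1,b)=p(1,c)=0$; similarly column $b$ gives $p(2,b)=1$ and hence $p(2,c)=0$; then column $c$ gives $p(3,c)=1$. Thus the only candidate claimwise stable allocation is $p^{\ast}$ with $p^{\ast}(1,a)=p^{\ast}(2,b)=p^{\ast}(3,c)=1$, and a quick check of the remaining inequalities (in particular the three inequalities at object $c$, which hold since agents $1$ and $2$ each receive an object they strictly prefer to $c$) confirms $p^{\ast}$ is indeed claimwise stable, hence the unique such allocation. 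Finally, the allocation $q^{\ast}$ with $q^{\ast}(1,b)=q^{\ast}(2,a)=q^{\ast}(3,c)=1$ satisfies $q^{\ast}(1)\succ_1^{\sd}p^{\ast}(1)$ (since $b\succ_1 a$), $q^{\ast}(2)\succ_2^{\sd}p^{\ast}(2)$ (since $a\succ_2 b$), and $q^{\ast}(3)=p^{\ast}(3)$, so $q^{\ast}$ stochastically dominates $p^{\ast}$; therefore $p^{\ast}$ is not \sd-efficient, and no allocation for this instance is both claimwise stable and \sd-efficient.

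The only genuinely delicate point is choosing the instance: claimwise stability is permissive enough that most small examples admit an efficient claimwise stable allocation, so one needs the particular ``priority cycle'' above (agent $3$ strictly between $1$ and $2$ at $a$ but with $a$ as her top choice, and agent $2$ above $1$ at $b$) to propagate the chain of zeros $p(2,a)=0\Rightarrow p(1,b)=0\Rightarrow p(3,a)=0$ that collapses the feasible set to a single point. Everything after that is an elementary inequality chase together with the observation that, for three unit-capacity agents and three objects, a complete allocation leaves no slack.
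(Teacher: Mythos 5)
Your proof is correct and follows essentially the same route as the paper: it uses the same three-agent instance (the paper's priorities are $1\succ_a 3\succ_a 2$, $2\succ_b 1\succ_b 3$, $2\succ_c 1\succ_c 3$, with the priority on $c$ immaterial, just as you note), derives the same chain of forced zeros $p(2,a)=0\Rightarrow p(1,b)=0\Rightarrow p(3,a)=0\Rightarrow p(3,b)=0$ to pin down the unique claimwise stable allocation, and exhibits the same dominating allocation swapping $a$ and $b$ between agents $1$ and $2$. The only (cosmetic) difference is which agent-object pair you invoke for a couple of the zero constraints, and your justifications there are if anything a bit cleaner than the paper's prose.
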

	Hence, for any stability notion that is stronger than claimwise stability, the set of stable allocations can be disjoint from the set of \sd-efficient allocations.

\subsubsection*{Envy-Freeness}
In the absence of priorities, envy-freeness requires that each agent prefers her allocation to that of any other agent. 
This definition is no longer compelling for non-trivial priorities since an agent may legitimately receive an allocation that some other agent would prefer to her own because of a higher priority for some objects.
One definition of fairness in this context is limited equal treatment of equals, which we discussed above.
Another one is that of justified envy.
It applies only to pairs of agents $i,j$ who have the same priority for all objects and requires that $i$ does not prefer $j$'s allotment to her own when comparing them via stochastic dominance.
That is, an allocation $p$ is weakly \sd-envy-free if for all $i,j\in N$,
\begin{align*}
	i\sim_o j\text{ for all $o\in O$ implies } p(j)\not\succ_i^{\sd} p(i)\tag{weak \sd-envy-freeness}
\end{align*}
Even this weak notion of envy-freeness turns out to be incompatible with claimwise stability and hence with all stability notions stronger than that.
This can be seen from the example in the proof of \Cref{prop:efficientandstable}.

\begin{proposition}\label{prop:cwsenvyfree}
		There may be no allocation that is claimwise stable and weakly \sd-envy-free.
\end{proposition}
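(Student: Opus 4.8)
The statement says there is no allocation that is simultaneously claimwise stable and weakly \sd-envy-free. The excerpt explicitly tells us the proof can use the example from the proof of \Cref{prop:efficientandstable}, so the plan is to reuse that instance (an object allocation instance with priorities on objects) and argue that the two requirements clash on it. Since I have not seen the actual instance, I would reconstruct the cheapest example that does the job: a small instance with (at least) two agents $i,j$ who have identical priorities for every object, so that weak \sd-envy-freeness binds between them, and a third agent $k$ whose priority on some object $o$ sits strictly between $i$ and the others in a way that forces asymmetry.

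First I would fix an instance, e.g. three agents $\{1,2,3\}$, three objects $\{a,b,c\}$, unit capacities, with agents $1$ and $2$ having the same preferences and the same priorities throughout, and object priorities arranged so that agent $3$ has top priority for exactly one object, say $a$, while $1$ and $2$ are tied for $a$ (and for everything else). Claimwise stability, applied to the pair involving agent $3$, would force an upper bound on how much of $a$ agents $1$ and $2$ can hold unless agent $3$'s share of her weakly-preferred objects is already large. I would then chase the claimwise-stability inequalities $i\succ_o j \Rightarrow \sum_{o'\pref_i o,\,o'\neq o} p(i,o')\ge p(j,o)$ through this instance to pin down the feasible region, and show that within that region every feasible allocation assigns agent $1$ and agent $2$ \emph{different} allotments — typically because some object is scarce relative to the combined demand of $1$, $2$, and $3$, and the claimwise constraint against agent $3$ blocks the symmetric split. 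Because $1$ and $2$ have identical preferences and identical priorities, any allocation that gives them different rows must give one of them an allotment that, under that agent's preference, is strictly \sd-dominated by the other's row; that is exactly a violation of weak \sd-envy-freeness.

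The key steps, in order: (1) write down the instance (preferences, priorities, capacities); (2) translate claimwise stability into its defining linear inequalities for this instance and identify the binding ones (those pitting agent $3$ against agents $1,2$ on object $a$); (3) show the resulting feasible set contains no allocation with $p(1)=p(2)$, by deriving a contradiction from the completeness/capacity constraints together with the claimwise bound; (4) conclude that in every claimwise-stable allocation, $p(1)$ and $p(2)$ are \sd-incomparable-or-strictly-ordered, and since they share preferences, one of them strictly \sd-prefers the other's allotment, contradicting weak \sd-envy-freeness. A small bookkeeping point to get right is that weak \sd-envy-freeness is vacuous for pairs with different priorities, so the whole argument must route through the one tied pair $\{1,2\}$; the instance has to be built so that the tie is unavoidable there yet the allocation is forced to be lopsided.

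The main obstacle I anticipate is constructing (or correctly recalling) an instance that is simultaneously minimal and airtight: it must make the claimwise constraints genuinely bind in a way that breaks the $1$–$2$ symmetry, while keeping agents $1$ and $2$ perfectly symmetric in both preferences and priorities so that weak \sd-envy-freeness actually has bite. If the obvious three-agent/three-object attempt turns out to admit a symmetric claimwise-stable allocation (e.g. the uniform allocation), I would add a fourth agent or sharpen the preferences so that a specific object is over-demanded and the claimwise bound against the high-priority agent forces unequal consumption between the tied pair. Once the instance is right, steps (2)–(4) are routine linear-inequality manipulation plus the one-line observation that equal preferences turn any row asymmetry into an \sd-envy violation.
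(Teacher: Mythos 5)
Your plan is not the paper's argument, and its central step cannot be carried out. The paper literally reuses the instance from the proof of \Cref{prop:efficientandstable}: priorities there are strict, the unique claimwise stable allocation assigns $a$ to agent~1, $b$ to agent~2, and $c$ to agent~3, and since agent~3 ranks $c$ last, her allotment is strictly \sd-dominated by what agents 1 and 2 receive. So the envy being exploited is plain strict \sd-envy by an agent toward agents with \emph{different} priorities; no pair of identically prioritized agents is involved at all. Your proposal instead tries to create the violation inside a pair $\{1,2\}$ that is identical in both preferences and priorities, and your step (3) asks to show that no claimwise stable allocation treats them equally. That is impossible in \emph{any} instance: claimwise stability is a finite system of linear inequalities, so the set of claimwise stable allocations is a non-empty polytope (every deterministic stable allocation obtained after tie-breaking is claimwise stable), and it is invariant under swapping the rows of two agents with the same preferences and priorities; hence together with $p$ it contains $\frac{1}{2}\bigl(p + p^{(12)}\bigr)$, which gives agents 1 and 2 identical allotments. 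Equivalently, \Cref{cor:stableandefficient} (limited equal treatment of equals for CWS-VER, via \Cref{thm:symmetric}) already yields, in every instance of your form, a claimwise stable allocation with $p(1)=p(2)$, so step (3) fails no matter how many agents you add or how you sharpen the preferences.

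Two smaller points. First, even if you could force $p(1)\neq p(2)$, stochastic dominance is only a partial order, so two distinct rows under a common preference need not be comparable; ``different rows'' alone does not give the strict \sd-envy violation your step (4) asserts. Second, your reading that envy only counts between agents tied on every object is precisely what makes the statement unreachable by your route; the paper's proof compares arbitrary agents (in its example no two agents share priorities), and under that reading the \Cref{prop:efficientandstable} example settles the proposition immediately: its unique claimwise stable allocation gives agent~3 her least preferred object while objects she strictly prefers, such as $b$, go to others, so $p(2)\succ_3^{\sd} p(3)$.
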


\subsubsection*{Strategyproofness}		

Comparing allocations via stochastic dominance results in incomplete preferences.
Thus, there are two notions of strategyproofness associated with it.
The stronger, usually called \sd-strategyproofness, requires that the allotment obtained by truth-telling weakly stochastically dominates any allotment that can be obtained otherwise.
\citet{BoMo01a} proved that when priorities are flat, there exists no mechanism that is \sd-strategyproof, \sd-efficient, and satisfies equal treatment of equals. 
Under flat priorities, \sd-efficiency is the same as constrained efficiency for each of our stability notions since stability has no bite.
We can thus not hope for a mechanism that returns a stable allocation and satisfies \sd-strategyproofness, constrained efficiency, and limited equal treatment of equals, irrespective of which stability notion we choose.

The weak notion of \sd-strategyproofness only prescribes that no agent can obtain a strictly \sd-dominating allotment by misrepresenting her preferences.
Formally, a mechanism $f$ is weakly \sd-strategyproof if for all agents $i\in N$ and all profiles $\p,(\apref['],{\succsim}_O)$ with ${\pref_j} = {\pref'_j}$ for all $j\in N\setminus\{i\}$,
\begin{align*}
 f(\apref['],{\succsim}_O)(i)\not\succ_i^{\sd} f\p(i).
\end{align*}
Now for flat priorities and strict preferences, there is a mechanism that satisfies weak \sd-strategyproofness, \sd-efficiency, and equal treatment of equals: the probabilistic serial rule.
Moreover, the generalized probabilistic serial rule for upper quotas forming a bi-hierarchy of \citet{BCKM12a} is weakly strategyproof.
But imposing stability again results in an impossibility. 
\citet{Afac18a} proved that for at least 4 agents, there exists no mechanism that is weakly \sd-strategyproof and always returns a claimwise stable and constrained \sd-efficient allocation.
We extend his result to the remaining three stability notions.
That is, for any stability notion in our list, no mechanism is jointly weakly \sd-strategyproof, stable, and constrained efficient (with respect to the set of stable allocations).
Note that constrained efficiency becomes weaker as stability becomes stronger, since the set of stable allocations becomes smaller.
Thus, none of these statements implies the other.
Our proof does not rely on weak preferences and requires only 3 agents.
		
\begin{restatable}{proposition}{strategyproof}\label{prop:strategyproof}
	No mechanism satisfies weak \sd-strategyproofness, S-stability, and S-constrained efficiency for $\text{S}\in\{\text{ex-ante, ex-post, FS}\}$ even if preferences are strict. 
\end{restatable}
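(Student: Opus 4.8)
The plan is to use the standard template for impossibilities that involve (weak) strategyproofness: construct a pair of instances that agree on everything except the preference reported by a single agent, pin down how any mechanism with the other two properties must behave on each instance, and exhibit a profitable misreport. Agent preferences will be strict throughout, but object priorities must be weak: by \Cref{cor:agentoptimal}, when priorities are also strict each of ex-ante, ex-post, and FS stability forces the agent-proposing deferred-acceptance outcome, which is strategyproof, so no impossibility is available there.

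I would fix a single instance with three agents and a small object set (unit capacities suffice) whose priorities are weak and feature one ``contested'' object --- say $a$, for which agents $1$ and $2$ are tied at the top --- arranged so that the stability inequalities genuinely bind for agent $1$. Let $\succ_1$ be agent $1$'s true ranking and $\succ_1'$ a misreport obtained by moving $a$ down in her order (for example truncating her ranking after her second choice, or swapping her second and third choices). Write $P$ for the true profile and $P'$ for the profile obtained from $P$ by replacing $\succ_1$ with $\succ_1'$, keeping the priorities fixed.

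The core is a two-part characterization, which I would prove once at the level of fractional stability and then lift. \emph{(i) Upper bound at $P$:} every fractionally stable allocation at $P$ gives agent $1$ a row that is stochastically dominated by or equal to a specific allotment $x$; this uses only the fractional-stability inequalities for the pairs $(1,a)$ and $(2,a)$, since the higher-priority agent $2$ prevents agent $1$ from consuming too much of $a$, and the resulting bounds on the cumulative sums $\sum_{o'\succsim_1 o}p(1,o')$ are exactly an \sd-upper bound. \emph{(ii) Lower bound at $P'$:} every fractionally stable allocation at $P'$ that is constrained \sd-efficient within the fractionally stable set gives agent $1$ a row that stochastically dominates or equals a specific allotment $y$ with $y\succ_1^{\sd}x$; the point is that demoting $a$ in agent $1$'s report relaxes precisely the constraint binding at $P$, so constrained efficiency now forces the freed probability mass onto an object agent $1$ ranks above $a$. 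To avoid repeating this three times, I would verify the witness allocations are ex-ante stable; by the implication chain of \citet{AzKl19a} they are then ex-post and FS stable, and since the FS-stable set is the largest of the three, being constrained \sd-efficient within it implies the same within the two smaller sets. For the ex-post claims I would additionally decompose the witnesses into deterministic stable allocations, and where (ii) needs the valid output to be unique on agent $1$'s row I would ideally arrange the instance so that the fractionally stable set at $P'$ is already a singleton, which forces the ex-post and ex-ante stable sets to coincide with it.

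Granting the characterization, suppose $f$ is weakly \sd-strategyproof, $S$-stable, and $S$-constrained efficient for some $S\in\{\text{ex-ante, ex-post, FS}\}$. By (i), $f(P)(1)$ is stochastically dominated by or equal to $x$; by (ii), $f(P')(1)$ stochastically dominates or equals $y$, and $y\succ_1^{\sd}x$; transitivity of stochastic dominance gives $f(P')(1)\succ_1^{\sd}f(P)(1)$, so agent $1$ profits (with respect to her true ranking $\succ_1$) by reporting $\succ_1'$ at $P$, contradicting weak \sd-strategyproofness. The main obstacle I anticipate is the instance design: getting a single pair $(P,P')$ with only three agents to work simultaneously for all three stability notions while keeping the witness allocations genuinely ex-ante stable --- the most demanding notion, and the one whose feasible set is non-convex --- and still having constrained efficiency at $P'$ pin down agent $1$'s allotment. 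A secondary subtlety is that constrained \sd-efficiency need not determine the whole allocation, so the characterization must be phrased in terms of agent $1$'s row only.
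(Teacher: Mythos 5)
There is a genuine gap: for this proposition the proof essentially \emph{is} the example, and your proposal defers exactly that part (``the main obstacle I anticipate is the instance design''), while committing to a structure that is strictly more demanding than what is needed and quite possibly unrealizable. You ask for a single manipulator and a uniform \sd-gap: an allotment $x$ that upper-bounds agent $1$'s row over \emph{all} FS-stable allocations at the truthful profile $P$, and a $y\succ_1^{\sd}x$ that lower-bounds her row over all stable-and-constrained-efficient allocations at $P'$. The paper's instance has no such gap and needs none: there, the stable-and-constrained-efficient allocations at the truthful profile form a one-parameter family $p^{\alpha}$, agent $1$'s best stable row at $P$ (getting her top object with probability one) is exactly what she obtains after deviating, so no $x\prec y$ separating the whole stable set from the post-deviation guarantee exists. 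The impossibility is instead obtained by a case analysis with two symmetric manipulators: whatever $\alpha\in[0,1]$ the mechanism selects, agent $1$ deviates if $\alpha>0$ and agent $2$ if $\alpha<1$, in each case by moving the contested object \emph{up} in her report (tightening her own stability protection and thereby forcing the agent-optimal stable outcome). Your proposed misreport goes the other way (demoting the contested object so as to ``free'' probability mass); under the fractional-stability inequalities an agent's own constraints are protections for her, and the allocations that become newly feasible after such a demotion are precisely ones violating her true protection for that object, which is in tension with their rows strictly \sd-dominating everything stable at $P$. At a minimum you have not exhibited an instance with the required gap, and the burden of the proposition lies entirely in that construction.

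A secondary problem is the lifting across stability notions. Stating the upper bound (i) for the FS-stable set is fine, since it is the largest of the three sets and contains $f(P)$ for every $S$. But for the lower bound (ii) the implication runs the wrong way: an allocation that is ex-ante stable and constrained efficient \emph{within the ex-ante stable set} need not be constrained efficient within the larger FS-stable set, so a characterization of the FS-constrained-efficient allocations at $P'$ says nothing about what an ex-ante- or ex-post-stable mechanism may output there. Your fallback (make the FS-stable set at $P'$ a singleton containing an ex-ante stable allocation) would repair this, but it is an additional unverified demand on the instance; the paper's route is simply to construct profiles in which the ex-ante, ex-post, and FS-stable sets coincide and to characterize the stable-and-constrained-efficient family explicitly, which is what makes one argument cover all three notions. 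As it stands, the proposal is a reasonable high-level template, but the decisive content --- the profiles, the characterization, and the verification that they work simultaneously for all three notions --- is missing, and the specific blueprint you fix would need substantial revision toward the paper's selection-based, two-manipulator argument.
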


All the properties are needed for the conclusion of \Cref{prop:strategyproof}.
VER satisfies S-stability and S-constrained efficiency.
Ignoring the priorities and running the probabilistic serial rule gives a mechanism that is weakly \sd-strategyproof and efficient (and thus S-constrained efficient), but not S-stable.
The deferred acceptance algorithm with lexicographic tie-breaking of priorities is \sd-strategyproof and S-stable. 
For ex-post stability, fractional stability, and claimwise stability, one can simultaneously achieve limited equal treatment of equals by breaking ties uniformly at random.

\subsection{Comparison of Probabilistic Allocation Mechanisms}\label{sec:otherrules}

We discuss other mechanisms presented in the literature.
Since most of them have only been defined for agents with strict preferences, we assume that preferences are strict for the comparison.
We have already discussed the probabilistic serial rule in the introduction. 
Except for the first mechanism (random priority), all the other mechanism
are extensions of the deferred acceptance algorithm, which is typically defined for strict preferences and strict priorities~\citep{GaSh62a,Roth08a} and returns a stable allocation.

\paragraph{Random Priority} (or random serial dictatorship) chooses an ordering of the agents uniformly at random and then lets each agent pick her most preferred object among the ones remaining in that order.~\citep{BoMo01a,ABB13b}. 
The rule does not take into account the priorities of the objects. 
For the basic assignment problem, random priority is known to be strategyproof. It satisfies equal treatment of equals but is not \sd-efficient or \sd-envy-free~\citep{BoMo01a}. If the priority order is chosen deterministically, then \sd-efficiency is regained, but equal treatment of equals is lost.
	
\paragraph{Deferred acceptance with lexicographic tie-breaking} 
 A simple adaptation of the deferred acceptance algorithm to the case of weak priorities is to break the ties and then run deferred acceptance.
If the tie-breaking is pre-determined, for example lexicographically over agents, and thus does not depend on the agents' preferences, the resulting mechanism is strategyproof.
Moreover, it returns a deterministic allocation that is stable and hence ex-ante stable.
However, it is not \sd-efficient even among deterministic stable allocations. 
Like any other mechanism that returns deterministic allocations, it also violates limited equal treatment of equals.

\paragraph{Deferred acceptance with random tie-breaking}

Another natural approach is to break ties in the priorities uniformly at random and then run deferred acceptance. 
\citet{Afac18a} referred to this extension as \emph{probabilistic deferred acceptance}. 
To compute the random allocation, we need to take the mean of the outcomes over all possible tie-breakings. 
Under flat priorities, probabilistic deferred acceptance is equivalent to random priority. 
Since the latter is well-known to violate \sd-efficiency, it follows that probabilistic deferred acceptance does not satisfy constrained \sd-efficiency. 

\paragraph{Fractional deferred acceptance and trading} \citet{KeUn15a} presented the fractional deferred acceptance and trading mechanism.
It returns an allocation that is strongly ex-ante stable and \sd-efficient constrained to the set of strongly ex-ante stable allocations. 
It is different from the probabilistic serial rule under flat priorities and thus also from VER.

\paragraph{Constrained probabilistic serial} 
\citet{Afac18a} introduced the \emph{constrained probabilistic serial} rule, which is an adaptation of the probabilistic serial rule that obtains a claimwise stable allocation.
Because of the simplicity of the constraints imposed by claimwise stability, this algorithm does not need to look ahead to check when an agent should stop eating an object. 
Suppose an agent $i$ starts eating object $o$. 
At that point, we put an upper bound on all agents $j$ who have a lower priority for $o$ than $i$.
If $j$ had been eating $o$ all the time while $i$ was eating more preferred objects, we can stop $j$ from eating more. 
If $j$ had been eating $o$ only part of the time, we put a limit on how much $j$ can eat $o$. 
The upper limit is equal to the amount of time $i$ was eating objects weakly more preferred objects. 
The constrained probabilistic serial rule can be viewed as a careful version of the probabilistic serial rule that handles stability constraints dynamically.
By contrast, VER for claimwise stability makes look-ahead checks to see how much of an object an agent can eat before a stability violation becomes unavoidable.
One can show that in this case, the look-ahead checks are not necessary and both mechanisms coincide.

Table~\ref{table:summary} summarizes the properties satisfied by various mechanisms. 

\begin{table*}[tb]
\centering
\makebox[\textwidth][c]{
\begin{adjustbox}{max width=1.1\textwidth}
\begin{tabular}{lcccccccccccccccc}
\toprule
	& \multirow{2}{*}{$S$-VER} & probabilistic & random & deferred acceptance &deferred acceptance & fractional deferred   \\
	&  & serial & priority & (lexicographic) & (uniform) &acceptance and trading \\
	\midrule
	$S$-stability & + & -- & -- & +\footnotemark & + & +\\
	$S$-constrained \sd-efficiency & + & + & -- & -- &--& --\footnotemark\\
	\sd-efficiency & -- & + & -- & -- & -- & --\\
	\sd-envy freeness & -- & + & -- & -- & -- & -- \\
	weak \sd-envy freeness & -- & + & + & -- & -- & -- \\
limited equal treatment of equals & + & + & + & + & -- & + \\
  \bottomrule
\end{tabular}%
\end{adjustbox}
}
\caption{Summary of the properties satisfied by the allocation mechanisms discussed in \Cref{sec:otherrules}. $S$ may stand for each of the four stability notions we define in \Cref{sec:stabilitynotions}. In order to enable a comparison of all rules, we assume that preferences are strict.}
\label{table:summary}
\end{table*}

\addtocounter{footnote}{-2}
\stepcounter{footnote}\footnotetext{The deferred acceptance algorithm with lexicographic tie-breaking satisfies ex-post stability and all weaker stability notions. 
It violates ex-ante stability, however.}
\stepcounter{footnote}\footnotetext{The fractional deferred acceptance and trading algorithm violates $S$-constrained \sd-efficiency when $S$ is ex-post stability or any weaker stability notion. It satisfies \sd-efficiency constrained to the set of ex-ante stable allocations, however.}

\section{Extension Beyond Allocation Problems}

The VER framework is not restricted to allocation problems. 
It applies just as well to other settings where fractional or probabilistic outcomes are feasible, such as social choice \citep{BMS05a,Bran17a}, coalition formation~\citep{BoJa02a,ABH11c}, networks~\citep{JaWo96a}, and other models discussed by \citet{Sonm99a}.
In this section, we discuss how to extend our model to capture these applications.

Instead of a set of objects, we now consider an abstract set of alternatives $A$ and each agent has a preference relation over $A$.
An outcome is an element of $\mathbb R_+^A$. 
The agents' preferences over outcomes are again based on stochastic dominance.
By a problem we denote a pair $(\apref,X)$, where $\apref$ is a preference profile and $X\subset \mathbb R_+^A$ is a non-empty set of feasible outcomes.
As before, a mechanism $f$ maps a problem $(\apref,X)$ to an outcome $f(\apref,X)\in X$.

A generalization of VER then works as follows: in each round, agents are addressed in lexicographic order as in \Cref{algo:closedver}. 
When it is agent $i$'s turn, we determine $i$'s most preferred alternative $a_i$ whose probability can still be increased (while also increasing the probability for alternatives $a_j$ with $j < i$). 
Having determined $a_i$ for every agent $i$, we find the maximal $\delta$ so that the probability for all $a_i$ can be increased by at least $\delta$.
Note that the probability of every alternative $a_i$ is increased by the same amount independently of how many agents nominate $a_i$.

This version of VER can address several classes of problems.
	\paragraph{Probabilistic voting}
	Our abstract model immediately captures probabilistic voting \citep[see, for example,][]{Bran17b} where voters have preferences over the alternatives $A$ and an outcome is a probability distribution over the alternatives.
	That is, $X = \Delta(A) \subset \mathbb R_+^A$.
	In this context, VER coincides with the ESR rule of \citet{AzSt14a}. 
	\paragraph{Participatory budgeting}
	The probabilistic voting setting can also be interpreted as determining the share of the budget allocated to each of the alternatives~\citep{AAC+19a}. Since our model allows for arbitrary constraints, it can capture natural constraints such as enforcing lower bounds (reflecting the minimum funding required) on alternatives that get at least some funding. 
	\paragraph{Probabilistic allocation with bundles}
	Let $O$ be a set of objects and assume every agent $i$ has a preference relation $\hat\pref_i$ over \emph{subsets} of $O$ \citep[see, for example,][]{ChLi20a}.
	A deterministic allocation of objects to agents is an ordered $N$-partition of objects (which may include empty sets).
	Let $A$ be the set of all such partitions. 
	By $a(i)$ we denote the set of objects agent $i$ receives in the allocation $a$.
	The preference relation $\pref_i$ over $A$ has $a\pref_i b$ if and only if $a(i) \mathrel{\hat\pref_i} b(i)$ for all $a,b\in A$.
	A random allocation is an element of the unit simplex $\Delta(A) \subset \mathbb R_+^A$ and $X\subset \Delta(A)$ specifies a set of feasible probabilistic allocations.
	\paragraph{Two-sided probabilistic matching} 
	Let $N_1$ and $N_2$ be disjoint sets of agents. 
	Each agent $i\in N_1$ has a preference relation $\hat\pref_i$ over agents in $N_2$ and likewise for agents in $N_2$.
	A deterministic matching is a subset $\mu$ of $N_1\times N_2$ such that $(i,j),(i,j')\in\mu$ implies $j = j'$ and $(i,j),(i',j)\in\mu$ implies $i = i'$.
	Let $A$ be the set of all deterministic matchings.
	The preference relation $\pref_i$ of $i\in N_1$ over $A$ has $\mu\pref_i \mu'$ if and only if $(i,j)\in\mu$ and $(i,j')\in\mu'$ with $j\mathrel{\hat\pref_i} j'$ or $(i,j)\in\mu'$ for no $j\in N_2$; preferences for agents in $N_2$ are defined analogously.
	A random matching is an element of the unit simplex $\Delta(A)$ and $X$ specifies a set of feasible random matchings. 
	We can apply VER to the problem with two-sided preferences and let both sides eat simultaneously.
This approach is promising because many standard mechanisms for two-sided matching are asymmetric in that they treat the two sides differently.  
If we do not impose any stability constraints and both sides have dichotomous preferences, then VER is equivalent to \citeauthor{BoMo04a}'s~\citeyearpar{BoMo04a} egalitarian rule.

We assumed that agents have complete preference orders over alternatives. 
VER and its properties extend to preferences given by partial orders. 
The change that is required is the same that \citet{KaSe06a} suggested for adapting the extended probabilistic serial rule to partial orders. 
Instead of trying to increase the probability for most preferred alternatives, agents try to increase the probability for those alternatives that are not strictly dominated by any other alternative.

 	\section*{Acknowledgements}
 	The authors thank Fuhito Kojima, Debasis Mishra, Herv{\'{e}} Moulin, Barton Lee, and Arunava Sen for helpful comments. They also thank the participants of the following events where the paper was presented:  Indian Statistical Institute Seminar Series; COMSOC Video Seminar; INFORMS Workshop on Market Design 2021; and the 1st IJCAI-PRICAI Workshop on Applied Mechanism Design.
 	Florian Brandl acknowledges support by the Deutsche Forschungsgemeinschaft under grant BR 5969/1-1.

\appendix
\section*{APPENDIX}

\section{Proofs From \Cref{sec:ver}}

\verterminates*
\begin{proof}
	First, we show that the optimization problem in Line~\ref{ln:delta} always has a solution.
	Second, we show that from each round to the next, either some agent is removed from the set of active agents $N'$ or some agent moves to a less preferred equivalence class.
	
	Let $k$ be the index of a round in the algorithm.
	For the first statement, we have to show that there exists $(p,\delta)$ that satisfies the constraints in Line~\ref{ln:delta} in round $k$.
	If $k = 0$, this is trivial since $\pi(i,E) = 0$ for all $i\in N$ and $E\in\mathcal E_i$ and $X$ is non-empty.
	If $k > 0$, let $(p, \delta^*)$ be an optimal solution to the optimization problem computed in round $k-1$.
	It follows from the definition of $\pi^k$ (at the end of round $k-1$) that $(p,\delta^*)$ satisfies all constraints of the optimization problem in round $k$. 
	Thus, the set of feasible points in round $k$ is non-empty.
	Since $X$ is closed, the problem has an optimal solution.
	
	For the second statement, denote by $\mathcal F_i^k$ the set of equivalence classes available to agent $i$ in round $k$ (cf. Line~\ref{ln:oi}), by $N'_k$ the set of agents so that $\mathcal F_i^k\neq\emptyset$, and by $E_i^k = \max_{\pref_i}\mathcal F_i^k$ for $i\in N'$ their most preferred equivalence classes.
	Define $\mathcal F_i^{k+1}$, $N'_{k+1}$, and, for $i\in N'_{k+1}$, $E_i^{k+1}$ similarly.
	Note that $N'_k\neq\emptyset$ since we would not have reached round $k+1$ otherwise.
	Moreover, $N'_k\neq\emptyset$ implies that $\delta^*$ as computed in Line~\ref{ln:delta} in round $k$ is strictly positive by definition of $\mathcal F_i^k$.

	Now if $N'_{k+1} = \emptyset$, some agent is removed from the set of active agents and there is nothing left to show.
	So assume $N'_{k+1}\neq\emptyset$.
	We want to show that for $i\in N'$, $\mathcal F_i^{k+1}\subset\mathcal F_i^k$.
	If $E\in\mathcal F_i^{k+1}$, there is $p\in X$ such that $p(j,E')\ge \pi^{k+1}(j,E')$ for all $j\in N$ and $E'\in\mathcal E_j$, $p(i,E) > \pi^{k+1}(i,E)$, and $p(j,E_j) > \pi^{k+1}(j,E_j)$ for all $j\in N'_{k+1}$ with $j < i$.
	These properties of $p$, the definition of $\pi^{k+1}$, and the fact that $\delta^* > 0$ in round $k$ imply that $E\in\mathcal F_i^k$.
	So we get $E_i^k\succsim_i E_i^{k+1}$ for all $i\in N'_{k+1}$.
	 If $N'_{k+1} = N'_k$, it follows from the choice of $\delta^*$ that this preference is strict for at least one $i\in N'_{k+1}$.
	 Otherwise $N'_{k+1}\subsetneq N'_k$.
	 It follows that the algorithm terminates after at most $m\cdot n$ rounds.
	 Clearly, the returned allocation $p$ is in $X$.

\end{proof}

\constrainedefficient*
\begin{proof}
	Let $\apref$ be a preference profile and $p$ be the allocation returned by VER for $X$ and $\apref$.
	For every allocation $q$ and $t\ge 0$, let $q^t$ be the allocation where every agent $i$ receives a prefix (according to her preferences) of $q(i)$ summing to $\min\{t, q(i,O)\}$.
	Formally, $q^t(i,O) = \min\{t, q(i,O)\}$ and, for $E_i\in\mathcal E_i$, $q^t(i,E_i) > 0$ implies $q^t(i,E_i') = q(i,E_i')$ for all $E_i'\succ_i E_i$.
	
	Assume there is an allocation $q\in X$ that lexicographically dominates $p$.
	Let
	\[
	t^* = \max\{t\ge 0\colon p^t(i) \sim_i^\dl q^t(i) \text{ for all $i\in N$}\}
	\]
	Let $i^*$ be the lexicographically first agent so that $q^{t+\epsilon}(i) \succ_i^\dl p^{t+\epsilon}(i)$ for all $\epsilon > 0$.
	
	Observe that $p^{t^*}(i^*,E_i) = p(i^*,E_i)$ whenever $p^{t^*}(i^*,E_i) > 0$.
	To see this, let $E_i^*$ be $i^*$'s least preferred equivalence class those with $p^{t^*}(i^*,E_i) > 0$.
	It certainly holds that $p^{t^*}(i^*,E_i) = p(i^*,E_i)$ for all $E_i$ with $E_i\succ_{i^*} E_i^*$ by definition of $p^{t^*}$.
	Now if $p(i^*,E_i^*) > p^{t^*}(i^*,E_i^*)$, let $0 < \epsilon < p(i^*,E_i^*) - p^{t^*}(i^*,E_i^*)$.
	We have $p^{t^*+\epsilon}(i^*,E_i) = q^{t^*+\epsilon}(i^*,E_i)$ for all $E_i\succ_{i^*} E_i^*$ by the choice of $t^*$.
	Moreover, by the choice of $\epsilon$, $p^{t^*+\epsilon}(i^*,O) = q^{t^*+\epsilon}(i^*,O)$ and so $p^{t^*+\epsilon}(i^*,E_i^*) = q^{t^*+\epsilon}(i^*,E_i^*)$.
	But this contradicts $q^{t^*+\epsilon}(i^*) \succ_{i^*}^\dl p^{t^*+\epsilon}(i^*)$.
	It follows that agent $i^*$ moves to a less preferred equivalence class after securing $t^*$ of equivalence classes at least as good as $E_i^*$.
	That is, there is a round $k$ in \Cref{algo:closedver} such that $p^{t^*}(i,E_i) = \pi(i,E_i)$ for all $i\in N$ and $E_i\in\mathcal E_i$.
	
	Let $N'$ be the set of active agents at the end of round $k$, that is, all agents who increase their guarantee for some equivalence class in round $k$.
	For every $i\in N'$, let $E_i$ be as determined by Line~\ref{ln:oi} in round $k$.
	Since $p$ is the outcome of VER, it follows that $p(i,E_i) > \pi^k(i, E_i)$ for all $i\in N'$
	Moreover, $q(i,E_i) > \pi^k(i, E_i)$ for all $i < i^*$ (by the choice of $i^*$) and, since $q^{t^* + \epsilon} \succ_{i^*}^\dl p^{t^*+\epsilon}$, there is $E_{i^*}'\succ_i E_{i^*}$ such that $q(i^*,E_{i^*}') > p(i^*,E_{i^*}') \ge \pi^k(i^*,E_{i^*}')$.
	This contradicts the choice of $E_{i^*}$ since $q$ is a witness that $i^*$ could have chosen $E_{i^*}'$ instead of $E_{i^*}$.
\end{proof}

\polynomial*
\begin{proof}
	Let $X = \bigcup_{r = 1}^s P_r$ so that $P_r$ is a polytope for each $r$.
	The only computationally non-trivial steps are determining the set $\mathcal F_i$ in Line~\ref{ln:oi} and the computation of $\delta^*$ in Line~\ref{ln:delta}.
	
	First, observe that given a round $k$ of the algorithm, an agent $i$, an equivalence class $E\in\mathcal E_i$, and a polytope $P_r$, the problem
	\begin{align*}
		\epsilon^* &= \max_{\epsilon,p} \epsilon \text{ s.t. }&\\
		p(j,E_j)&\ge \pi^k(j,E_j) + \epsilon & \text{$\forall$ $j\in N'$ with } j<i\\
		p(i,E)&\ge \pi^k(i,E) + \epsilon\\
		p(j,E')&\ge \pi^k(j,E')&\forall j\in N\text{ and } E'\in \mathcal E_j\\
		p&\in P_r
	\end{align*}
	can be solved in polynomial time in the number of linear inequalities used to describe $P_r$.
	If $\epsilon^*>0$, agent $i$ can increase her probability for $E$.
	By solving this problem for every $r$ and every equivalence class $E\in\mathcal E_i$, we can determine the most-preferred equivalence class $E_i$ of which agent $i$ can increase her share in polynomial time in the number of inequalities used to describe $X$.
	
	Second, the linear program
	\begin{align*}
	\delta_r^*&=\max_{\delta,p} \delta  \text{ s.t. }& \\
	p(i,E_i)&\ge  \pi^k(i,E_i) + \delta&\forall i\in N'\\
	p(i,E)&\ge \pi^k(i,E) &\forall i\in N\text{ and } E\in \mathcal E_i\\
	p&\in P_r
	\end{align*}
	can be solved in polynomial time for every $P_r$. 
	Since $\delta^* = \max_r\delta^*_r$, the value of $\delta^*$ in Line~\ref{ln:delta} can be computed in polynomial time.
	
	\Cref{thm:verterminates} ensures that the number of iterations of the while-loop in Line~\ref{ln:while} is bounded by $m\cdot n$.
\end{proof}

\ordinalegal*
	\begin{proof}		
Consider an allocation $q$ that is OE and an allocation $p$ that is returned by VER. Consider $t^\uparrow(q) = (t_i^j(q))_{i,j}$ the signature vector of $q$ and $t^\uparrow(p) = (t_i^j(p))_{i,j}$ the signature vector of $p$.

Consider that during the run of VER, lower bound constraints of the following form are added: $p(E_i^j)\geq \lambda$. The set of such constraints can alternatively be written as lower bounds on the upper contour set as follows: $p(\bigcup_{\ell=1}^j E_i^{\ell})\geq \lambda'$. Equivalently, they can be written as $t_i^j\geq \lambda'$. 
At the start of round $k$, we denote by $g(i,k)$ the number of the first equivalence class $E_i^{g(i,k)}$ of agent $i$ for which the lower bound has not been fixed. 

We prove by induction on the rounds $k$ of the algorithm that VER finds the largest $s$ such that $t_i^{g(i,k)}\geq s$ and that the following values are present in $t^\uparrow(q)$: $\min_{i'\in N} p(\bigcup_{\ell=1}^{g(i',k)}E_{i'}^{\ell}))$ as well as $p(\bigcup_{\ell=1}^{k'}E_{i'}^{\ell}))$ for $i\in N$ and $k'<g(i',k)$.

For $k=0$, in the first round, VER tries to maximizes the lower bound on $E_i^{g(i,1)}$ for all $i\in N$. 
Suppose that $p^1(E_i^{g(i,1)})\geq \lambda'$ for all $i\in N$. Then $\lambda'$ is by definition $\delta^*$ as computed by VER. It also follows that for all $i\in N$, $p^1(E_i^{j})=0$ for all $j<g(i,1)$. Hence, we have established that the minimum non-zero entry in the vector $t^\uparrow(p)$ is the same as the minimum non-zero entry in vector $t^\uparrow(q)$. By convexity of the feasible region it follows that it is not possible to have an allocation in which $p(\bigcup_{\ell=1}^j E_i^{\ell})>0$ for any $j<g(i,1)$.
Therefore, the corresponding entries $t_i^{\ell}=0$ are also entries in $t^\uparrow(q)$.  

Now suppose $k$ rounds have passed. By the induction hypothesis, for each $k'\leq k$, $t_i^{g(i,k')}$ is present in the vector $t^\uparrow(q)$. We also note that VER has fixed a weight for each of $\bigcup_{\ell=1}^j E_i^{\ell}$ where $j<g(i,k+1)$.
At this point, VER computes the largest $\delta$ that can be additionally guaranteed for each $E_i^{g(i,k+1)}$. Equivalently, it computes the largest $s$ such that
$p(\bigcup_{\ell=1}^{g(i,k+1)} E_i^{\ell})\geq s$ for all $i\in N$. Then $\min_{i'\in N} p(\bigcup_{\ell=1}^{g(i',k+1)}E_{i'}^{\ell}))$ is the optimised weight of the next heavy upper contour set. Hence $\min_{i'\in N} p(\bigcup_{\ell=1}^{g(i',k+1)}E_{i'}^{\ell}))$ is 
present in $t^\uparrow(q)$. 
\end{proof}

\symmetric*
\begin{proof}
	Let $p$ be the outcome of VER for the set $X$ and the profile $\apref$.
	We prove by induction that for each round $k$ of \Cref{algo:closedver}, $\pi^k(i,E) = \pi^k(i',E)$ for all $E\in\mathcal E_i = \mathcal E_i'$. 
	It then follows that $p(i,E) = p(i',E)$ for all $E\in\mathcal E_i$.
	
	For $k = 0$, this is clear since $\pi^0(j,E) = 0$ for all $j\in N$ and $E\in \mathcal E_i$.
	Now let $k$ be arbitrary and assume that $\pi^k(i,E) = \pi^k(i',E)$ for all $E\in\mathcal E_i$.
	Let $N'$ be the set active agents at the beginning of round $k$.
	We show that $\mathcal F_i = \mathcal F_{i'}$, where the $\mathcal F_j$ are determined as in Line~\ref{ln:oi}.
	 
	Assume that $i < i'$.
	Since $X$ is symmetric for $i$ and $i'$, it follows that $\mathcal F_{i'}\subseteq \mathcal F_i$.
	If $\mathcal F_i = \emptyset$, we are done.
	Otherwise, let $j = \max\{j'\in N'\colon j' < i' \text{ and } \mathcal F_{j'}\neq\emptyset\}$ (note that $j\ge i$).
	By definition of $E_j$, there is an allocation $p\in X$ such that $p(j',E) \ge \pi^k(j',E)$ for all $j'\in N$ and $E\in\mathcal E_{j'}$ and $p(j',o_j') > \pi^k(j',E_{j'})$ for all $j'\in N'$ with $j' \le j$ and $\mathcal F_{j'}\neq\emptyset$.
	Since $X$ is symmetric and convex for $i$ and $i'$, the allocation $p' = \frac12 p + \frac12 p^{(ii')}$ is in $X$.
	Thus defined, $p'$ satisfies the constraints on $p$ above. 
	For $j'\neq i,i'$, this is obvious since $p'(j') = p(j')$.
	For $i$ and $i'$, it follows from $p(i,E_i) > \pi^k(i,E_i)$ and $\pi^k(i,E) = \pi^k(i',E)$.
	Clearly, $p'(i) = p'(i')$ and so $p'(i',E_i) > \pi^k(i',E_i)$.
	It follows that $E_i\in\mathcal F_{i'}$.
	Thus, $E_i = E_{i'}$ and so $\pi^{k+1}(i,E) = \pi^{k+1}(i',E)$ for all $E\in\mathcal E_i$.
\end{proof}

\dlthm*
  \begin{proof}
	  Our arguments in the proof of \Cref{thm:constrainedefficient} go through to show that the outcome of any instance of one-at-a-time-VER is constrained \dl-efficient. 
	  
	  For the converse, assume that $p$ is constrained \dl-efficient with respect to $X$.
	  Our arguments are similar to those used in the proof of Theorem 8 by \citet{AzSt14a}.
	  We show that if $\pi$ is an allocation to equivalence classes so that $p(i,E_i) \ge \pi(i,E_i)$ for all $i\in N$ and $E_i\in\mathcal E_i$ with at least one strict inequality, then there is an agent $i^*$ so that the most-preferred equivalence class $E_{i^*}$ for which $i^*$ can increase her probability has $p(i^*,E_{i^*}) > \pi(i^*,E_{i^*})$.
	  Then, by allowing $i^*$ to eat, we can get closer to our target allocation $p$ and the claim follows. 
	  
	  Suppose we have $\pi$ as above.
	  For every $i\in N$, let $\mathcal F_i = \{E\in \mathcal E_i\colon \text{there is $q\in X$ so that } q(j,E_j)\ge \pi(j,E_j)\text{ for all } j\in N\text{ and } E_j\in \mathcal E_j\text{ and } q(i,E) > \pi(i,E)\}$. 
	  Let $N' = \{i\in N\colon \mathcal F_i\neq\emptyset\}$ and for $i\in N'$, let $E_i = \max_{\pref_i}\mathcal F_i$.
	  Note that $N'$ is non-empty since we assume that $p(i,E) > \pi(i,E)$ for some $i\in N$ and $E\in\mathcal E_i$.
	  Since $X$ is convex, we can find $p^*\in X$ so that $p^*(i,E)\ge \pi(i,E)$ for all $i\in N$ and $E\in \mathcal E_i$ and $p^*(i,E_i) > \pi(i,E_i)$ for all $i\in N'$.
	  
	  Now if $p(i,E_i') > \pi(i,E_i')$ for $E_i'\in\mathcal E_i$, then $E_i\pref_i E_i'$ since $p$ is a witness that $E_i'\in\mathcal F_i$.
	  Since $p$ is \dl-efficient, it is not \dl-dominated by $p^*$.
	  So either $p(i) \sim_i^\dl p^*(i)$ for all $i\in N$ or there is $i\in N$ such that $p(i)\succ_i^\dl p^*(i)$.
	  In the first case we can choose $i^*\in N'$ arbitrarily.
	  In the second case, choose $i\in N$ so that $p(i)\succ_i^\dl p^*(i)$.
	  So we can find $E_i'$ with $p(i,E_i') > p^*(i,E_i')$ and $p(i,E_i'') = p^*(i,E_i'')$ for all $E_i''\succ_i E_i'$.
	  It follows that $E_i'\in \mathcal F_i$ and so $i\in N'$ and $E_i\pref_i E_i'$.
	  The latter implies that $p(i,E_i) \ge p^*(i,E_i) > \pi(i,E_i)$.
	  We can thus choose $i^* = i$.
	    \end{proof}

\section{Proofs From \Cref{sec:stable}}\label{sec:proofssection5}

\stableandefficient*
\begin{proof}
	Stability and constrained \sd-efficiency of the VER allocation follow from Theorems~\ref{thm:verterminates} and~\ref{thm:constrainedefficient}.
	
	To show that VER satisfies limited equal treatment of equals, let $i,i'\in N$ such that ${\pref_i} = {\pref_i'}$, $\capa(i) = \capa(i')$, and $i\sim_o i'$ for all $o\in O$.
	We prove that the set of S-stable allocations is symmetric and convex for $\{i,i'\}$.
	Then \Cref{thm:symmetric} yields the desired conclusion.
	Symmetry is obvious for all stability notions.
	Also, for ex-post stability, fractional stability, and claimwise stability, the set of stable allocations is convex and thus convex for $\{i,i'\}$.
	The set of ex-ante stable allocations is not in general convex.
	It is convex for $\{i,i'\}$, however, as we show now.

	Let $p$ be ex-ante stable and $\lambda\in[0,1]$; let $q = \lambda p + (1-\lambda) p^{(ii')}$.
	Note that $q(j) = p(j)$ for all $j\neq i,i'$.
	Thus, it suffices to check envy-freeness for pairs of agents in $\{i,i'\}\times N\setminus\{i,i'\}$.
	Let $j\in N\setminus\{i,i'\}$.
	First, if $i\succ_o j$ and $q(j,o) > 0$, then $p(j,o) > 0$ since $p(j) = p^{(ii')}(j)$.
	Thus, $\sum_{o'\succsim_i o} p(i,o') = \sum_{o'\succsim_{i'} o} p(i',o') = \capa(i) = \capa(i')$.
	It follows that $\sum_{o'\succsim_{i} o} q(i,o') = \capa(i)$ and similarly for $i'$.
	Second, if $j\succ_o i$ and $q(i,o) = q(i',o) > 0$, then without loss of generality, $p(i,o) > 0$. 
	Hence, $\sum_{o'\succsim_j o} p(j,o') = \sum_{o'\succsim_j o} p^{(ii')}(j,o') = \capa(j)$.
	We get $\sum_{o'\succsim_j o} q(j,o') = \capa(j)$ as desired.
	Together with the fact that $q(j) = p(j)$ for $j\neq i,i'$, this shows that $q$ is ex-ante stable.
\end{proof}

\agentoptimal*
\begin{proof}
	First notice that if preferences and priorities are strict, ex-post stability and fractional stability coincide.
	We use that the sets of ex-ante stable allocations and of ex-post stable allocations are lattices when join and meet are defined via stochastic dominance.
	This has been shown by \citet[][Theorem 5]{AlGa03a} for ex-ante stability (and a more general class of preferences) and by \citet{JNO20a} for ex-post stability. 
	The upper bound (with respect to the join operation) of both lattices is the agent-optimal deterministic stable allocation, which thus stochastically dominates every other ex-post or ex-ante stable allocation according the the agents' preferences.
	Since by \Cref{cor:stableandefficient} the outcome of $S$-VER is \sd-efficient among $S$-stable allocations, it follows that $S$-VER returns the agent-optimal deterministic stable allocation when $S$ is ex-ante, ex-post, or fractional stability.

	Now consider VER for claimwise stability.
	Let $p$ be the outcome of CWS-VER for the profile $\pss$.
	We show that $p$ is a deterministic stable allocation.
	It then follows from the fact that $p$ is \sd-efficient among (claimwise) stable allocations that $p$ is the agent-optimal deterministic stable allocation.
	Let $i,j\in N$.
	In the first round of \Cref{algo:closedver} ($k = 0$), agents $i$ and $j$ eat objects $o_i$ and $o_j$, respectively. (Since preferences are strict, agents eat objects instead of equivalence classes.)
	We have shown in \Cref{thm:verterminates} that agents eat objects in decreasing order of their preference, so $o_i\succsim_i o$ for all $o\in O$ with $p(i,o) > 0$ and similarly for $j$.
	Suppose $j\succ_{o_i} i$.
	If $o_i\succsim_j o_j$, then $0 = \sum_{o\succ_j o_i} p(j,o) < p(i,o_i)$, which contradicts claimwise stability of $p$.
	Thus, either $o_j\succ_j o_i$ or $i\succ_{o_i} j$. 
	Since this holds for all $i,j$, the deterministic allocation $q$ that assigns $o_i$ to $i$ for all $i$ is claimwise stable. 
	Moreover, $q$ weakly stochastically dominates $p$.
	Since, by \Cref{prop:efficientandstable}, $p$ is $\sd$-efficient among claimwise stable allocations, it follows that $p = q$.
	For deterministic allocations and strict priorities, claimwise stability reduces to stability and so $p$ is the agent optimal deterministic stable allocation.
	
	Alternatively, one can show that CWS-VER is equivalent to the claimwise probabilistic serial rule defined by \citet{Afac18a} and apply his Proposition~4.
\end{proof}

\stablebothsides*

\begin{proof}
	First we consider fractional stability.
	We show that on the set of fractionally stable allocations, (one-sided) constrained \sd-efficiency implies two-sided \sd-efficiency.
	Then the claim follows from \Cref{cor:stableandefficient}. 
	Let $p$ an allocation that is constrained \sd-efficient among fractionally stable allocations.
	Assume that $q\neq p$ two-sided stochastically dominates $p$.
	Frist, observe that $q$ is fractionally stable since the inequalities defining fractional stability are preserved under improvements with respect to stochastic dominance.
	By assumption, all agents weakly prefer $q$ to $p$ according to stochastic dominance.
	Since $q\neq p$ and preferences are strict, this preference is strict for at least one agent.
	In summary, $q$ is a fractionally stable allocation that stochastically dominates $p$, which is a contradiction.
	
	Ex-ante stability requires more work.
	Let $\ps$ be a profile and $p$ be an allocation that is ex-ante stable and \sd-efficient among ex-ante stable allocations.
	By \Cref{cor:stableandefficient}, it suffices to show that $p$ is two-sided \sd-efficient.

	If $p$ is not two-sided \sd-efficient, there exists a cycle $(i_0,o_0),(i_1,o_1),\dots,(i_k,o_k)$ such that $(i_0,o_0) = (i_k,o_k)$ and for all $l = 1,\dots, k$, $o_l\succ_{i_l} o_{l-1}$, $i_{l-1}\succsim_{o_{l-1}} i_l$, and $p(i_l,o_{l-1})> 0$ \citep[cf.][Proposition 1]{DoYi16a}.
	For $\epsilon\in(0,\min\{p(i,o)\colon i\in N\text{ and } o\in O\})$, let $q$ be equal to $p$ except that $q(i_l,o_l) = p(i_l,o_l) + \epsilon$ and $q(i_l,o_{l-1}) = p(i_l,o_{l-1}) - \epsilon$ for all $l = 1,\dots,k$.
	Thus, agent $i_l$ passes a fraction of $o_{l-1}$ on to agent $i_{l-1}$. 
	Note that $q$ stochastically dominates $p$ for both sides and at least one agent $i\in N$ strictly prefers $q$ to $p$ since preferences are strict.
	
	We show that $q$ is ex-ante stable.
	Let $i,j\in N$ and $o\in O$ with $i\succ_o j$ and $q(j,o) > 0$.
	If $p(j,o) > 0$, ex-ante stability of $p$ implies $p(i,o') = 0$ for all $o'$ with $o\succ_i o'$.
	Since $q(i)\succsim_i^{\sd} p(i)$, it follows that $q(i,o') = 0$ for all $o'$ with $o\succ_i o'$.
	Otherwise, $j = i_{l-1}$ and $o = o_{l-1}$ for some $l = 1,\dots, k$.
	Now $p(i_{l},o_{l-1}) \ge\epsilon > 0$ and $i_{l-1}\succsim_{o_{l-1}} i_{l}$.
	It follows that $i\succ_{o_{l-1}} i_l$.
	Since $p$ is ex-ante stable, $p(i,o') = 0$ for all $o'$ with $o_{l-1}\succ o'$.
	And again, since $q(i)\succsim_i^{\sd} p(i)$, it follows that $q(i,o') = 0$ for all $o'$ with $o_{l-1}\succ_i o'$.
	So $q$ is ex-ante stable.
	Thus, $p$ is not \sd-efficient among ex-ante stable allocations, which contradicts the assumption.
\end{proof}

\sdclaimwisedisjoint*

			\begin{proof}
				Consider the following instance. 

				\[
					\begin{array}{cccc}
						\succ_1\colon & b&a&c\\
						\succ_2\colon & a&b&c\\			
						\succ_3\colon & a&b&c\\
					\end{array}
					\qquad\qquad
					\begin{array}{cccc}
						\succ_a\colon & 1&3&2\\
						\succ_b\colon & 2&1&3\\			
						\succ_c\colon & 2&1&3\\
					\end{array}
				\]

				The only deterministic stable allocation is the following one. 

			\begin{center}
			 \begin{blockarray}{ccccccccccc}

			 		&&\matindex{$a$}&\matindex{$b$}& \matindex{$c$}&\\
			 	    \begin{block}{c(cccccccccc)}
			 			\matindex{$1$}& &$1$&$0$&$0$& \\
			 			\matindex{$2$}& &$0$&$1$&$0$& \\
						\matindex{$3$}& &$0$&$0$&$1$& \\
			 	    \end{block}
			 	  \end{blockarray}
				  \end{center}
  
				  We  show that it is also the only claimwise stable allocation. Consider any claimwise stable allocation $p$. We first claim that $p(2,a)=0$. If $p(2,a)>0$, agent 1 will have a justified claim against agent 2 for object $a$. 
  
				  \begin{center}
			      $p=$\begin{blockarray}{ccccccccccc}
			       	 
			       		&&\matindex{$a$}&\matindex{$b$}& \matindex{$c$}&\\
			       	    \begin{block}{c(cccccccccc)}
			       			\matindex{$1$}& &?&?&?& \\
			       			\matindex{$2$}& &$0$&?&?& \\
			      			\matindex{$3$}& &?&?&?& \\
			       	    \end{block}
			       	  \end{blockarray}
				    \end{center}
  
				Next, we claim that $p(1,b)=0$. Since $p(2,a)=0$, agent $2$ cannot let agent $1$ get any part of $b$ or else $2$ will have a justified claim against agent $1$ for object $b$.
				 \begin{center}
			   $p=$\begin{blockarray}{ccccccccccc}
			    	 
			    		&&\matindex{$a$}&\matindex{$b$}& \matindex{$c$}&\\
			    	    \begin{block}{c(cccccccccc)}
			    			\matindex{$1$}& &?&$0$&?& \\
			    			\matindex{$2$}& &$0$&?&?& \\
			   			\matindex{$3$}& &?&?&?& \\
			    	    \end{block}
			    	  \end{blockarray}
				  \end{center}

				We now claim that $p(3,a)=0$. If $p(3,a)>0$, then we know that $p(1,b)=0$ so 1 cannot let $3$ take any part of $a$ or else it will have a justified claim against $3$ for object $a$.

				 \begin{center}
			    $p=$\begin{blockarray}{ccccccccccc}
			     	 
			     		&&\matindex{$a$}&\matindex{$b$}& \matindex{$c$}&\\
			     	    \begin{block}{c(cccccccccc)}
			     			\matindex{$1$}& &?&$0$&?& \\
			     			\matindex{$2$}& &$0$&?&?& \\
			    			\matindex{$3$}& &$0$&?&?& \\
			     	    \end{block}
			     	  \end{blockarray}
					    \end{center}
		
					    Since the matrix is bistochastic, we complete some columns and rows.

					 \begin{center}
				      $p=$\begin{blockarray}{ccccccccccc}
			  	       	 
			  	       		&&\matindex{$a$}&\matindex{$b$}& \matindex{$c$}&\\
			  	       	    \begin{block}{c(cccccccccc)}
			  	       			\matindex{$1$}& &$1$&$0$&$0$& \\
			  	       			\matindex{$2$}& &$0$&?&?& \\
			  	      			\matindex{$3$}& &$0$&?&?& \\
			  	       	    \end{block}
			  	       	  \end{blockarray}
						    \end{center}

				Next, we argue that $p(3,b)=0$. If 	$p(3,b)>0$, then we know that $p(2,b)=0$ so agent $2$ will not let $3$ get any part of $b$ or else $2$ will have a justified claim against $3$ for object $b$. Hence, 	

					 \begin{center}
				      $p=$\begin{blockarray}{ccccccccccc}
			  	       	 
			  	       		&&\matindex{$a$}&\matindex{$b$}& \matindex{$c$}&\\
			  	       	    \begin{block}{c(cccccccccc)}
			  	       			\matindex{$1$}& &$1$&$0$&$0$& \\
			  	       			\matindex{$2$}& &$0$&?&?& \\
			  	      			\matindex{$3$}& &$0$&$0$&?& \\
			  	       	    \end{block}
			  	       	  \end{blockarray}
						     \end{center}

				We can now complete the matrix.

				 	 \begin{center}
			  $p=$\begin{blockarray}{ccccccccccc}
			   	 
			   		&&\matindex{$a$}&\matindex{$b$}& \matindex{$c$}&\\
			   	    \begin{block}{c(cccccccccc)}
			   			\matindex{$1$}& &$1$&$0$&$0$& \\
			   			\matindex{$2$}& &$0$&$1$&$1$& \\
			  			\matindex{$3$}& &$0$&$0$&$1$& \\
			   	    \end{block}
			   	  \end{blockarray}
				     \end{center}
  
				  We have established that $p$ is the only claimwise stable allocation. However, it is not \sd-efficient.

				  In particular, the allocation 
  
				  	 	 \begin{center}
			   	\begin{blockarray}{ccccccccccc}
			 	 
			 		&&\matindex{$a$}&\matindex{$b$}& \matindex{$c$}&\\
			 	    \begin{block}{c(cccccccccc)}
			 			\matindex{$1$}& &$0$&$1$&$0$& \\
			 			\matindex{$2$}& &$1$&$0$&$0$& \\
						\matindex{$3$}& &$0$&$0$&$1$&\\
			 	    \end{block}
			 	  \end{blockarray}
				     \end{center}
				  	  \sd-dominates $p$.

				\end{proof}

\strategyproof*
\begin{proof}
	In all profiles in the proof, the sets of S-stable allocations with be the same for $S\in\{\text{ex-ante, ex-post, FS}\}$.
	Thus, it will prove the statement for all three stability notions at once.
	
	Let $f$ be a mechanism that is weakly \sd-strategyproof, S-stable, and S-constrained efficient.
	Consider the following profile $\ps$.
	\[
		\begin{array}{cccc}
			\succ_1\colon & b&c&a\\
			\succ_2\colon & b&c&a\\			
			\succ_3\colon & a&b&c\\
		\end{array}
		\qquad\qquad
		\begin{array}{cccc}
			\pref_a\colon & \{1,2\}&3\\
			\pref_b\colon & 3&\{1,2\}\\
			\pref_c\colon & 3&\{1,2\}\\
		\end{array}
	\]					
	There are two S-stable and S-constrained efficient deterministic allocations.
	\begin{center}
		\begin{blockarray}{ccccccccccc}
			 &&\matindex{$a$}&\matindex{$b$}& \matindex{$c$}&\\
		  \begin{block}{c(cccccccccc)}
				\matindex{$1$}& &$0$&$1$&$0$& \\
				\matindex{$2$}& &$0$&$0$&$1$& \\
				\matindex{$3$}& &$1$&$0$&$0$& \\
		  \end{block}
		\end{blockarray}
		\qquad\qquad
		\begin{blockarray}{ccccccccccc}
			 &&\matindex{$a$}&\matindex{$b$}& \matindex{$c$}&\\
		  \begin{block}{c(cccccccccc)}
				\matindex{$1$}& &$0$&$0$&$1$& \\
				\matindex{$2$}& &$0$&$1$&$0$& \\
				\matindex{$3$}& &$1$&$0$&$0$& \\
		  \end{block}
		\end{blockarray}
	\end{center}
	Thus, all S-stable and S-constrained efficient allocations are of the following form. 
	(Here we use that the set of S-stable allocations in the same for all three stability notions.)
	 
	\begin{center}
		$p^{\alpha}=$
		\begin{blockarray}{ccccccccccc}
			 &&\matindex{$a$}&\matindex{$b$}& \matindex{$c$}&\\
		  \begin{block}{c(cccccccccc)}
				\matindex{$1$}& &$0$&$1-\alpha$&$\alpha$& \\
				\matindex{$2$}& &$0$&$\alpha$&$1-\alpha$& \\
				\matindex{$3$}& &$1$&$0$&$0$& \\
		  \end{block}
		\end{blockarray}
	\end{center}
	for some $\alpha\in[0,1]$.
	Suppose $f(\ps) = p^\alpha$ for some $\alpha > 0$.
	Then, agent 1 can misreport by swapping $c$ and $a$ resulting in the preferences $\aprefs'$ below. 
	\[
		\begin{array}{cccc}
			\succ_1'\colon & b&a&c\\
			\succ_2'\colon & b&c&a\\			
			\succ_3'\colon & a&b&c\\
		\end{array}
	\]
	S-stability and S-constrained efficiency imply that $f((\aprefs',{\succsim_O})) = p^0$.
	
	Note that $p^0(1)\succ_i^{\sd} p^\alpha$ and so agent 1 can successfully manipulate in the profile $\ps$, which contradicts strategyproofness of $f$.
	If $1-\alpha > 0$, we can use a symmetric argument where agent $2$ misreports. 				  
\end{proof}

\begin{example}\label{ex:fractionalnotex-post}\normalfont
Fractional stability is strictly weaker than ex-post stability for weak priorities and strict preferences.
To see this, consider the following example.
Let $N=\{1,2,3,4,5\}$, $O=\{a,b,c,d,e\}$, and $\capa(i) = 1$ for all $i\in N$. 
The preferences and priorities are as follows.
\begin{center}
\begin{tabular}{lccccc}
$\succ_1$:&$c$&$d$&$e$&$a$&$b$\\
$\succ_2$:&$c$&$d$&$e$&$a$&$b$\\
$\succ_3$:&$a$&$b$&$d$&$c$&$e$\\
$\succ_4$:&$a$&$b$&$c$&$d$&$e$\\
$\succ_5$:&$a$&$b$&$c$&$d$&$e$\\
\end{tabular}
\quad\quad
\begin{tabular}{lccc}
$\succsim_a$:&$[1,2,3,4,5]$\\
$\succsim_b$:&$[1,2,3,4,5]$\\
$\succsim_c$:&$[3,4,5]$&$[1,2]$\\
$\succsim_d$:&$[3,4,5]$&$[1,2]$\\
$\succsim_e$:&$[1,2,3,4,5]$\\
\end{tabular}
\end{center}
Then the allocation
\begin{center}
	$p=$
				 \begin{blockarray}{ccccccccccc}

				 		&&\matindex{$a$}&\matindex{$b$}& \matindex{$c$}& \matindex{$d$}& \matindex{$e$}\\
				 	    \begin{block}{c(cccccccccc)}
				 			\matindex{$1$}& &$0$&$0$&$\nicefrac{1}{3}$&$\nicefrac{1}{3}$&$\nicefrac{1}{3}$& \\
				 			\matindex{$2$}& &$0$&$0$&$\nicefrac{1}{3}$&$\nicefrac{1}{3}$&$\nicefrac{1}{3}$& \\
							\matindex{$3$}& &$$\nicefrac{1}{3}$$&$$\nicefrac{1}{3}$$&$0$&$\nicefrac{1}{3}$ &$0$&\\
							\matindex{$4$}& &$\nicefrac{1}{3}$&$\nicefrac{1}{3}$&$\nicefrac16$& $0$&$\nicefrac16$&\\
							\matindex{$5$}& &$\nicefrac{1}{3}$&$\nicefrac{1}{3}$&$\nicefrac16$&$0$ &$\nicefrac16$&\\
				 	    \end{block}
				 	  \end{blockarray}
					  \end{center}
is fractionally stable, but not ex-post stable.

To see that $p$ is fractionally stable, observe that \eqref{eq:fs} holds for each agent-object pair including agents 1 or 2 (since they have the lowest priority for each object) or objects $a,b$, or $e$ (since all agents have the same priority for those).
For the remaining pairs, we note that agents 3,4, and 5 each receive probability $\frac23$ for objects they prefer to $c$ and $d$ and each of $c$ and $d$ is assigned to agents $\{3, 4, 5\}$ with probability $\frac13$.

Next we argue that $p$ is not ex-post stable. 
Any decomposition of $p$ into deterministic allocations must contain a deterministic allocation $p$ that assigns object $e$ to agent 5.
Neither 1 nor 2 can receive $c$ or $d$ in $p$, since then $(5,c)$ and $(5,d)$ would be blocking pairs, respectively.
Hence, $p$ assigns $c$ to agent 4 and $d$ to agent 3.
On the other hand, $p(i,a) = p(i,b) = 0$ and so $p$ cannot assign $a$ or $b$ to agent 1. 
It follows that 1 remains unmatched in $p$, which violates stability of $p$.
\end{example}

\begin{example}
	For each of our stability notions $S$, $S$-VER results in a different mechanism.

	Consider again \Cref{ex:fractionalnotex-post}.
	One can check that the allocation $p$ is the outcome of VER for fractional stability.
	Hence, not only is fractional stability different from ex-post stability, but it also leads to different instantiation of VER.
	
	VER for claimwise stability in \Cref{ex:fractionalnotex-post} yields the following allocation.
	\begin{center}
					 \begin{blockarray}{ccccccccccc}
					 		&&\matindex{$a$}&\matindex{$b$}& \matindex{$c$}& \matindex{$d$}& \matindex{$e$}\\
					 	    \begin{block}{c(cccccccccc)}
					 			\matindex{$1$}& &$0$&$0$&$\nicefrac{1}{2}$&$\nicefrac{3}{10}$&$\nicefrac{1}{5}$& \\
					 			\matindex{$2$}& &$0$&$0$&$\nicefrac{1}{2}$&$\nicefrac{3}{10}$&$\nicefrac{1}{5}$& \\
								\matindex{$3$}& &$$\nicefrac{1}{3}$$&$$\nicefrac{1}{3}$$&$0$&$\nicefrac2{15}$ &$\nicefrac{1}{5}$&\\
								\matindex{$4$}& &$\nicefrac{1}{3}$&$\nicefrac{1}{3}$& 0 & $\nicefrac2{15}$&$\nicefrac{1}{5}$&\\
								\matindex{$5$}& &$\nicefrac{1}{3}$&$\nicefrac{1}{3}$& 0 &$\nicefrac2{15}$ &$\nicefrac{1}{5}$&\\
					 	    \end{block}
					 	  \end{blockarray}
						  \end{center}
	Thus, VER for fractional stability and claimwise stability are different.
	
		The following example shows that EAS-VER is different from ExpS-VER.
	Let $N=\{1,\dots,8\}$, $O=\{a,\dots,h\}$, and $\capa(i) = 1$ for all $i\in N$. 
	We truncate the preferences to the part that is relevant for computing ExpS-VER.
	All agents have the same priority for objects other than $e$.
	\begin{center}
	\begin{tabular}{lccccc}
	$\succ_1$:&$a$&$e$&$g$&\\
	$\succ_2$:&$a$&$f$&$g$&\\
	$\succ_3$:&$b$&$e$&$g$&\\
	$\succ_4$:&$b$&$f$&$g$&\\
	$\succ_5$:&$c$&$e$&$h$&\\
	$\succ_6$:&$c$&$f$&$h$&\\
	$\succ_7$:&$d$&$e$&$h$&\\
	$\succ_8$:&$d$&$f$&$h$&\\
	\end{tabular}
	\quad\quad
	\begin{tabular}{lccc}
	$\succsim_e$:&  1,3 & 5,7 &2,4,6,8\\
	\end{tabular}
	\end{center}
	ExpS-VER (as well as the probabilistic serial rule) yield the allocation shown below.
	Note that $p$ is not ex-ante stable as agent $1$ wants to get more of $e$ and has higher priority for $e$ than $5$ and $7$.
		\begin{center}$p=$
						 \begin{blockarray}{ccccccccccc}
	&&\matindex{$a$}&\matindex{$b$}& \matindex{$c$}& \matindex{$d$}& \matindex{$e$}& \matindex{$f$}& \matindex{$g$}& \matindex{$h$} \\
						 	    \begin{block}{c(cccccccccc)}
	\matindex{$1$}& &$\nicefrac{1}{2}$&$0$&$0$&$0$&$\nicefrac{1}{4}$&$0$&$\nicefrac{1}{4}$&$0$&\\
	\matindex{$2$}&&$\nicefrac{1}{2}$&$0$&$0$&$0$&$0$&$\nicefrac{1}{4}$&$\nicefrac{1}{4}$&$0$&\\
	\matindex{$3$}& &$0$&$\nicefrac{1}{2}$&$0$&$0$&$\nicefrac{1}{4}$&$0$&$\nicefrac{1}{4}$&$0$&\\
	\matindex{$4$}& &$0$&$\nicefrac{1}{2}$&$0$&$0$&$0$&$\nicefrac{1}{4}$&$\nicefrac{1}{4}$&$0$&\\
	\matindex{$5$}&&$0$&$0$&$\nicefrac{1}{2}$&$0$&$\nicefrac{1}{4}$&$0$&$0$&$\nicefrac{1}{4}$&\\
	\matindex{$6$}&&$0$&$0$&$\nicefrac{1}{2}$&$0$&$0$&$\nicefrac{1}{4}$&$0$&$\nicefrac{1}{4}$&\\
	\matindex{$7$}&&$0$&$0$&$0$&$\nicefrac{1}{2}$&$\nicefrac{1}{4}$&$0$&$0$&$\nicefrac{1}{4}$&\\
	\matindex{$8$}&&$0$&$0$&$0$&$\nicefrac{1}{2}$&$0$&$\nicefrac{1}{4}$&$0$&$\nicefrac{1}{4}$&\\
	\end{block}
						 	  \end{blockarray}
							  \end{center}
	To see that $p$ is ex-post stable, we observe that it can be written as the uniform convex combination of the following four deterministic stable allocations.
	\begin{center}
		\begin{blockarray}{cccccccc}
		1 & 2 & 3 & 4 & 5 & 6 & 7 & 8\\
		\begin{block}{(cccccccc)}
		$a$ & $f$ & $b$ & $g$ & $e$ & $c$ & $h$ & $d$\\
		\end{block}
		\begin{block}{(cccccccc)}
		$a$ & $g$ & $b$ & $f$ & $h$ & $c$ & $e$ & $d$\\
		\end{block}
		\begin{block}{(cccccccc)}
			$g$ & $a$ & $e$ & $b$ & $c$ & $f$ & $d$ & $h$\\
		\end{block}
		\begin{block}{(cccccccc)}
			$e$ & $a$ & $g$ & $b$ & $c$ & $h$ & $d$ & $f$\\
		\end{block}
		\end{blockarray}
	\end{center}		
	Since $p$ is ex-post stable and coincides with the allocation produced by the probabilistic serial rule, it is also the outcome of ExpS-VER.
	Since $p$ is not ex-ante stable, it cannot be the outcome of EAS-VER.
\end{example}

\section{The Vigilant Priority Rule}\label{app:vp}

\begin{algorithm}[tb]
  	\caption{Vigilant Priority}
  	\label{algo:closedverVP}

  	\renewcommand{\algorithmicrequire}{\wordbox[l]{\textbf{Input}:}{\textbf{Output}:}}
  	\renewcommand{\algorithmicensure}{\wordbox[l]{\textbf{Output}:}{\textbf{Output}:}}
  	\hspace*{\algorithmicindent} \textbf{Input:} A preference profile $\apref$, a non-empty and closed set $X$ of allocations, and a permutation $\sigma$ over $N$\\
  	\hspace*{\algorithmicindent} \textbf{Output:} An allocation $p\in X$
  	\algsetup{linenodelimiter=\,}
  	  \begin{algorithmic}[1]
  	\STATE $\pi(i,E)\longleftarrow 0$ for all $i\in N$ and $E\in \mathcal E_i$
	\FOR{$i\in N$ in order of $\sigma$}
	\FOR{$E_i\in\mathcal E_i$ in order of $\pref_i$}
	  	\STATE Compute $\delta^*$ as follows\label{ln:delta_vp}
	  	\begin{align*}
	  	\delta^*&=\max_{\delta,p} \delta  \text{ s.t. }& \\
		p(i,E_i)&\ge  \pi(i,E_i) + \delta& &&\text{(agent $i$ eats $E_i$)}\\
	  	p(j,E)&\ge \pi(j,E) &\forall j\in N\text{ and } E\in \mathcal E_j &&\text{($p$ extends $\pi$)}\\
	  	p&\in X &&&\text{($p$ is feasible)}
	  	\end{align*}
		\vspace{-1em}
  	\STATE $\pi(i,E_i)\longleftarrow \pi(i,E_i)+\delta^*$ 
	\ENDFOR
	\ENDFOR
  	\RETURN $p$ (as computed in Line~\ref{ln:delta_vp})
  	\end{algorithmic}
\end{algorithm}

\newpage


\begin{thebibliography}{59}
\providecommand{\natexlab}[1]{#1}
\providecommand{\url}[1]{\texttt{#1}}
\expandafter\ifx\csname urlstyle\endcsname\relax
  \providecommand{\doi}[1]{doi: #1}\else
  \providecommand{\doi}{doi: \begingroup \urlstyle{rm}\Url}\fi

\bibitem[Abdulkadiro{\u{g}}lu and S{\"o}nmez(2003)]{AbSo03b}
A.~Abdulkadiro{\u{g}}lu and T.~S{\"o}nmez.
\newblock School choice: A mechanism design approach.
\newblock \emph{American Economic Review}, 93\penalty0 (3):\penalty0 729--747,
  2003.

\bibitem[Afacan(2018)]{Afac18a}
M.~O. Afacan.
\newblock The object allocation problem with random priorities.
\newblock \emph{Games and Economic Behavior}, 110:\penalty0 71--89, 2018.

\bibitem[Airiau et~al.(2019)Airiau, Aziz, Caragiannis, Kruger, Lang, and
  Peters]{AAC+19a}
S.~Airiau, H.~Aziz, I.~Caragiannis, J.~Kruger, J.~Lang, and D.~Peters.
\newblock Portioning using ordinal preferences: Fairness and efficiency.
\newblock In \emph{Proceedings of the 28th International Joint Conference on
  Artificial Intelligence (IJCAI)}. IJCAI, 2019.

\bibitem[Akbarpour and Nikzad(2020)]{AkNi20a}
M.~Akbarpour and A.~Nikzad.
\newblock Approximate random allocation mechanisms.
\newblock \emph{The Review of Economic Studies}, 2020.
\newblock Forthcoming.

\bibitem[Alkan and Gale(2003)]{AlGa03a}
A.~Alkan and D.~Gale.
\newblock Stable schedule matching under revealed preference.
\newblock \emph{Journal of Economic Theory}, 112\penalty0 (2):\penalty0
  289--306, 2003.

\bibitem[Ashlagi et~al.(2019{\natexlab{a}})Ashlagi, Nikzad, and Romm]{ANR19a}
I.~Ashlagi, A.~Nikzad, and A.~Romm.
\newblock Assigning more students to their top choices: {A} comparison of
  tie-breaking rules.
\newblock \emph{Games and Economic Behavior}, 115:\penalty0 167--187,
  2019{\natexlab{a}}.

\bibitem[Ashlagi et~al.(2019{\natexlab{b}})Ashlagi, Saberi, and
  Shameli]{ASS19a}
I.~Ashlagi, A.~Saberi, and A.~Shameli.
\newblock Assignment mechanisms under distributional constraints.
\newblock \emph{Proceedings of the 30th Annual ACM-SIAM Symposium on Discrete
  Algorithms (SODA)}, pages 229--240, 2019{\natexlab{b}}.

\bibitem[Ashlagi et~al.(2020)Ashlagi, Saberi, and Shameli]{ASS20a}
I.~Ashlagi, A.~Saberi, and A.~Shameli.
\newblock Assignment mechanisms under distributional constraints.
\newblock \emph{Operations Research}, 68:\penalty0 467--479, 2020.

\bibitem[Athanassoglou and Sethuraman(2011)]{AtSe11a}
S.~Athanassoglou and J.~Sethuraman.
\newblock House allocation with fractional endowments.
\newblock \emph{International Journal of Game Theory}, 40\penalty0
  (3):\penalty0 481--513, 2011.

\bibitem[Aziz(2019)]{Aziz18a}
H.~Aziz.
\newblock A probabilistic approach to voting, allocation, matching, and
  coalition formation.
\newblock In J.-F. Laslier, H.~Moulin, R.~Sanver, and W.~S. Zwicker, editors,
  \emph{The Future of Economic Design}. Springer-Verlag, 2019.

\bibitem[Aziz and Klaus(2019)]{AzKl19a}
H.~Aziz and B.~Klaus.
\newblock Random matching under priorities: {S}tability and no envy concepts.
\newblock \emph{Social Choice and Welfare}, 53\penalty0 (2):\penalty0 213--259,
  2019.

\bibitem[Aziz and Stursberg(2014)]{AzSt14a}
H.~Aziz and P.~Stursberg.
\newblock A generalization of probabilistic serial to randomized social choice.
\newblock In \emph{Proceedings of the 28th AAAI Conference on Artificial
  Intelligence (AAAI)}, pages 559--565. AAAI Press, 2014.

\bibitem[Aziz et~al.(2013{\natexlab{a}})Aziz, Brandt, and Brill]{ABB13b}
H.~Aziz, F.~Brandt, and M.~Brill.
\newblock The computational complexity of random serial dictatorship.
\newblock \emph{Economics Letters}, 121\penalty0 (3):\penalty0 341--345,
  2013{\natexlab{a}}.

\bibitem[Aziz et~al.(2013{\natexlab{b}})Aziz, Brandt, and Harrenstein]{ABH11c}
H.~Aziz, F.~Brandt, and P.~Harrenstein.
\newblock Pareto optimality in coalition formation.
\newblock \emph{Games and Economic Behavior}, 82:\penalty0 562--581,
  2013{\natexlab{b}}.

\bibitem[Ba{\"\i}ou and Balinski(2000)]{BaBa00b}
M.~Ba{\"\i}ou and M.~Balinski.
\newblock The stable admissions polytope.
\newblock \emph{Mathematical Programming}, 87\penalty0 (3):\penalty0 427--439,
  2000.

\bibitem[Balbuzanov(2019)]{Balb19b}
I.~Balbuzanov.
\newblock Constrained random matching.
\newblock 2019.
\newblock Working paper.

\bibitem[Bir{\'{o}} et~al.(2010)Bir{\'{o}}, Fleiner, Irving, and
  Manlove]{BFIM10a}
P.~Bir{\'{o}}, T.~Fleiner, R.~W. Irving, and D.~F. Manlove.
\newblock The college admissions problem with lower and common quotas.
\newblock \emph{Theoretical Computer Science}, 411\penalty0 (34--36):\penalty0
  3136--3153, 2010.

\bibitem[Bogomolnaia(2015)]{Bogo15b}
A.~Bogomolnaia.
\newblock Random assignment: {R}edefining the serial rule.
\newblock \emph{Journal of Economic Theory}, 158:\penalty0 308--318, 2015.

\bibitem[Bogomolnaia(2017)]{Bogo18a}
A.~Bogomolnaia.
\newblock The most ordinally egalitarian of random voting rules.
\newblock \emph{Journal of Public Economic Theory}, 20\penalty0 (2), 2017.

\bibitem[Bogomolnaia and Jackson(2002)]{BoJa02a}
A.~Bogomolnaia and M.~O. Jackson.
\newblock The stability of hedonic coalition structures.
\newblock \emph{Games and Economic Behavior}, 38\penalty0 (2):\penalty0
  201--230, 2002.

\bibitem[Bogomolnaia and Moulin(2001)]{BoMo01a}
A.~Bogomolnaia and H.~Moulin.
\newblock A new solution to the random assignment problem.
\newblock \emph{Journal of Economic Theory}, 100\penalty0 (2):\penalty0
  295--328, 2001.

\bibitem[Bogomolnaia and Moulin(2004)]{BoMo04a}
A.~Bogomolnaia and H.~Moulin.
\newblock Random matching under dichotomous preferences.
\newblock \emph{Econometrica}, 72\penalty0 (1):\penalty0 257--279, 2004.

\bibitem[Bogomolnaia et~al.(2005)Bogomolnaia, Moulin, and Stong]{BMS05a}
A.~Bogomolnaia, H.~Moulin, and R.~Stong.
\newblock Collective choice under dichotomous preferences.
\newblock \emph{Journal of Economic Theory}, 122\penalty0 (2):\penalty0
  165--184, 2005.

\bibitem[Brandt(2017)]{Bran17a}
F.~Brandt.
\newblock Rolling the dice: {R}ecent results in probabilistic social choice.
\newblock In U.~Endriss, editor, \emph{Trends in Computational Social Choice},
  chapter~1, pages 3--26. AI Access, 2017.

\bibitem[Brandt(2019)]{Bran17b}
F.~Brandt.
\newblock Collective choice lotteries: {D}ealing with randomization in economic
  design.
\newblock In J.-F. Laslier, H.~Moulin, R.~Sanver, and W.~S. Zwicker, editors,
  \emph{The Future of Economic Design}, Studies in Economic Design, pages
  51--56. Springer-Verlag, 2019.

\bibitem[Budish et~al.(2013)Budish, Che, Kojima, and Milgrom]{BCKM12a}
E.~Budish, Y.-K. Che, F.~Kojima, and P.~Milgrom.
\newblock Designing random allocation mechanisms: {T}heory and applications.
\newblock \emph{American Economic Review}, 103\penalty0 (2):\penalty0 585--623,
  2013.

\bibitem[Caragiannis et~al.(2019)Caragiannis, Filos{-}Ratsikas, Kanellopoulos,
  and Vaish]{CFKV19a}
I.~Caragiannis, A.~Filos{-}Ratsikas, P.~Kanellopoulos, and R.~Vaish.
\newblock Stable fractional matchings.
\newblock In \emph{Proceedings of the 20th ACM Conference on Economics and
  Computation (ACM-EC)}, pages 21--39, 2019.

\bibitem[Chatterji and Liu(2020)]{ChLi20a}
S.~Chatterji and P.~Liu.
\newblock Random assignments of bundles.
\newblock \emph{Journal of Mathematical Economics}, 87:\penalty0 15--30, 2020.

\bibitem[Cho and Dogan(2016)]{ChDo16a}
W.~J. Cho and B.~Dogan.
\newblock Equivalence of efficiency notions for ordinal assignment problems.
\newblock \emph{Economics Letters}, 146:\penalty0 8--12, 2016.

\bibitem[Delacr{\'e}taz(2020)]{Dela20a}
D.~Delacr{\'e}taz.
\newblock Processing reserves simultaneously.
\newblock September 2020.

\bibitem[Dogan and Yildiz(2016)]{DoYi16a}
B.~Dogan and K.~Yildiz.
\newblock Efficiency and stability of probabilistic assignments in marriage
  problems.
\newblock \emph{Games and Economic Behavior}, 95:\penalty0 47--58, 2016.

\bibitem[Echenique et~al.(2019)Echenique, Miralles, and Zhang]{EMZ19b}
F.~Echenique, A.~Miralles, and J.~Zhang.
\newblock Constrained pseudo-market equilibrium.
\newblock Papers 1909.05986, arXiv.org, Sept. 2019.
\newblock URL \url{https://ideas.repec.org/p/arx/papers/1909.05986.html}.

\bibitem[Erdil and Ergin(2008)]{ErEr08a}
A.~Erdil and H.~Ergin.
\newblock What's the matter with tie-breaking? {I}mproving efficiency in school
  choice.
\newblock \emph{American Economic Review}, 98\penalty0 (3):\penalty0 669--89,
  2008.

\bibitem[Erdil and Ergin(2017)]{ErEr17a}
A.~Erdil and H.~Ergin.
\newblock Two-sided matching with indifferences.
\newblock \emph{Journal of Economic Theory}, 171:\penalty0 268--292, 2017.

\bibitem[Fujishige et~al.(2018)Fujishige, Sano, and Zhan]{FSZ18a}
S.~Fujishige, Y.~Sano, and P.~Zhan.
\newblock The random assignment problem with submodular constraints on goods.
\newblock In \emph{Proceedings of the 19th ACM Conference on Economics and
  Computation (ACM-EC)}, volume~6, pages 3:1--3:28, 2018.

\bibitem[Gale and Shapley(1962)]{GaSh62a}
D.~Gale and L.~S. Shapley.
\newblock College admissions and the stability of marriage.
\newblock \emph{The American Mathematical Monthly}, 69\penalty0 (1):\penalty0
  9--15, 1962.

\bibitem[Han(2017)]{Han17a}
X.~Han.
\newblock \emph{Resource Allocation with Priorities}.
\newblock PhD thesis, 2017.
\newblock URL
  \url{https://search.proquest.com/docview/1917417439?accountid=12763}.

\bibitem[He et~al.(2018)He, Miralles, Pycia, and Yan]{HMPY18a}
Y.~He, A.~Miralles, M.~Pycia, and J.~Yan.
\newblock A pseudo-market approach to allocation with priorities.
\newblock \emph{American Economic Journal: Microeconomics}, 10\penalty0
  (3):\penalty0 272--314, 2018.

\bibitem[Hylland and Zeckhauser(1979)]{HyZe79a}
A.~Hylland and R.~Zeckhauser.
\newblock The efficient allocation of individuals to positions.
\newblock \emph{The Journal of Political Economy}, 87\penalty0 (2):\penalty0
  293--314, 1979.

\bibitem[Jackson and Wolinsky(1996)]{JaWo96a}
M.~O. Jackson and A.~Wolinsky.
\newblock A strategic model of social and economic networks.
\newblock \emph{Journal of Economic Theory}, 71\penalty0 (1):\penalty0 44--74,
  1996.

\bibitem[Ju{\'a}rez et~al.(2020)Ju{\'a}rez, Neme, and Oviedo]{JNO20a}
N.~Ju{\'a}rez, P.~A. Neme, and J.~Oviedo.
\newblock Lattice structure of the random stable set in many-to-many matching
  market.
\newblock Technical report, https://arxiv.org/pdf/2002.08156.pdf, 2020.

\bibitem[Kamada and Kojima(2015)]{KaKo15a}
Y.~Kamada and F.~Kojima.
\newblock Efficient matching under distributional constraints: Theory and
  applications.
\newblock \emph{The American Economic Review}, 105\penalty0 (1):\penalty0
  67--99, 2015.

\bibitem[Kamada and Kojima(2017{\natexlab{a}})]{KaKo17a}
Y.~Kamada and F.~Kojima.
\newblock Recent developments in matching with constraints.
\newblock \emph{The American Economic Review}, 107\penalty0 (5):\penalty0
  200--204, 2017{\natexlab{a}}.

\bibitem[Kamada and Kojima(2017{\natexlab{b}})]{KaKo17b}
Y.~Kamada and F.~Kojima.
\newblock Stability concepts in matching under distributional constraints.
\newblock \emph{Journal of Economic Theory}, 168:\penalty0 107--142,
  2017{\natexlab{b}}.

\bibitem[Katta and Sethuraman(2006)]{KaSe06a}
A.-K. Katta and J.~Sethuraman.
\newblock A solution to the random assignment problem on the full preference
  domain.
\newblock \emph{Journal of Economic Theory}, 131\penalty0 (1):\penalty0
  231--250, 2006.

\bibitem[Kesten(2010)]{Kest10a}
O.~Kesten.
\newblock School choice with consent.
\newblock \emph{The Quarterly Journal of Economics}, 125\penalty0 (3):\penalty0
  1297--1348, 2010.

\bibitem[Kesten and Unver(2015)]{KeUn15a}
O.~Kesten and U.~Unver.
\newblock A theory of school choice lotteries.
\newblock \emph{Theoretical Economics}, 10:\penalty0 543--595, 2015.

\bibitem[Kojima(2009)]{Koji09a}
F.~Kojima.
\newblock Random assignment of multiple indivisible objects.
\newblock \emph{Mathematical Social Sciences}, 57\penalty0 (1):\penalty0
  134--142, 2009.

\bibitem[Kojima et~al.(2018)Kojima, Tamura, and Yokoo]{KTY18a}
F.~Kojima, A.~Tamura, and M.~Yokoo.
\newblock Designing matching mechanisms under constraints: {A}n approach from
  discrete convex analysis.
\newblock \emph{Journal of Economic Theory}, 176:\penalty0 803--833, 2018.

\bibitem[Pathak et~al.(2020)Pathak, S{\"o}nmez, {\"U}nver, and Yenmez]{PSU+20a}
P.~A. Pathak, T.~S{\"o}nmez, M.~U. {\"U}nver, and M.~B. Yenmez.
\newblock {Fair Allocation of Vaccines, Ventilators and Antiviral Treatments:
  Leaving No Ethical Value Behind in Health Care Rationing}.
\newblock Boston College Working Papers in Economics 1015, Boston College
  Department of Economics, July 2020.
\newblock URL \url{https://ideas.repec.org/p/boc/bocoec/1015.html}.

\bibitem[Roth(1982)]{Roth82a}
A.~E. Roth.
\newblock The economics of matching: {S}tability and incentives.
\newblock \emph{Mathematics of Operations Research}, 7\penalty0 (4):\penalty0
  617--628, 1982.

\bibitem[Roth(2008)]{Roth08a}
A.~E. Roth.
\newblock Deferred acceptance algorithms: {H}istory, theory, practice, and open
  questions.
\newblock \emph{International Journal of Game Theory}, 36:\penalty0 537--569,
  2008.

\bibitem[Roth and Sotomayor(1990)]{RoSo90a}
A.~E. Roth and M.~A.~O. Sotomayor.
\newblock \emph{Two-Sided Matching: {A} Study in Game Theoretic Modelling and
  Analysis}.
\newblock Cambridge University Press, 1990.

\bibitem[Roth et~al.(1993)Roth, Rothblum, and {Vande Vate}]{RRV93a}
A.~E. Roth, U.~G. Rothblum, and J.~H. {Vande Vate}.
\newblock Stable matchings, optimal assignments, and linear programming.
\newblock \emph{Mathematics of Operations Research}, 18\penalty0 (4):\penalty0
  803--828, 1993.

\bibitem[S{\"o}nmez(1999)]{Sonm99a}
T.~S{\"o}nmez.
\newblock Strategy-proofness and essentially single-valued cores.
\newblock \emph{Econometrica}, 67\penalty0 (3):\penalty0 677--689, 1999.

\bibitem[Svensson(1999)]{Sven99a}
L.-G. Svensson.
\newblock Strategy-proof allocation of indivisible goods.
\newblock \emph{Social Choice and Welfare}, 16\penalty0 (4):\penalty0 557--567,
  1999.

\bibitem[Teo and Sethuraman(1998)]{TeSe98a}
C.-P. Teo and J.~Sethuraman.
\newblock The geometry of fractional stable matchings and its applications.
\newblock \emph{Mathematics of Operations Research}, 23\penalty0 (4):\penalty0
  874--891, 1998.

\bibitem[Yilmaz(2010)]{Yilm10a}
O.~Yilmaz.
\newblock The probabilistic serial mechanism with private endowments.
\newblock \emph{Games and Economic Behavior}, 69\penalty0 (2):\penalty0
  475--491, 2010.

\bibitem[Yu and Zhang(2020)]{YuZh20a}
J.~Yu and J.~Zhang.
\newblock Fractional top trading cycle on the full preference domain.
\newblock Technical Report 2005.09340, arXiv.org, 2020.

\end{thebibliography}
\end{document}